\documentclass{amsart}
\usepackage{latexsym,amssymb}
\usepackage[T1]{fontenc}
\usepackage[latin9]{inputenc}
\usepackage{color}
\usepackage{url}
\usepackage{amsmath}
\usepackage{babel}
\usepackage{listings}
\usepackage{graphicx}
\usepackage{natbib}
\usepackage{bm}
\usepackage{eqnarray}
\usepackage{subcaption}
\usepackage{microtype}
\usepackage{float}
\def\RR{\mathbb{R}}

\def\E{\mathbb{E}}
\def\FF{\mathcal{F}}
\def\GG{\mathcal{G}}
\def\NN{\mathcal{N}}

\def\argmax{\mbox{argmax}}
\DeclareMathOperator{\sech}{sech}
\newcommand{\weak}{\rightharpoonup}
\newtheorem{thm}{Theorem}[section]

\newtheorem{assumption}{Assumption}[section]
\newtheorem{proposition}{Proposition}[section]
\newtheorem{lemma}{Lemma}[section]
\usepackage[textsize=tiny]{todonotes}   

\renewenvironment{proof}{\noindent {\bf Proof.\ }}{\hfill{\rule{2mm}{2mm}}}
\begin{document}

	\title{Robustifying Empirical Bayes} 
	\author{Roger Koenker and Jiaying Gu}
	\thanks{Version:  \today .  The authors wish to express their appreciation to Pat Kline, Toru 
	Kitagawa and Peter Bickel for comments on a previous draft.  Details on all the computations for the 
	figures and tables are available from \url{https://rkoenker.github.io/www/roger/research/ebayes/ebayes.html}.}
\begin{abstract}
    Two strategies are explored for robustifying classical denoising  procedures for the
    Gaussian sequence model.  First, the Hodges and Lehmann (1952)
    restricted Bayes approach is used to reduce sensitivity to the specification
    of the initial prior distribution.  Second, alternatives to the Gaussian
    noise assumption are explored.  In both cases proposals of Huber (1964)
    and Mallows (1978) play a crucial role.
\end{abstract}
	\maketitle
	
\section{Introduction}

The Gaussian sequence model can be viewed as a compound decision problem with
observed $X_i \sim \NN (\theta_i, 1), \; i = 1, \dots , n$.  The objective is to estimate the 
$\theta \in \RR^n$ subject to quadratic loss.  We will denote the standard Gaussian
density and cumulative by $\varphi$ and $\Phi$ respectively.  Observations are assumed to be 
exchangeable, so their marginal density is given by,
\[
f_G (x) = \int \varphi (x| \theta) dG(\theta),
\]
for some mixing distribution $G$.
Were $G$ known the optimal (Bayes) decision rule is given by Tweedie's formula, \cite{efron.11}
\[
\hat \theta_i = \delta^B (x_i) = x_i + f_G'(x_i)/f_G (x_i).
\]

When $G$ is unknown various shrinkage procedures have been proposed, initiated by the
fundamental papers of \cite{stein56} and \cite{robbins56}.  The extensive literature on
Stein shrinkage has offered a rich assortment of practical frequentist and Bayesian procedures
for improving upon the naive maximum likelihood estimator, $\delta (x_i) = x_i$ in terms
of quadratic loss.  Among these procedures more recently the
nonparametric maximum likelihood estimator (NPMLE) of \cite{kw},
\[
\hat G = \argmax_{G \in \GG} \big \{ \sum_{i=1}^n \log (f_G (x_i) \big \},
\]
has been proposed as a plug-in estimator for $G$, \cite{jz}, \cite{km} and \cite{soloff}.
This $G$-modeling strategy -- in the terminology of \cite{e19} -- performs well in simulations, 
e.g. \cite{km}, \cite{gk16}, \cite{kg26},  relative to alternatives that attempt to 
estimate $f_G$ directly or
that make a priori assumptions about the form of $G$.  However, it is obviously subject
to the criticism that the Gaussian assumption on the likelihood is quite strong,
and priors are never terribly convincing.

In what follows we consider two basic strategies for robustifying empirical Bayes
procedures.  The first, following a proposal of \cite{hodgeslehmann}, seeks protection
from excessive confidence in our initial prior on $G$ by bounding pointwise risk
thereby offering a compromise between minimax and Bayes decision rules.  The second, 
acknowledges scepticism about the strictly Gaussian form of $\varphi$, the distribution
of the model noise.
We find that in accordance with familiar robustness lore that modest modifications of
of our initial prior or the Gaussian noise assumption can significantly improve performance 
of empirical Bayes decision rules while sacrificing only modest performance in the 
event that the initial prior or the Gaussian noise assumptions are valid.

\section{Bayes Risk and Brown's Identity} 
Bayes risk in the Gaussian sequence model is,
	\[
	r(G, \delta) = \int R(\delta, \theta) dG(\theta),
	\]
with 
	\[
	R(\delta, \theta) =\mathbb{E}_\theta[(\delta(X) - \theta)^2] =  
	\int ( \delta(x)-\theta)^2 \varphi(x - \theta) dx. 
	\]
Plugging the optimal Bayes rule back into $r(G, \delta)$, we obtain
Brown's identity, \cite{Brown71}:
	\begin{align*}
	r(G, \delta^{B}) & =  
	\int (x-\theta + \frac{f_G'(x)}{f_G(x)})^2 \varphi(x - \theta) dx dG(\theta)\\
	& = 1 + 2 \mathbb{E}\Big [\Big (\frac{f_G'(X)}{f_G(X)}\Big )'\Big ] + 
	\mathbb{E}\Big [\Big (\frac{f_G'(X)}{f_G(X)}\Big )^2 \Big ]\\
	& = 1 +2 \mathbb{E}\Big [ \frac{f_G^{''}(X)}{f_G(x)} - \Big( \frac{f_G'(x)}{f_G(x)}\Big)^2 \Big ]
	 + \mathbb{E}\Big [\Big (\frac{f_G'(X)}{f_G(X)}\Big )^2 \Big ]\\
	& = 1- \mathbb{E}\Big[ \Big (\frac{f_G'(X)}{f_G(X)}\Big )^2 \Big ]\\
	& = 1 - I(\Phi * G).
	\end{align*}
The second equality follows from Stein's lemma, the fourth from the fact that 
$\int f_G^{''} (x) dx = 0$, 
and the last equality from the definition of Fisher
information for distributions with absolutely continuous densities.

It may seem curious that Bayes risk of the optimal empirical Bayes rule for the Gaussian
sequence model reduces to Fisher information for a scalar location parameter of the convolution
distribution $\Phi * G$.  We will exploit the latter connection in the next section to consider least favorable
alternative priors for an initial prior in which we lack complete confidence.  Such modified priors
offer some compromise between strictly Bayesian and minimax procedures.
Choosing contamination models by minimizing Fisher information 
is a classical strategy for choosing alternatives for the univariate Gaussian location 
and regression problems following \cite{huber64}.  The convolution form of the Fisher
information for Bayes risk leads us back to an alternative proposal of \cite{mallows78}
as well.

A natural objection to many empirical Bayes procedures is that they place unjustified reliance on an initial
prior.  While such procedures may still perform well with respect to ensemble risk, they may also fail spectacularly
for some subpopulations or individuals.  This concern underlies the limited translation
proposal of \cite{efronmorris}.  We will see that bounding minimax risk by modifying an initial
prior can often serve to soften the impact of these failings.

\section{Restricted Bayes Solutions}

In an effort to balance minimax and Bayes solutions, \cite{hodgeslehmann} proposed solving,
\[
\min_\delta r(G_0,\delta) \; s.t. \;  \max_\theta R(\delta, \theta) \leq 1 + t,
\]
for an initial prior $G_0$ and some $t > 0$. Since $\max_\theta R(X,\theta)=1$ corresponding to 
the worst pointwise risk among all decision rules, achieved by the MLE estimator $\delta(X) = X$, the 
Hodges and Lehmann modified decision rule is thereby constrained to do uniformly 
well over the entire parameter space with risk bounded by $1+t$, while minimizing 
the Bayes risk under prior $G_0$, hence called the restricted Bayes rule.  They show that this is equivalent
to solving,
\[
\min_\delta \; \max_{G \in \GG_\epsilon(G_0)} r(G, \delta), 
\]
with $\GG_\epsilon(G_0) = \{ G = (1 - \epsilon) G_0 + \epsilon H \}$ for some $\epsilon \in (0,1)$ depending
upon $t$, where $H$ is an arbitrary distribution for $\theta$. Due to the minimax theorem we can switch the 
minimization and the maximization, and the resulting optimal rule $\delta^*$ is the posterior mean 
of $\theta$ under the least favorable prior from the class $\mathcal{G}_{\epsilon}(G_0)$. \cite{berger85} comments that
``it is \emph{very} difficult to determine such $\delta^*$; furthermore, this 'optimal' $\delta^*$ is 
usually extremely messy and difficult to work with.''  On the contrary, with the aid of 
modern convex optimization techniques we find them quite tractable and elegant.  

\cite{bickel83} considers the case with $G_0$ having point mass one at zero.  Then, by the 
Brown identity, the Hodges and Lehmann problem is equivalent to solving,
\[
\max_{G \in \GG_\epsilon(\delta_0)} (1 - I (\Phi * G)) = \min_{G \in \GG_\epsilon(\delta_0)}  I (\Phi * G)
\]
This is the problem posed by \cite{mallows78} motivated by robustness considerations for
time-series problems with additive outliers.
Mallows conjectured that the least favorable $G$ would be discrete,
supported on the integers with mass declining exponentially.  
\cite{bickel83} reports a modified conjecture of Donoho that relaxes the spacing
of the Mallows mass points, but is otherwise similar.  Neither conjecture seems to be strictly 
correct, but numerical computations confirm the nearly exponential decay of the mass. 
\cite{bickel1983minimizing} provide a detailed discussion of the discrete nature of the Mallows
solutions based on the analyticity of the objective function.  See also \cite{johnstone94}.  

\subsection{Computing Mallows's Least Favorable Distribution} \label{sec: computation}

As noted by \cite{bickel1983minimizing} and \cite{marazzi} the \cite{mallows78} problem is convex.
\cite{marazzi} suggests a gridding strategy that imposes an exponentially declining mass condition.
This produces a remarkably accurate solution for an initial 
Gaussian prior employing generic optimization software. Using modern convex optimization software, 
we show that the problem can be efficiently solved numerically for any prior distribution. 
Our implementation embodied in the function \texttt{HodgesLehmann} in the our REBayes package 
for the R language employs the Mosek \cite{mosek} optimizer and provides a general interface for 
computing either the Huber or Mallows solutions for an arbitrary initial prior, $G_0$. 

We now describe our procedure for the simplest (Dirac) initial prior, $G_0 = \delta_0$.
Our objective is to solve 
\[
\underset{f \in \mathcal{K} }{\min} \int \frac{f'(x)^2}{f(x)} dx 
\]	
with 
\[
\mathcal{K}_\epsilon =\Big \{f = (1 - \epsilon) \int \varphi(x- \theta)d \delta_0(\theta)  + 
\epsilon \int \varphi(x- \theta) dH(\theta)\Big  \} 
\]
This can be solved, on a grid of $x$, $\{x_1< x_2 < \dots <x_M\}$ 
and a grid of $\theta$ as $\{ \theta_1 < \theta_2 < \dots < \theta_L\}$
as the rotated quadratic cone convex optimization problem:
\[
\min \sum_{i = 1}^M w_i 
\]
subject to 
\begin{align*}
u_i &= f_{i+1}-f_i, \\
v_i  &= \frac{1}{2}(	f_{i+1}+f_i),\\
u_i^2 &\leq 2v_i w_i\\
f_i &= (1 - \epsilon)  \varphi(x_i) + \epsilon \sum_{j=1}^L \varphi(x_i - \theta_j) h_j\\
h & \in \mathcal{S} \equiv \{h \in \mathbb{R}^L : \sum_{j=1}^L h_j = 1, \; h_j \geq 0, \; j = 1, \dots , L \} 
\end{align*}
Provided that the grids are sufficiently finely spaced interior point optimization in Mosek 
is capable of producing very accurate solutions very efficiently.

In Figure \ref{fig.Mallows} we illustrate a Mallows marginal density, $f^M$, its corresponding
mixing distribution, $P^M$, and plot the log mass of the discrete mass points of $P^M$ at their
respective locations.  At first glance, it seems that the mass points are approximately equally spaced and
have mass that declines exponentially.  However, on closer examination the spacing of
the mass points in the right tail are estimated to be: $\{ 1.96, 1.80, 1.70, 1.61, 1.52, 
1.39, 1.29, 1.37, 1.55, 1.70, 1.91\}$, which seems sufficiently non-uniform to call the
uniform spacing conjecture into question.  \cite{djm} suggest an alternative computational
strategy for the Mallows problem using a parametric model that assumes equal spacing of the
mass points.  The grid for evaluation of $f$ is equally spaced
from -30 to 30 with 500 points of evaluation.  The grid for evaluation of $P^M$ is also equally
spaced from -20 to 20 with 4003 points of evaluation.  For purposes of illustration the mass
at $\theta = 0$ is taken to be 0.2.

\begin{figure}
  \begin{center}
          \resizebox{ \textwidth}{!}{{\includegraphics{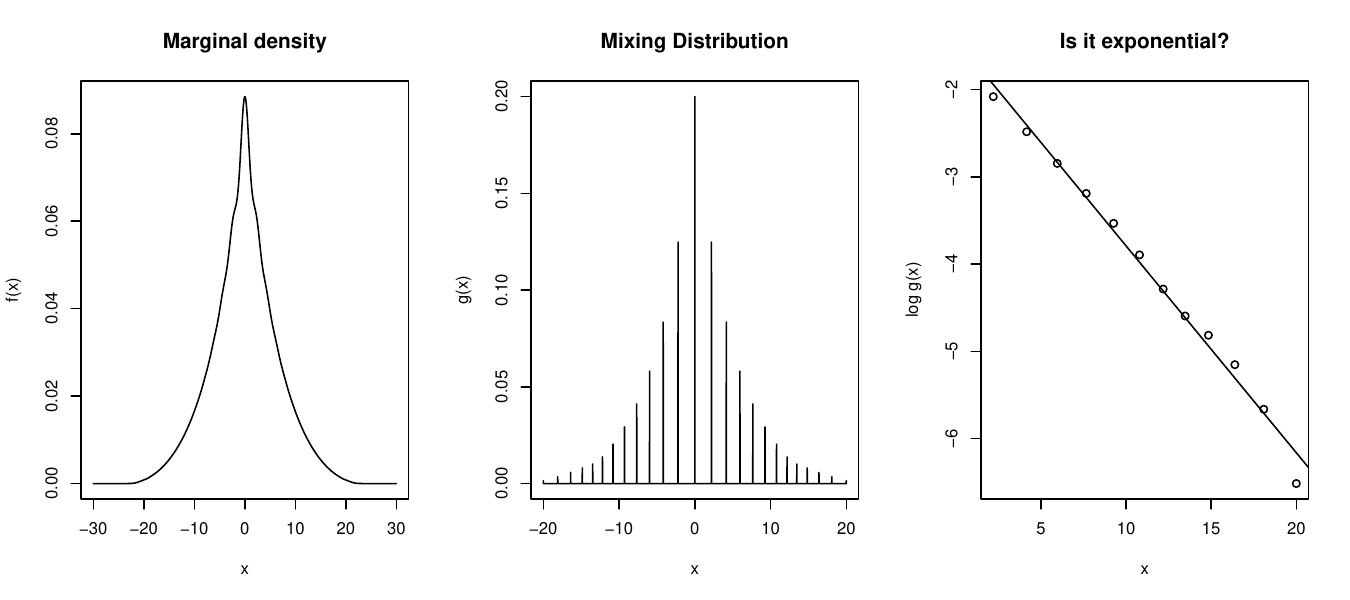}}}
  \end{center} 
  \caption{Mallows least favorable marginal density, probability mass function of the Mallows
  mixing distribution and log mass of the
  mixing distribution as a function of location indicating the approximate exponentiality
  of the mixing distribution as conjectured by Mallows.}
  \label{fig.Mallows}
\end{figure}

\begin{figure}
  \begin{center}
          \resizebox{ .6\textwidth}{!}{{\includegraphics{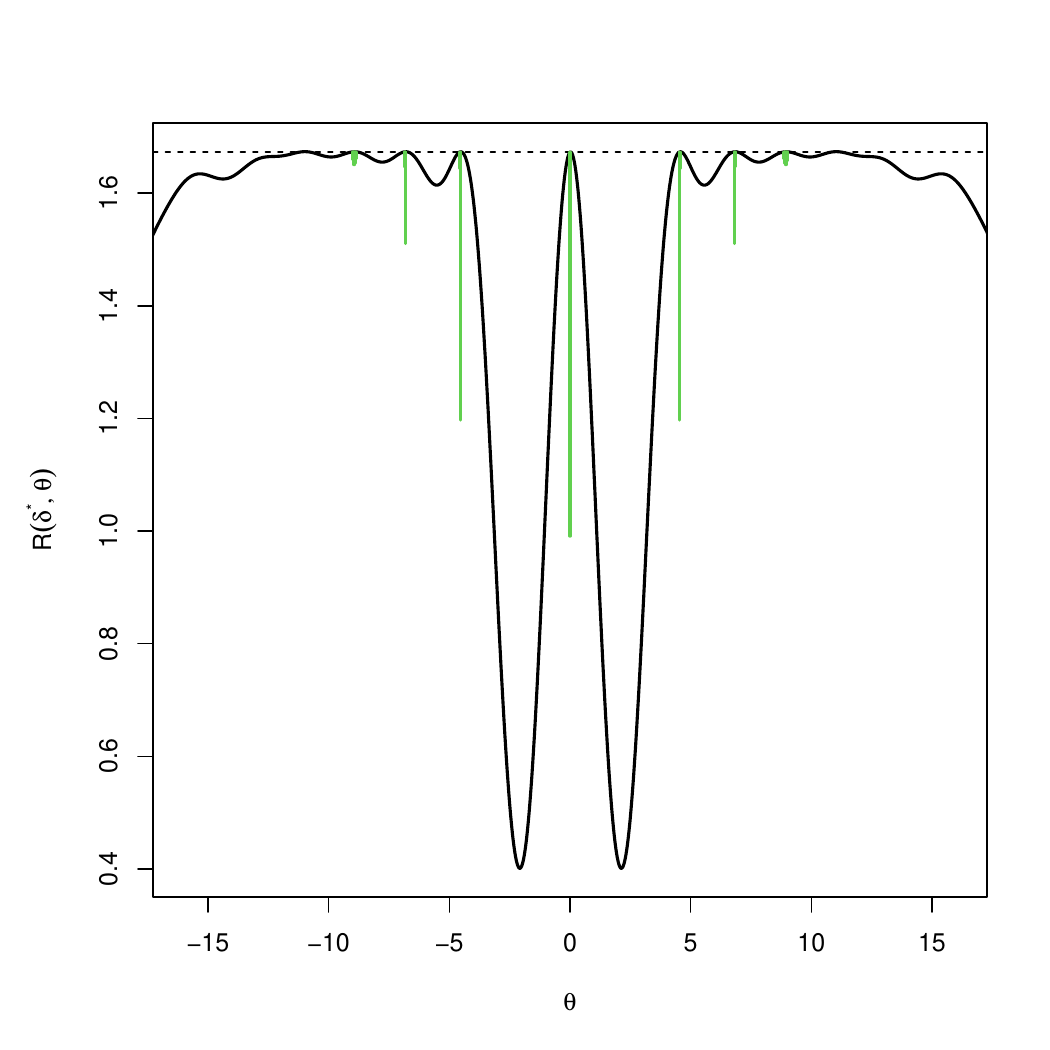}}}
  \end{center} 
  \caption{Mass points of the Mallows contamination distribution $H^*$ with 
  $G_0 = 0.5 \delta_{-2} + 0.5 \delta_2$ for $\epsilon = 0.2$.  The solid black
  curve depicts the pointwise risk function of the decision rule $\delta^*$, constructed using Mallows's least favorable prior $G^*$. The vertical green lines depict the location of the mass points of
  the solution $H^*$, while their length represents the amount of mass assigned to each.  In
  accordance with the Tukey ``hanging rootogram'' principle these lengths are rescaled as the square
  root of the respective masses.  The dashed horizontal line represents the bound on
  the pointwise risk imposed by the Hodges-Lehmann constraint in this case approximately
  1.67 induced by the choice of $\epsilon = 0.2$. The constraint is binding at mass points of $H^*$.}
  \label{fig.HLmass}
\end{figure}

In the previous example we have taken the initial prior, $G_0$ as Dirac, but 
there is no obstacle to starting from any other initial prior.  To provide some additional
intuition about the nature of the Mallows solution for general $G_0$, we illustrate in Figure \ref{fig.HLmass}
a plot of the pointwise risk function $R(\delta^* , \theta) := \mathbb{E}_\theta[(\delta^*(X)-\theta)^2]$ of the decision rule $\delta^*(\cdot)$, constructed as the posterior mean of $\theta$ using the least favorable Mallows prior
\[
G^* (\theta) = (1 - \epsilon ) G_0 (\theta) + \epsilon H^* (\theta)
\]
where $G_0 (\theta)$ is taken to be an equally weighted mixture of two point masses at -2 and 2.
The tangencies in this plot with the horizontal dotted line marking out $\sup_\theta R(\delta^*, \theta) = 1 + t$ coincide 
with the location of the mass points of the solution of $H^*$ indicated in the plot by the
vertical green lines.  The bound, $1 + t$, is the pointwise
risk bound chosen to constrain the Hodges-Lehmann restricted Bayes rule, which can be constructed using Mallows's least favorable prior $G^*$. 
Since the initial prior $G_0$ places all its mass on the two points $\{-2,2\}$ the Mallows modification
hedges this bet by placing a considerable mass at zero and exponentially declining mass at a few points
below -2 and above +2. This figure is strongly reminiscent of Figure 5.5 of \cite{lindsay} 
illustrating the location of mass points of the NPMLE. 

\subsection{Some Examples}

We now consider several special cases of the Hodges and Lehmann restricted Bayes approach.
In each case we consider not only the Mallows equivalent form of the Hodges-Lehmann modification, but 
also a Huber alternative that relaxes the Mallows objective of minimizing the Fisher information over convolutions by 
minimizing over the entire class of contamination distributions for $X$.  Taking the least favorable density and plug into the Tweedie formula gives rise the Huber procedure to estimate $\theta$ for each value of $x$. 

The Mallows rule has the obvious advantage that it yields a Bayes decision rule while
the corresponding Huber procedure does not.  This is particularly evident in the third
example of Casella and Strawderman where the unrestricted Bayes rule is minimax, so the Hodges-Lehman's 
restriction on point-wise risk is unbinding and the restricted Bayes rule coincides with the unrestricted, 
but the Huber procedure is inadmissible.  On the other hand there is something attractive about the Huber 
rules that it can be shown that they are necessarily monotone. \cite{donohoreeves} propose
an alternative based on the \cite{huber74} spline that minimizes Fisher information over a Kolmogorov
neighborhood of the marginal density of $X$ specified by a finite number of evaluations of its quantile function. 
They then apply the Tweedie formula with the resulting least favorable density. 
An implementation of this procedure is also included in the REBayes package with the function
\texttt{HuberSpline}, although we do not pursue it further in this paper.

    \vspace{5mm}
\begin{description}
\item[Dirac $G_0$]  This is the case considered by \cite{bickel83} and anticipated by \cite{mallows78}.
	Bickel first simplifies the problem of minimizing $I(\Phi * G)$ by considering the relaxed 
	\cite{huber64} problem of minimizing $I(F)$ over $\FF = \{ F = (1 - \epsilon) \Phi + \epsilon W\}$ where $W$ is an arbitrary distribution for $X$.
	The least favorable distribution $F$ has the well known score function,
	\[
	-f^\prime (x)/f(x) = 
	    \begin{cases} x & |x| \leq k\\ 
	    k \; \text{sign}(x) & |x| > k.
	    \end{cases}
	\]
	Applying Tweedie's formula gives rise to the hard thresholding Huber rule,
	\[
	\delta(x) = x + f^\prime (x)/f(x) = 
	    \begin{cases} 0 & |x| \leq k\\
	    x - k \; \text{sign}(x) & |x| > k
	    \end{cases} 
	\]
	We contrast this with the numerical solution of the corresponding 
	Mallows problem in Figure \ref{fig.dirac}.  The Mallows rule offers a soft thresholding
	alternative to the piecewise linear Huber rule that oscillates around the Huber rule
	in the tails.
    \begin{figure}
    \begin{center}
          \resizebox{ 0.6\textwidth}{!}{{\includegraphics{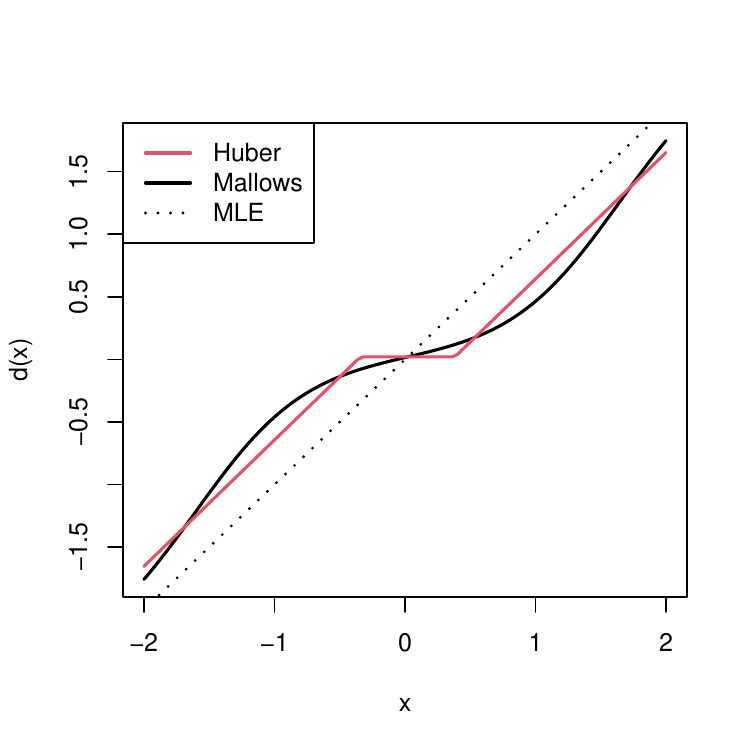}}}
    \end{center} 
    \caption{The figure contrasts the restricted Hodges-Lehmann decision rules based on an initial Dirac
    prior with mass one at zero:  the piecewise linear Huber rule imposes hard thresholding near zero
    while the Mallows rule allows soft thresholding near zero and oscillates around the Huber rule in
    the tails.  Here $\epsilon = 0.4$.}
    \label{fig.dirac}
    \end{figure}
    \vspace{5mm}
\item[Gaussian $G_0$]  This is the case considered by \cite{efronmorris}.  The initial prior is
    Gaussian, $G_0 \sim \NN(0, A)$, so the score function of the least favorable Huber density, which minimizes Fisher information of distributions in the contamination class $\mathcal{F}_\epsilon = \{ F = (1-\epsilon) N(0,A+1) + \epsilon W\}$, is
    \[
    -f^\prime (x)/f(x) = 
    	\begin{cases} \frac{1}{A+1}x & |x| \leq k(A+1)\\
	k \; \text{sign}(x) & |x| > k(A+1)
	\end{cases}
    \]
    and Tweedie's formula yields the piecewise linear ``limited translation'' rule,
    \[
    \delta(x) = x + f^\prime (x)/f(x) = 
	\begin{cases} \frac{A}{A+1} x & |x| \leq k(A+1)\\
	x - k \; \text{sign} (x) & |x| > k(A+1),
	\end{cases} 
    \]
    proposed by Efron and Morris.
    Figure \ref{fig.gauss} contrasts the Huber and Mallows forms of the Hodges-Lehmann restricted Bayes rule 
    with the unrestricted Bayes rule. We stick to linear shrinkage for values of $x$ in the middle but refrain 
    from shrinking at the two tails.  Again, the Mallows rule smooths the Huber rule in the center 
    and oscillates around the Huber rule in the tails.  Remarkably, \cite{efronmorris} in
    their Appendix Figure B' already illustrates this behavior of the restricted Mallows rule. 
    See also Figure 1 in \cite{marazzi}.  Extension of this example to James-Stein forms for 
    $G_0$ is straightforward although computation of the associated risk is not as simple as shown by 
    \cite{efronmorrisii}.
    \begin{figure}
    \begin{center}
          \resizebox{ 0.6\textwidth}{!}{{\includegraphics{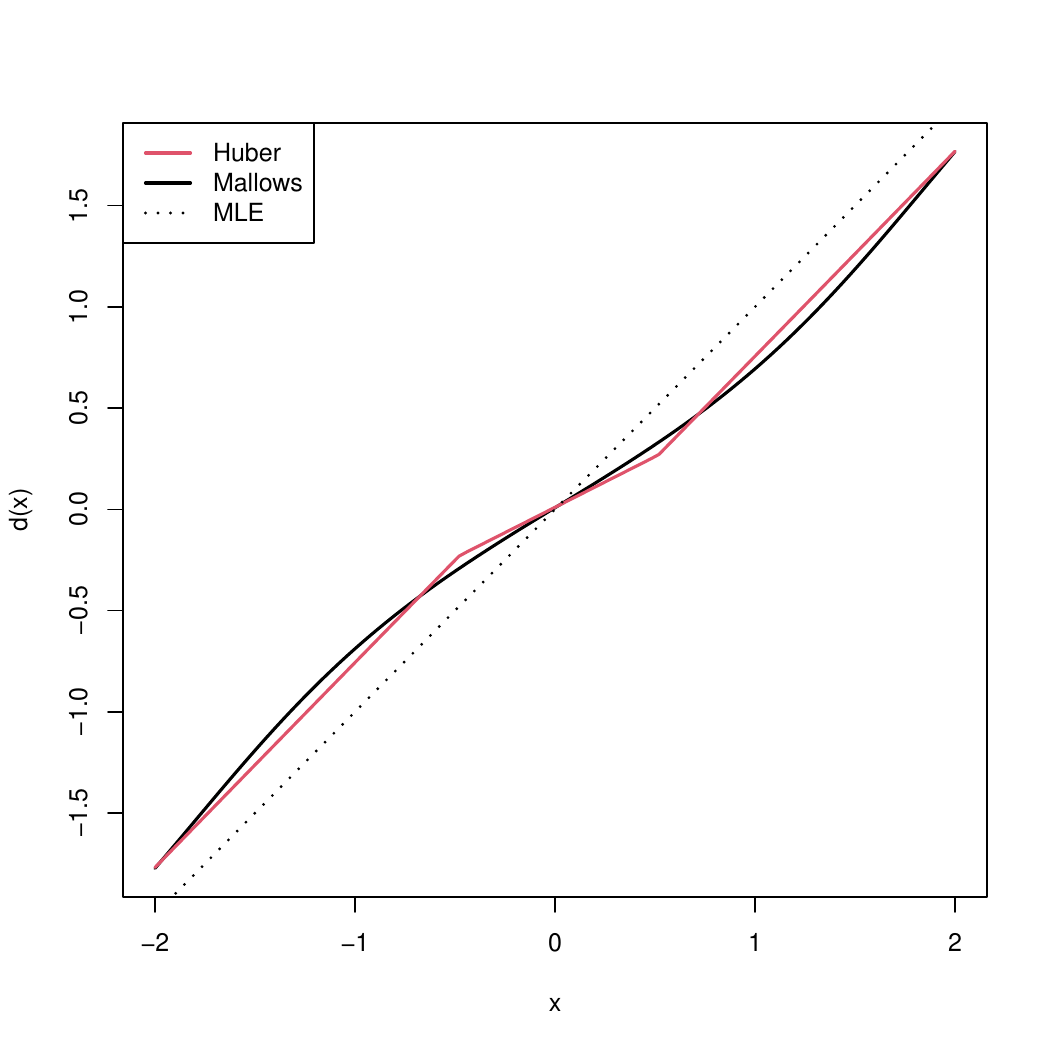}}}
    \end{center} 
    \caption{The figure contrasts the Huber and Mallows forms of the restricted 
    Hodges-Lehmann decision rules based on an initial Gaussian
    prior with variance one:  the piecewise linear Huber rule is linear in the center
    while the Mallows rule is smoother near zero and oscillates around the Huber rule in
    the tails.  Again, $\epsilon = 0.4$.}
    \label{fig.gauss}
    \end{figure}

    \vspace{5mm}
\item[Two Point $G_0$ I]  This is the case considered by \cite{casella1981estimating}, the
    initial prior is $G_0 = 0.5 \delta_{-1} + 0.5 \delta_{1}$ and we restrict the domain
    of the parameter $\theta$ to the interval $[-1,1]$.  Casella and Strawderman show 
    that the Bayes rule,
    \[
    \delta^B (x) = \tanh (x)
    \]
    attains maximal pointwise risk of 
    \[
    R(\delta^B, 1) = \E_{X \sim \NN(1,1)} [ \tanh(X) - \theta)^2] \approx 0.45 < 1,
    \]
    at $\theta = \pm 1$.  Consequently, $\delta^B$ is minimax and $G_0$ is least favorable for
    all $t(\epsilon) \geq 0$ as asserted in Theorem 3.1 of \cite{casella1981estimating}. 
    In contrast, the Huber modification of the Bayes rule,
    \[
	\delta^H(x) = 
	\begin{cases} 
	    x + k & x < -b\\
	    \delta^B(x) & |x| < b\\
	    x -k & x > b
	\end{cases} 
    \]
    with $k = -(\log f_0)'(b)$ and $b$ solves $\int_{-b}^b f_0(x) dx + \frac{2 f_0(b)}{k} = (1-\epsilon)^{-1}$, has strictly greater risk for all $\theta \in \Theta = [-1,1]$.  

    To see this, let
    $f_0(x) = \frac{1}{2} \varphi(x-1) + \frac{1}{2} \varphi(x+1)$,
    $g^H(x) = \delta^H(x) - x$ and $g^B(x)  = \delta^B(x) - x = f_0'(x)/f_0(x)$. 
    By Stein's lemma, 
	\begin{align*}
		R(\delta^H ,\theta) & - R( \delta^B , \theta)\\  
		    & = 2 \E_\theta[ (g^H(X))' - (g^B(X))'] + \E_\theta [ (g^H(X))^2 - (g^B(X))^2]\\
		& = -2 \E_\theta[ 1\{|X|>b\} (\log f_0(X))'' ]\\
		& \quad + \E_\theta [ 1\{|X|>b\} (k^2 - ((\log f_0(X))')^2]
	\end{align*}
    The first term is positive because $f_0(x)$ is log-concave. 
    The second term is also positive because $k = -(\log f_0)'(b)$ and 
    $k^2 >( (\log f_0(x))')^2$ for $|x|>b$, hence we conclude $\delta^H$ is inadmissible. 
    Figure \ref{fig.2ptrisk} illustrates the pointwise risk functions of the Bayes, Mallows
    and Huber decision rules on $[-1,1]$.  Since the Bayes rule and its Mallows modification
    are identical the two are indistinguishable in the figure.

    \begin{figure}
    \begin{center}
          \resizebox{.6\textwidth}{!}{{\includegraphics{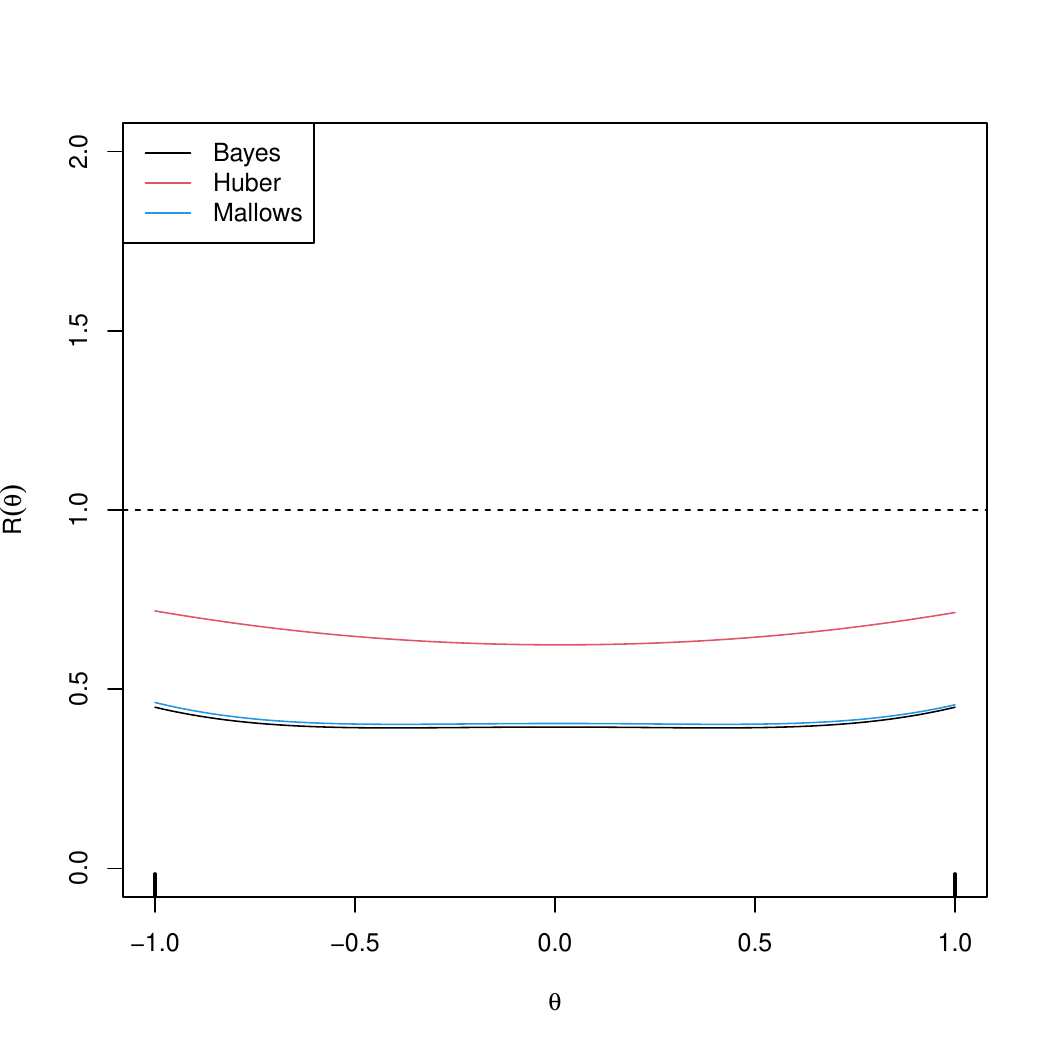}}}
	  \caption{Pointwise risk for the Bayes rule and its Huber and Mallows
	  restricted modifications.  The Huber risk function assumes $\epsilon = 0.4$.
	  The Bayes rule and its Mallows modification
	  are indistinguishable so the Mallows risk has been artificially increased
	  by 0.01 to make them both almost distinguishable.} \label{fig.2ptrisk}
    \end{center} 
    \end{figure}
    \vspace{5mm}
\item[Two point $G_0$ II]  When the two points of support of $G_0$ are more widely separated, for example,
     $G_0 = \frac{1}{2} \delta_2 + \frac{1}{2} \delta_{-2}$, and the support of $\theta$ is the whole 
     real line, the marginal $ f_0(x) = \int \varphi(x-\theta) dG_0(\theta)$ is bimodal, the Huber 
     least favaroable density for $X$ is described in the following proposition whose proof appears in
     Appendix \ref{app.A}.
     \begin{proposition} \label{prop:Huber}
	For $\epsilon$ sufficiently large,\footnote{For very small $\epsilon$ no modification
	in the center of the distribution is required; only the tail behavior is modified.
	In our example this threshold is about $\epsilon_0 < 0.0001$.}
	the solution to,
	\[
	\min_{F \in \mathcal{F}_\epsilon} I(F) 
	\]
	with $\mathcal{F}_\epsilon = \{F: F= (1-\epsilon) F_0 + \epsilon W\}$ where $F_0(x) = \frac{1}{2} \Phi(x+2) + \frac{1}{2} \Phi(x-2)$ and $W$ is arbitrary distribution of $X$. The least favorable distribution has its density of 
	the form: 
	\[
	f^*(x) = \begin{cases} 
		(1-\epsilon) f_0(x) e^{k(x+b)} & x \leq -b\\
		(1-\epsilon) f_0(x) & |x| \in [c,b]\\
		A^2 \cosh^2(kx/2) & x \in [-c,c]\\
		(1-\epsilon) f_0(x) e^{-k(x-b)} & x \geq b
		\end{cases} 
	\]
	where for a given $k$, and $(b,c,A)$, 
	\begin{align*}
		k &= -(\log f_0)'(b)\\
		k \cdot \tanh(kc/2) & = (\log f_0)'(c)\\
		A^2 \cosh^2(kc/2) &= (1-\epsilon) f_0(c).
	\end{align*}
	The constant $k$ is defined implicitly by,
	\[
	\int_{-c}^c A^2 \cosh^2(kx/2) dx + 2 \int_c^b (1-\epsilon) f_0(x)dx + \frac{2(1-\epsilon) f_0(b)}{k} = 1.
	\]
	The associated Huber decision rule is then, 
	\[
	\delta^*(x) = 
	\begin{cases} 
		x + k \cdot \tanh(kx/2) & x\in [-c,c]\\
		x + f_0(x)'/f_0(x) & |x| \in [c,b]\\
		x-k & x \geq b\\
		x+k & x \leq -b
	\end{cases} 
	\]
	\end{proposition}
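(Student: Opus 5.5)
The plan is to verify the proposed $f^*$ against the standard variational characterization of the least favorable density in a convex contamination class, as in \cite{huber64}. Since $I(\cdot)$ is convex and lower semicontinuous on $\mathcal{F}_\epsilon$, and $\mathcal{F}_\epsilon$ is convex, a density $f^*\in\mathcal{F}_\epsilon$ with $I(f^*)<\infty$ minimizes $I$ if and only if the directional derivative is nonnegative in every feasible direction:
\[
\frac{d}{dt} I\big((1-t)f^* + t f_1\big)\Big|_{t=0}\ \ge 0 \quad \text{for all } f_1\in\mathcal{F}_\epsilon .
\]
With $\psi = -(\log f^*)'$, integration by parts turns this into the familiar condition
\[
\int \big(2\psi' - \psi^2\big)(f_1 - f^*)\,dx \ \ge 0 .
\]
So the proof splits into two tasks: show that $f^*$ is an admissible element of $\mathcal{F}_\epsilon$, and check this inequality.

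\textbf{Step 1: the free pieces.} On the set where $f^* > (1-\epsilon)f_0$, the contaminating mass $W$ is active. Optimality there forces $2\psi'-\psi^2$ to be a constant, say $-k^2$, and the sign condition below will then follow from this.
\begin{itemize}
\item In the tails, the constant solution $\psi \equiv \pm k$ gives exponential tails.
\item In the center, we must solve the Riccati equation $2\psi' - \psi^2 = -k^2$ with $\psi$ odd. Its solution is $\psi(x) = -k\tanh(kx/2)$, so $f^* \propto \cosh^2(kx/2)$.
\end{itemize}
This explains the form $A^2\cosh^2(kx/2)$; the minus sign reflects the central dip of the bimodal $f_0$. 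We then glue these pieces to $(1-\epsilon)f_0$ on $|x|\in[c,b]$, requiring continuity of both $f^*$ and $\psi$ at $b$ and at $c$. This yields exactly the three displayed equations:
\begin{itemize}
\item at $b$, $k = -(\log f_0)'(b)$;
\item at $c$, $k\tanh(kc/2) = (\log f_0)'(c)$, together with $A^2\cosh^2(kc/2) = (1-\epsilon)f_0(c)$.
\end{itemize}
The normalization $\int f^* = 1$ then gives the implicit equation for $k$. Since $\psi$ is continuous and piecewise smooth, $\psi'$ has no point masses and the integration by parts is legitimate.

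\textbf{Step 2: feasibility and the sign condition.} Feasibility requires $f^* \ge (1-\epsilon)f_0$ everywhere, so that $W$ is a genuine distribution. In the tails this follows from $-(\log f_0)' \ge k$ for $x\ge b$. That bound in turn holds because $f_0$ is a two-component Gaussian mixture whose score $-(\log f_0)'(x) = x - 2\tanh(2x)$ is increasing once $x$ is beyond the inflection region, and $b$ lies there. In the center we need $\cosh^2$-matching from above on $[0,c]$. This comes from comparing the two log-derivatives: $-k\tanh(kx/2)$ versus $(\log f_0)'(x) = 2\tanh(2x) - x$. Below $c$ the former has smaller magnitude, so integrating backwards from $c$ gives $\log f^* \ge \log(1-\epsilon)f_0$.

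For the variational inequality, write $f_1 - f^* = (f_1 - (1-\epsilon)f_0) - (f^* - (1-\epsilon)f_0)$. On the contaminated regions, $2\psi'-\psi^2 = -k^2$. On $|x|\in[c,b]$, $f^*=(1-\epsilon)f_0$ and $f_1-f^*\ge0$, so it suffices to show
\[
2\psi' - \psi^2 \ \ge\ -k^2 \quad\text{there}.
\]
Given this, the integral is at least $-k^2\int(f_1-f^*)=0$. With $\psi = -(\log f_0)'$, the required inequality reads $-2(\log f_0)'' - ((\log f_0)')^2 + k^2 \ge 0$ on $[c,b]$, and this is equivalent to $(f_0^{1/2})''/f_0^{1/2} \le k^2/4$.

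\textbf{Main obstacle.} The hard part is verifying this last inequality on $[c,b]$, together with the feasibility inequalities. Each one is the statement that a specific function built from $x$, $\tanh(2x)$ and $\sech^2(2x)$ has a sign. The condition ``$\epsilon$ sufficiently large'' is where this bites: for small $\epsilon$ the equation for $c$ has no solution and only the tails get modified, as the footnote notes. I would first reduce everything to monotonicity of $h(x) = -2(\log f_0)'' - ((\log f_0)')^2$. Then I would show that $h$ attains its minimum on $[c,b]$ at the endpoints, where by construction $h = -k^2$, using the explicit form $(\log f_0)'' = -1 + 4\sech^2(2x)$. If a clean analytic argument fails for the full range of $\epsilon$, I would settle for establishing the endpoint conditions and the existence and uniqueness of $(k,b,c,A)$ by monotonicity in $k$. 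The remaining interval inequality would then be verified numerically for the example, consistent with how the paper treats it.
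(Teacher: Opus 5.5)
Your argument is sound, and it reaches the paper's characterization by a different logical route.

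\textbf{How your route differs from the paper's.} The paper (Appendix~A) sets $u=\sqrt f$ and attaches a multiplier $\lambda$ to $\int u^2=1$ and $\eta(x)\ge 0$ to $u^2\ge(1-\epsilon)f_0$. It then reads the form of $f^*$ off the stationarity equation $u''=\tfrac{\lambda-\eta}{4}u$. Since $4u''/u=-(2\psi'-\psi^2)$, this is your Riccati equation on the free set with $\lambda=k^2$. The $C^1$ pasting and normalization steps coincide with yours.

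The paper, however, only derives necessary conditions. It never checks primal feasibility ($u^*\ge\sqrt{(1-\epsilon)f_0}$ on the free set). Nor does it check dual feasibility, $\eta=k^2-4(\sqrt{f_0})''/\sqrt{f_0}\ge0$ on $[c,b]$. Huber's variational inequality makes your proof a genuine sufficiency argument and isolates exactly these two checks; your condition on $[c,b]$ \emph{is} $\eta\ge0$.

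You also need not fall back on numerics for that condition. The paper argues it analytically in the proof of the two-point proposition in Appendix~C. There it shows that $H=g_0^2+2g_0'$ (your $-h$) decreases and then increases on $[0,\infty)$, so its maximum on $[c,b]$ sits at an endpoint.

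\textbf{A step that is wrong as written.} $h$ is \emph{not} equal to $-k^2$ at $b$ and $c$ by construction. The pasting is only $C^1$, so $2\psi'-\psi^2$ jumps at the junctions, and the endpoint bounds must be proved as inequalities.
\begin{itemize}
\item At $b$: $h(b)+k^2=-2(\log f_0)''(b)>0$. This uses that $b$ lies beyond the inflection point $\cosh(2x)=2$ of $\log f_0$, the same fact that gives tail feasibility.
\item At $c$: $h(c)+k^2=-2F'(c)$, with $F=(\log f_0)'-k\tanh(kx/2)$. Nonnegativity needs $c$ to be the first positive zero of $F$, with $F>0$ on $(0,c)$. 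That is the same choice that gives central feasibility.
\end{itemize}

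\textbf{The tail formula.} Your tail piece $\psi\equiv k$ presupposes $f^*=(1-\epsilon)f_0(b)e^{-k(x-b)}$ for $x\ge b$. This is the form used in the paper's proof and in the normalization term $2(1-\epsilon)f_0(b)/k$. The displayed $(1-\epsilon)f_0(x)e^{-k(x-b)}$ in the statement is a typo, and as written it would violate $f^*\ge(1-\epsilon)f_0$. State the corrected form explicitly.
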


	In Figure \ref{fig.2pt2} we compare this shrinkage rule with the initial Bayes rule
	and the Hodges-Lehmann rule (labeled as Mallows) in the left panel.  Again we
	see that the Mallows's rule oscillates around the Huber rule in the tails, while being
	somewhat smoother in the around zero.  In the right panel of Figure \ref{fig.2pt2} we
	depict the pointwise risk of the three procedures for the choice, $\epsilon = 0.2$.
	Worst case risk for the Huber and Mallows rules is about 1.67 while the worst case 
	risk of the Bayes rule is unbounded. The Huber rule is now
	clearly admissible, but still not as attractive as the Mallows rule, in the sense that 
	its Bayes risk is strictly larger than the Mallows rule.

\begin{figure}[h!]
	\centering
	\begin{subfigure}[b]{0.5\textwidth}            
		\includegraphics[width=\textwidth]{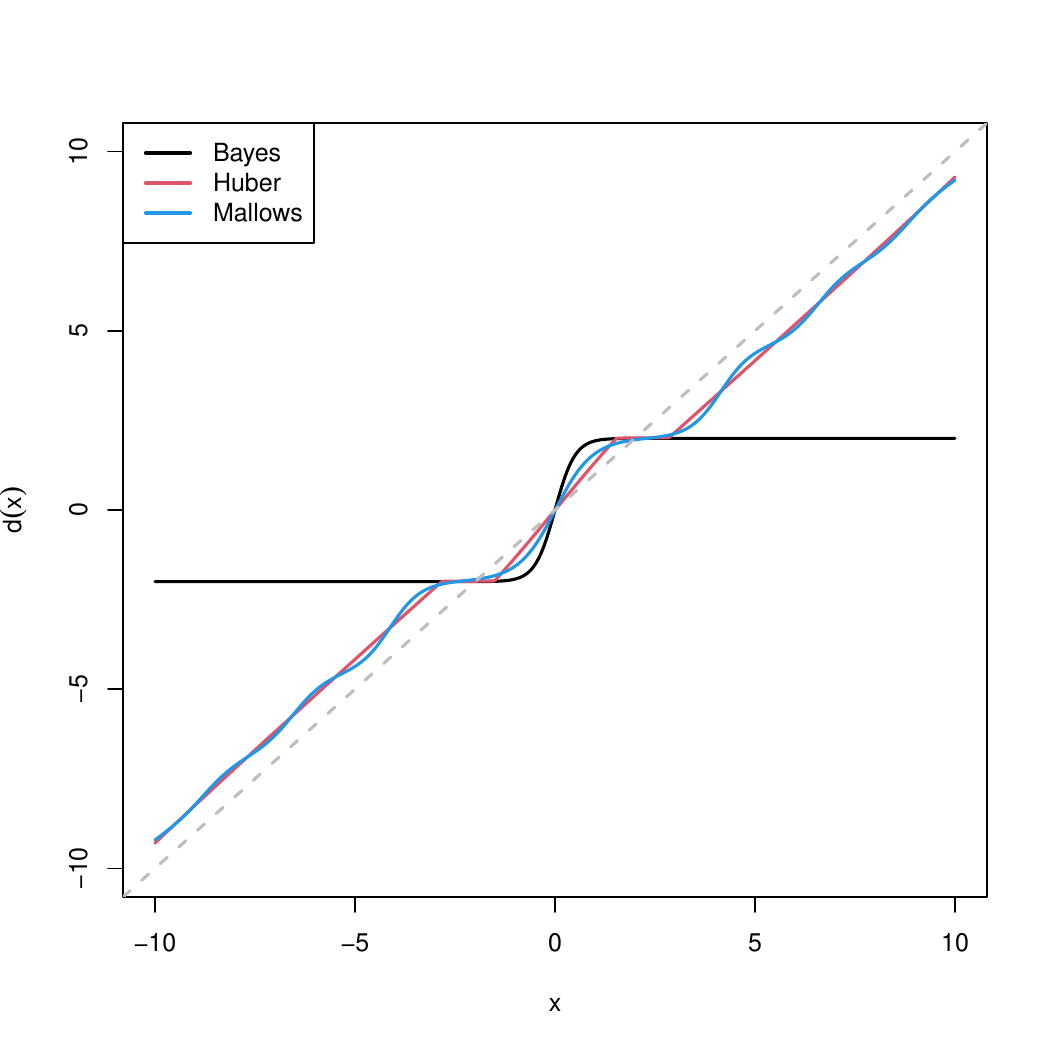}
	\end{subfigure}%
	\begin{subfigure}[b]{0.5\textwidth}
	\centering
	\includegraphics[width=\textwidth]{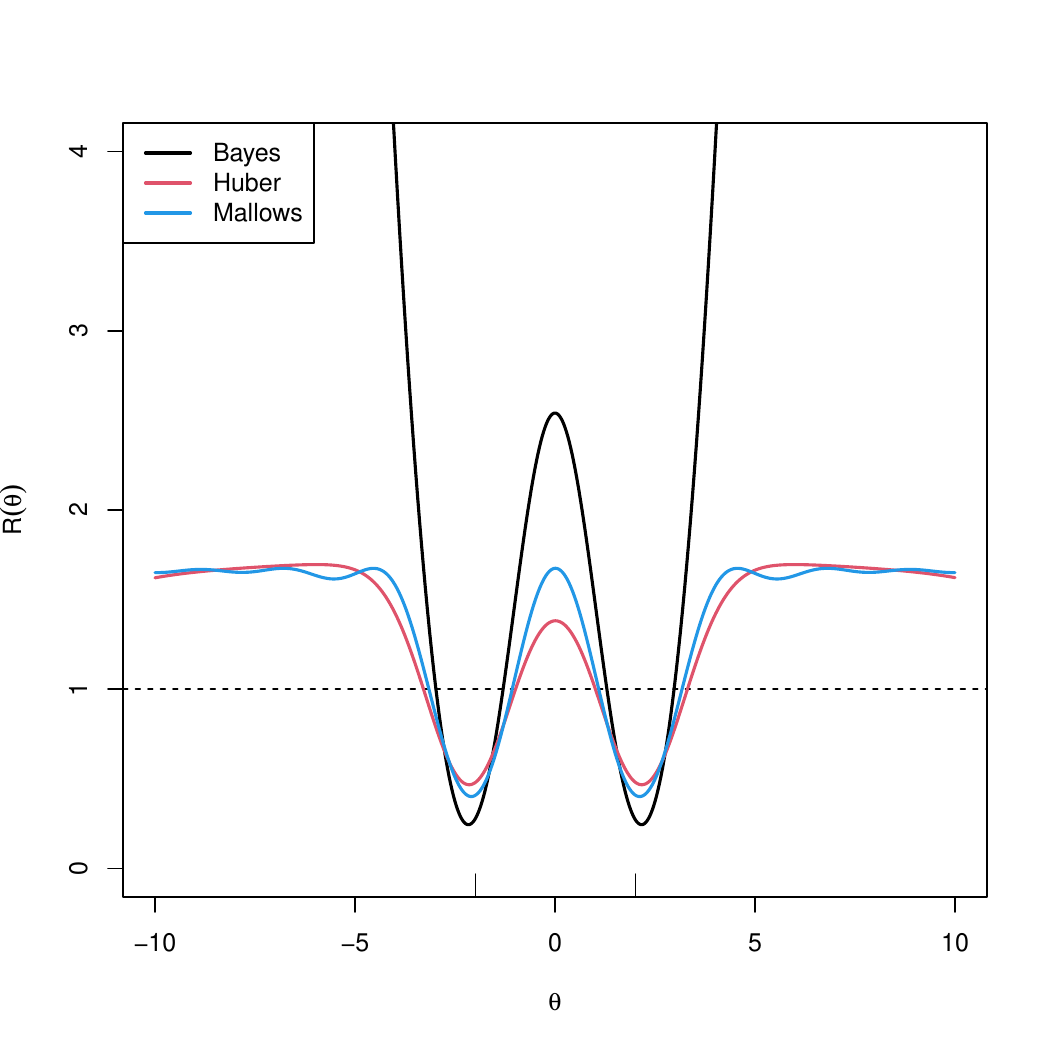}
\end{subfigure}
\caption{Decision rules and associated pointwise risk functions for three procedures.
The initial prior is: $G_0 = \frac{1}{2} \delta_2 + \frac{1}{2} \delta_{-2}$,
with contamination level $\epsilon = 0.2$.  The worst case risk for 
Mallows and Huber rules is about 1.67. }\label{fig.2pt2}
\end{figure}
\end{description}

\subsection{Restricted Empirical Bayes rules}
Having examined several examples of the Hodges and Lehmann restricted Bayes rules for some simple initial
prior distributions, we now consider an empirical Bayes counterpart with $G_0$ estimated by maximum
likelihood as proposed by \cite{kw} and anticipated by \cite{r50}.
    Given a sample from the compound decision problem posed in the introduction, we consider
    the nonparametric maximum likelihood estimator $\hat G$ as the initial $G_0$ and then
    proceed to construct a modified prior according to the principles laid out by Hodges and
    Lehmann.
    \begin{figure}
    \begin{center}
          \resizebox{\textwidth}{!}{{\includegraphics{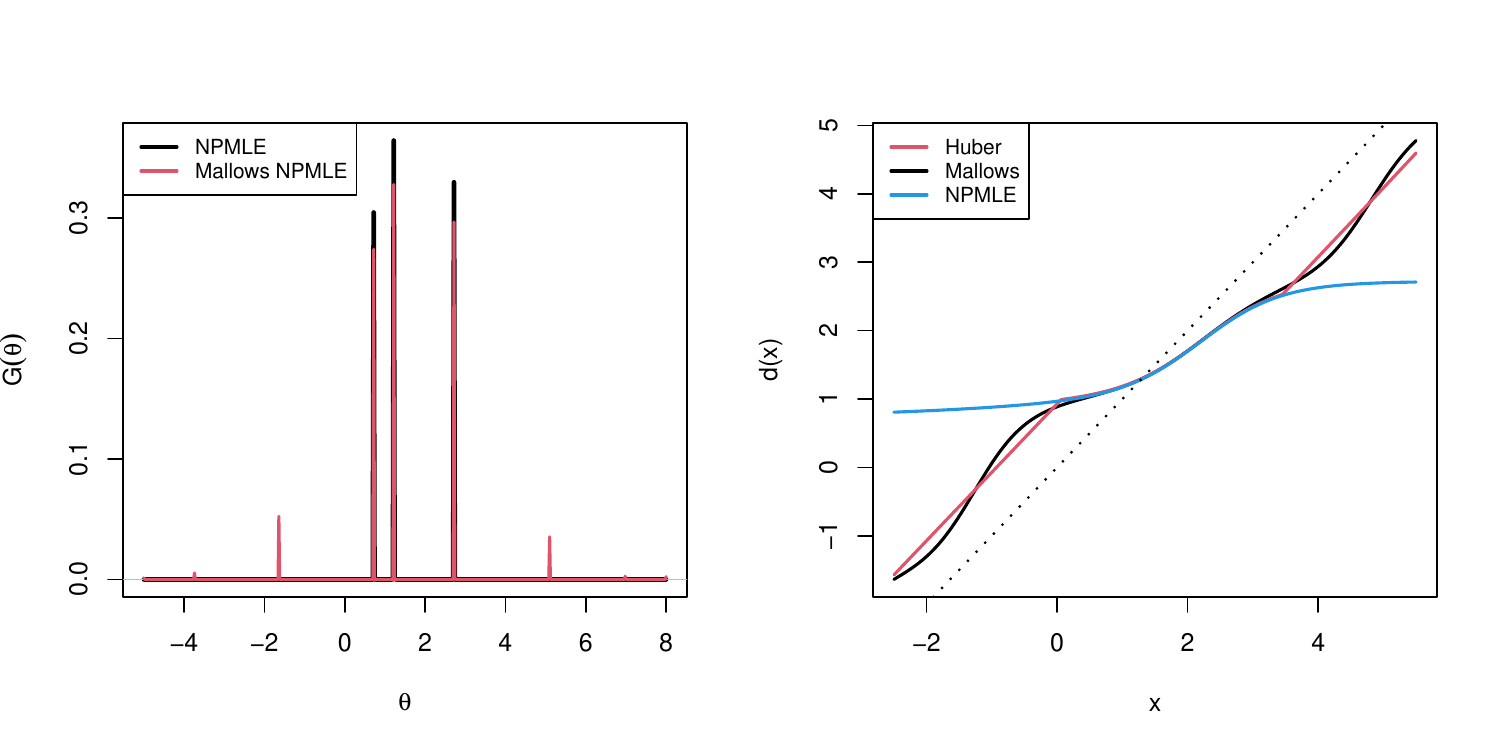}}}
    \end{center} 
    \caption{The left panel of the figure contrasts the NPMLE prior with the modified
    Mallows prior.  The right panel contrasts the Huber and Mallows forms of the restricted 
    Hodges-Lehmann decision rules based on the initial Kiefer-Wolfowitz NPMLE
    prior:  the Huber rule is almost linear in the center
    while the Mallows rule is smoother near zero and oscillates around the Huber rule in
    the tails.  In this example we take  $\epsilon = 0.1$ on the presumption that the initial
    empirical prior is more reliable than in the prior examples.}
    \label{fig.npmle}
    \end{figure}

    We illustrate the consequences of this in Figure \ref{fig.npmle}.  Data is generated from the 
    standard Gaussian sequence model with $G_0 \sim U[0,3]$.  Heavy black vertical lines indicate the 
    original NPMLE $\hat G$ while the red vertical lines indicate the mass points of the modified 
    prior.  While some alteration of the mass in the center of the estimated mixing distribution can 
    be seen, the main change is the new mass points in the tails which decline exponentially in 
    accordance with the Mallows's conjecture.  This feature resembles the \cite{efronmorris} limited 
    translation estimator that imposed linear shrinkage in the center of the distribution, but eschewed 
    shrinkage in the tails.  In baseball terms: a few extremely good hitters deserve their exalted averages.  
    Note that the restricted and unrestricted prior decision rules illustrated in the right panel of the 
    figure agree quite closely on the support of the true $\theta$'s, but diverge sharply beyond this support.

In the following theorem, we establish that the excess Bayes risk, the difference between an
oracle Mallows rule $\delta^M$ with known $G_0$ and an empirical Mallows rule $\hat \delta^M$ with 
estimated (NPMLE) $\hat G$ vanishes asymptotically. The proof makes use of machinery from variational 
analysis and the important feature that the score function of the oracle and EB Mallows 
least favorable density is uniformly bounded.
The proof appears in Appendix \ref{app.B}.

\begin{thm}\label{thm: EBMallows}
	Provided the NPMLE estimator of the marginal density $f_{\hat G}$ is Hellinger consistent 
	for the true marginal density $f_{G_0}$, then as $n \to \infty$, 
	\[
	r(G_0, \hat \delta^M) - r(G_0, \delta^M) \to 0
	\]
	\end{thm}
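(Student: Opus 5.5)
Write $G^*=(1-\epsilon)G_0+\epsilon H^*$ for the oracle Mallows least favorable prior. Write $\hat G^*=(1-\epsilon)\hat G+\epsilon\hat H^*$ for its empirical counterpart, obtained by solving the Mallows problem with $\hat G$ in place of $G_0$. Let $f^*=f_{G^*}$ and $\hat f^*=f_{\hat G^*}$ be the corresponding marginals. Then $\delta^M(x)=x+(\log f^*)'(x)$ and $\hat\delta^M(x)=x+(\log \hat f^*)'(x)$. Put $\psi=(\log f^*)'$ and $\hat\psi=(\log\hat f^*)'$.

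The plan is to reduce the excess Bayes risk to an $L^2(f_{G_0})$ distance between scores. Since $\delta^M$ is Bayes for $G^*$, not for $G_0$, I will not use Brown's identity directly. Instead I expand as in the Casella--Strawderman computation:
\[
r(G_0,\hat\delta^M)-r(G_0,\delta^M)=\int\big[(\hat\delta^M-\delta^{B})^2-(\delta^M-\delta^{B})^2\big]f_{G_0}\,dx ,
\]
where $\delta^B$ is the $G_0$-Bayes rule. The right-hand side equals $\int(\hat\psi-\psi)(\hat\psi+\psi-2\psi_0)f_{G_0}$, with $\psi_0=(\log f_{G_0})'$. By Cauchy--Schwarz this is at most $\|\hat\psi-\psi\|_{L^2(f_{G_0})}\big(\|\hat\psi-\psi\|+2\|\psi-\psi_0\|\big)$. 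The second factor is finite because $\psi$ and $\hat\psi$ are bounded and $I(f_{G_0})<\infty$ (Brown's identity gives $I\le 1$). So it suffices to show $\int(\hat\psi-\psi)^2f_{G_0}\to0$.

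The key structural input is the uniform boundedness of the Mallows scores. First, $H^*$ and $\hat H^*$ have exponentially decaying masses, and the restricted rules have bounded risk $\le 1+t$. From this I would argue that $|\psi|,|\hat\psi|\le K$ with $K$ depending only on $\epsilon$ (e.g. via $\sup_\theta R(\delta,\theta)\le 1+t$ forcing $\delta(x)-x$ bounded, or via a Huber-type comparison $|\psi|\le k(\epsilon)$). Given the bound, dominated convergence reduces the problem to showing $\hat\psi\to\psi$ pointwise (or in $f_{G_0}$-measure) in probability. To get that, I use the variational step. Hellinger consistency of $f_{\hat G}$ implies $\hat G\Rightarrow G_0$ weakly, since Gaussian convolution is injective and mixtures are tight. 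The Mallows objective $G\mapsto I(\Phi*G)$ is convex and weakly lower semicontinuous. The constraint sets $\GG_\epsilon(\hat G)$ converge to $\GG_\epsilon(G_0)$ in the Painlevé--Kuratowski sense, and any $(1-\epsilon)G_0+\epsilon H$ is approximated by $(1-\epsilon)\hat G+\epsilon H$. So the objectives epi-converge, and minimizers converge: every weak limit point of $\hat G^*$ minimizes $I(\Phi*G)$ over $\GG_\epsilon(G_0)$. Strict convexity of $f\mapsto\int f'^2/f$ on marginals, together with injectivity of convolution, gives uniqueness, so $\hat G^*\Rightarrow G^*$.

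Weak convergence of mixing distributions yields locally uniform convergence of $f_{\hat G^*}$ and $f'_{\hat G^*}$ to $f^*$ and $f^{*\prime}$, because $\varphi$ and $\varphi'$ are bounded and continuous. Since $f^*>0$, this gives $\hat\psi\to\psi$ uniformly on compacts. Combined with the bound $K$ and tightness of $f_{G_0}$, we get $\int(\hat\psi-\psi)^2f_{G_0}\to0$. The main obstacle is twofold: establishing the uniform (in $n$) score bound for $\hat\psi$, and controlling the escape of $\hat H^*$ mass to infinity in the epi-convergence argument. For the latter, I would first try to show that $\hat H^*$ is tight, using the exponential-decay characterization. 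Failing that, I would note that mass escaping to infinity contributes nothing to $f$ on compacts, which is all the pointwise convergence needs.
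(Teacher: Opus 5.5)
Your architecture is the paper's: the same reduction $r(G_0,\hat\delta^M)-r(G_0,\delta^M)=\E_{f_0}[(\hat\psi-\psi)(\hat\psi+\psi-2\psi_0)]$, Cauchy--Schwarz, a uniform score bound, convergence of minimizers plus uniqueness, and dominated convergence. The one difference is that you run the variational step on mixing measures, whereas the paper works with marginals in weak $L^2$ and then inverts characteristic functions. But the steps you defer carry the proof, and the remedies you offer for them do not work.

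\emph{Score bound.} Exponential decay of $H^*,\hat H^*$ is only a numerical observation in the paper, so it cannot serve as a premise, least of all uniformly in $n$. A pointwise Huber comparison is also unavailable, since the paper's figures show the Mallows rule oscillating around the Huber rule in the tails. The risk route is the right one, but it needs two ingredients you do not supply.
(i) $\delta^M$ and $\hat\delta^M$ are posterior means, hence nondecreasing. So $\delta(\theta)$ is a median of $\delta(X)$ under $\NN(\theta,1)$, giving $R(\delta,\theta)\ge(\delta(\theta)-\theta)^2/2$ and hence $\sup_x|\delta(x)-x|\le\sqrt{2\sup_\theta R(\delta,\theta)}$ (Lemma~\ref{lem: universal}).
(ii) A risk bound uniform in $n$. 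The paper simply assumes this, but it follows from the Hodges--Lehmann saddle point. Comparing $\hat G^*$ with $(1-\epsilon)\hat G+\epsilon\delta_\theta$ gives $(1-\epsilon)r(\hat G,\hat\delta^M)+\epsilon R(\hat\delta^M,\theta)\le r(\hat G^*,\hat\delta^M)=1-I(\hat f^*)\le 1$. Hence $\sup_\theta R(\hat\delta^M,\theta)\le 1/\epsilon$ and $K=\sqrt{2/\epsilon}$.

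\emph{Epi-convergence.} Lower semicontinuity plus set convergence does not give epi-convergence. The recovery inequality needs $I(\Phi*((1-\epsilon)\hat G+\epsilon H))\to I(\Phi*((1-\epsilon)G_0+\epsilon H))$. The paper obtains this from the domination $(f')^2/f\le\int(\theta-x)^2\varphi(x-\theta)\,dG(\theta)$, whose integral is $1$, together with Pratt's lemma.

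\emph{Escaping mass.} Your fallback is wrong. Suppose a fraction $\eta>0$ of $\hat G^*$ drifts to infinity along a subsequence. Then $\hat f^*\to\varphi*G'$ locally uniformly with $G'(\RR)=1-\eta$, and this limit is not $f^*$. Uniqueness over $\GG_\epsilon(G_0)$ says nothing about it because $G'$ is infeasible, so convergence to the \emph{right} limit is exactly what fails, and tightness must be proved.

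The paper's weak-$L^2$ equi-coercivity does not settle this either. We have $(1-\epsilon)\hat f_n+\epsilon\varphi(\cdot-m)\weak(1-\epsilon)\hat f_n$ as $m\to\infty$, every Gaussian mixture has $I\le 1$, and the paper's tightness lemma presupposes that the limit is a density.

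One way to close it:
1. Because $I(f)=\sup_\phi\int f(2\phi'-\phi^2)$ over $\phi\in C_c^1(\RR)$, $I$ is lower semicontinuous under vague convergence of mixing measures. So $I(\varphi*G')\le\liminf I(\hat f^*)\le I(f^*)$, the last step by the recovery sequence.
2. Since $G'\ge(1-\epsilon)G_0$, the prior $G'+\eta\,\NN(0,\sigma^2)$ is feasible with information at most $I(\varphi*G')+\eta/(1+\sigma^2)$. Hence $I(\varphi*G')=I(f^*)$.
3. Apply the identity $I(\tfrac{f+g}{2})=\tfrac{I(f)+I(g)}{2}-\tfrac12\int\tfrac{fg}{f+g}(f'/f-g'/g)^2$ to $f^*$ and this competitor (the midpoint is feasible), and let $\sigma\to\infty$. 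This forces $\varphi*G'=(1-\eta)f^*$.
4. Then $I(f^*)=(1-\eta)I(f^*)$, so $\eta=0$, and uniqueness gives $\hat G^*\Rightarrow G^*$.
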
 
\subsection{Some simulation experience} \label{sec: simu}
To evaluate the cost of imposing restrictions on the prior of the Hodges-Lehmann type we consider
three examples in this section:  
\begin{description}
    \item[Gaussian $G_0$]  $G_0 \sim \NN (0,1)$.
    \item[Uniform $G_0$] $G_0 \sim U[-2,2]$.
    \item[Twopoint $G_0$] $G_0 \sim 0.5(\delta_{-2} + \delta_2)$.
\end{description}
For each of these settings we compute mean squared error (MSE) for each of the following 
decision rules: 
\begin{description}
    \item[MLE]  Minimax Rule.
    \item[$\delta^L$]  Best Linear Rule.
    \item[$\delta^B$]  Bayes Rule.
    \item[$\hat \delta_{\hat G}^B$]  Bayes Rule with NPMLE $\hat G$.
    \item[$\delta^H$]  Huber Modified Bayes Rule.
    \item[$\delta^M$]  Mallows Modified Bayes Rule.
    \item[$\hat \delta_{\hat G}^H$]  Huber Modified Bayes Rule with NPMLE $\hat G$.
    \item[$\hat \delta_{\hat G}^M$]  Mallows Modified Bayes Rule with NPMLE $\hat G$.
\end{description}
The last four rules are evaluated for four distinct values of $\epsilon \in \{ 0.05, 0.1, 0.2, 0.4\}$.  
All the simulations are based on 500 replications, for each compound decision problem.

The linear rules work reasonably well for the Gaussian and Uniform settings, however they
perform poorly in the two-point setting.  The cost of the Hodges-Lehmann restricted priors
is modest for small $\epsilon$, but not surprisingly grows substantially when $\epsilon$ is
larger.  With only $n = 100$ observations, the NPMLE rule, $\delta_{\hat G}^B$, is too variable,
but for the larger sample sizes it is nearly competitive with the (oracle) Bayes rules.

For each $G_0$ we can evaluate for any $\epsilon$ the corresponding worst case pointwise risk of each rule. 
The minimax rule achieves a worst case pointwise risk of one, while the Bayes rule typically has unbounded 
pointwise risk. For the Mallows rule, this can be evaluated by $\E_{\theta^*}[(\delta^M(X)-\theta^*)^2]$ 
where $\theta^*$ is any mass point of $H^*$ as discussed in Section \ref{sec: computation}. 
For the Huber rule, we can show that for all the  $G_0$ we considered, 
$\sup_\theta R(\delta^H, \theta) = 1+k_\epsilon^2$ with $k_\epsilon = \sup_{x} | (\log f_\epsilon^H(X))'|$ 
in which $f_\epsilon^H$ is the least favorable Huber density for the given $G_0$ and $\epsilon$. 
Details along with some simulation results appear in Appendix \ref{app.C}. 

\begin{table}[!tbp]
\begin{center}
\begin{tabular}{lrrrrrrrr}
\hline\hline
\multicolumn{1}{l}{}&\multicolumn{1}{c}{MLE}&\multicolumn{1}{c}{$\delta^L$}&\multicolumn{1}{c}{$ \delta^B$}&\multicolumn{1}{c}{$\delta_{\hat G}^B$}&\multicolumn{1}{c}{$\delta^H$}&\multicolumn{1}{c}{$\delta^M$}&\multicolumn{1}{c}{$\delta_{\hat G}^H$}&\multicolumn{1}{c}{$\delta_{\hat G}^M$}\tabularnewline
\hline
{\bfseries n =  100}&&&&&&&&\tabularnewline
~~$\epsilon = $ 0.05&$0.997$&$0.502$&$0.520$&$0.565$&$0.530$&$0.529$&$0.563$&$0.567$\tabularnewline
~~$\epsilon = $ 0.1&$1.001$&$0.496$&$0.511$&$0.560$&$0.548$&$0.548$&$0.576$&$0.580$\tabularnewline
~~$\epsilon = $ 0.2&$1.006$&$0.504$&$0.520$&$0.566$&$0.604$&$0.600$&$0.622$&$0.622$\tabularnewline
~~$\epsilon = $ 0.4&$0.996$&$0.495$&$0.509$&$0.557$&$0.685$&$0.682$&$0.691$&$0.692$\tabularnewline
\hline
{\bfseries n =  500}&&&&&&&&\tabularnewline
~~$\epsilon = $ 0.05&$1.001$&$0.501$&$0.503$&$0.518$&$0.529$&$0.529$&$0.533$&$0.534$\tabularnewline
~~$\epsilon = $ 0.1&$1.001$&$0.503$&$0.506$&$0.520$&$0.555$&$0.554$&$0.558$&$0.557$\tabularnewline
~~$\epsilon = $ 0.2&$1.001$&$0.500$&$0.503$&$0.518$&$0.601$&$0.597$&$0.602$&$0.600$\tabularnewline
~~$\epsilon = $ 0.4&$0.999$&$0.501$&$0.504$&$0.519$&$0.689$&$0.686$&$0.688$&$0.686$\tabularnewline
\hline
{\bfseries n =  1000}&&&&&&&&\tabularnewline
~~$\epsilon = $ 0.05&$0.999$&$0.499$&$0.501$&$0.510$&$0.528$&$0.527$&$0.531$&$0.531$\tabularnewline
~~$\epsilon = $ 0.1&$0.999$&$0.499$&$0.500$&$0.510$&$0.551$&$0.550$&$0.555$&$0.554$\tabularnewline
~~$\epsilon = $ 0.2&$1.006$&$0.502$&$0.504$&$0.513$&$0.603$&$0.599$&$0.603$&$0.601$\tabularnewline
~~$\epsilon = $ 0.4&$0.998$&$0.500$&$0.501$&$0.512$&$0.688$&$0.684$&$0.687$&$0.685$\tabularnewline
\hline
\end{tabular}
\caption{Mean squared error for several compound decision rules. 
	 Data generated as $Y = \mu + U$ with $\mu \sim \NN (0,1)$ and
	 $U \sim \NN (0,1)$, and initial prior is $G_0 = \NN (0,1)$.\label{tab.sim1a}}\end{center}
\end{table}

\begin{table}[!tbp]
\begin{center}
\begin{tabular}{lrrrrrrrr}
\hline\hline
\multicolumn{1}{l}{}&\multicolumn{1}{c}{MLE}&\multicolumn{1}{c}{$\delta^L$}&\multicolumn{1}{c}{$\delta^B$}&\multicolumn{1}{c}{$\delta_{\hat G}^B$}&\multicolumn{1}{c}{$\delta^H$}&\multicolumn{1}{c}{$\delta^M$}&\multicolumn{1}{c}{$\delta_{\hat G}^H$}&\multicolumn{1}{c}{$\delta_{\hat G}^M$}\tabularnewline
\hline
{\bfseries n =  100}&&&&&&&&\tabularnewline
~~$\epsilon = $ 0.05&$1.002$&$0.571$&$0.551$&$0.616$&$0.593$&$0.589$&$0.635$&$0.634$\tabularnewline
~~$\epsilon = $ 0.1&$0.992$&$0.566$&$0.546$&$0.614$&$0.618$&$0.612$&$0.650$&$0.649$\tabularnewline
~~$\epsilon = $ 0.2&$1.006$&$0.573$&$0.553$&$0.620$&$0.678$&$0.671$&$0.700$&$0.699$\tabularnewline
~~$\epsilon = $ 0.4&$1.011$&$0.571$&$0.552$&$0.619$&$0.765$&$0.762$&$0.773$&$0.778$\tabularnewline
\hline
{\bfseries n =  500}&&&&&&&&\tabularnewline
~~$\epsilon = $ 0.05&$1.004$&$0.570$&$0.550$&$0.568$&$0.592$&$0.588$&$0.598$&$0.595$\tabularnewline
~~$\epsilon = $ 0.1&$1.002$&$0.573$&$0.552$&$0.569$&$0.626$&$0.620$&$0.628$&$0.624$\tabularnewline
~~$\epsilon = $ 0.2&$1.000$&$0.572$&$0.551$&$0.568$&$0.675$&$0.669$&$0.676$&$0.672$\tabularnewline
~~$\epsilon = $ 0.4&$1.004$&$0.573$&$0.554$&$0.572$&$0.762$&$0.760$&$0.761$&$0.760$\tabularnewline
\hline
{\bfseries n =  1000}&&&&&&&&\tabularnewline
~~$\epsilon = $ 0.05&$0.999$&$0.571$&$0.551$&$0.562$&$0.592$&$0.588$&$0.597$&$0.594$\tabularnewline
~~$\epsilon = $ 0.1&$1.000$&$0.572$&$0.552$&$0.563$&$0.624$&$0.618$&$0.628$&$0.624$\tabularnewline
~~$\epsilon = $ 0.2&$0.999$&$0.570$&$0.550$&$0.561$&$0.674$&$0.668$&$0.676$&$0.671$\tabularnewline
~~$\epsilon = $ 0.4&$1.000$&$0.571$&$0.551$&$0.562$&$0.758$&$0.756$&$0.757$&$0.757$\tabularnewline
\hline
\end{tabular}
\caption{Mean squared error for several compound decision rules. 
	 Data generated as $Y = \mu + U$ with $\mu \sim U (-2,2)$ and
	 $U \sim \NN (0,1)$, and initial prior is $G_0 = U(-2,2)$.\label{tab.sim2a}}\end{center}
\end{table}

\begin{table}[!tbp]
\begin{center}
\begin{tabular}{lrrrrrrrr}
\hline\hline
\multicolumn{1}{l}{}&\multicolumn{1}{c}{MLE}&\multicolumn{1}{c}{$\delta^L$}&\multicolumn{1}{c}{$\delta^B$}&\multicolumn{1}{c}{$\delta_{\hat G}^B$}&\multicolumn{1}{c}{$\delta^H$}&\multicolumn{1}{c}{$\delta^M$}&\multicolumn{1}{c}{$\delta_{\hat G}^H$}&\multicolumn{1}{c}{$\delta_{\hat G}^M$}\tabularnewline
\hline
{\bfseries n =  100}&&&&&&&&\tabularnewline
~~$\epsilon = $ 0.05&$1.002$&$0.804$&$0.273$&$0.329$&$0.326$&$0.309$&$0.372$&$0.356$\tabularnewline
~~$\epsilon = $ 0.1&$0.992$&$0.800$&$0.281$&$0.336$&$0.380$&$0.341$&$0.423$&$0.386$\tabularnewline
~~$\epsilon = $ 0.2&$1.006$&$0.809$&$0.285$&$0.347$&$0.490$&$0.416$&$0.531$&$0.466$\tabularnewline
~~$\epsilon = $ 0.4&$1.011$&$0.811$&$0.277$&$0.341$&$0.678$&$0.589$&$0.703$&$0.626$\tabularnewline
\hline
{\bfseries n =  500}&&&&&&&&\tabularnewline
~~$\epsilon = $ 0.05&$1.004$&$0.803$&$0.280$&$0.294$&$0.329$&$0.316$&$0.345$&$0.330$\tabularnewline
~~$\epsilon = $ 0.1&$1.002$&$0.801$&$0.275$&$0.288$&$0.379$&$0.337$&$0.394$&$0.352$\tabularnewline
~~$\epsilon = $ 0.2&$1.000$&$0.800$&$0.275$&$0.289$&$0.481$&$0.409$&$0.495$&$0.425$\tabularnewline
~~$\epsilon = $ 0.4&$1.004$&$0.805$&$0.279$&$0.292$&$0.672$&$0.583$&$0.682$&$0.599$\tabularnewline
\hline
{\bfseries n =  1000}&&&&&&&&\tabularnewline
~~$\epsilon = $ 0.05&$0.999$&$0.800$&$0.272$&$0.280$&$0.324$&$0.310$&$0.333$&$0.318$\tabularnewline
~~$\epsilon = $ 0.1&$1.000$&$0.799$&$0.274$&$0.281$&$0.377$&$0.336$&$0.387$&$0.345$\tabularnewline
~~$\epsilon = $ 0.2&$0.999$&$0.799$&$0.273$&$0.281$&$0.480$&$0.408$&$0.490$&$0.418$\tabularnewline
~~$\epsilon = $ 0.4&$1.000$&$0.799$&$0.271$&$0.279$&$0.667$&$0.579$&$0.674$&$0.590$\tabularnewline
\hline
\end{tabular}
\caption{Mean squared error for several compound decision rules. 
	 Data generated as $Y = \mu + U$ with $\mu \sim 0.5 (\delta_{-2} + \delta_2)$ and
	 $U \sim \NN (0,1)$, and initial prior is $G_0 = 0.5 (\delta_{-2} + \delta_2)$.\label{tab.sim3a}}\end{center}
\end{table}

\section{Robustified Gaussian Likelihoods}

Rather than robustifying the  prior an alternative strategy is to robustify the
likelihood. 
We will consider two variants of this: the first
following \cite{huber64} and the second following \cite{mallows78}.

The classical procedure of Huber for estimating a location parameter is easily adapted
to the Gaussian sequence compound decision problem.  In place of the Gaussian likelihood
in the NPMLE problem we simply insert the Huber log likelihood with density,
\[
\varphi (u) = 
\begin{cases}
    (1-\epsilon) \varphi (k) \exp(-k(u-k)) & u > k\\
    (1-\epsilon)\varphi(u) & |u| \leq k\\
    (1-\epsilon) \varphi (k)\exp(k(u+k)) & u < -k
\end{cases}
\]
where $\epsilon$ and $k$ are linked by $2 \varphi(k)/k - 2 \Phi(-k) = \epsilon/(1-\epsilon)$.
This density is least favorable, that is has minimal Fisher information for location, 
in the contamination model,
\[
\Psi_\epsilon= \{ \Psi =  (1 - \epsilon) \Phi + \epsilon H \},
\]
over all symmetric distributions $H$.  When $\epsilon = 1/2$ the least favorable
Huber distribution is Laplace, or double exponential, and can be viewed as least favorable
against asymmetric noise as well as symmetric.

\cite{mallows78} proposes to consider minimizing $I(\Phi * G)$ over $\GG$, the set of all 
distributions with mass $1 - \epsilon$ at zero, provides an alternative to the Huber
contamination model. Rather than assuming iid innovations each arising from the Huber
mixture model, Mallows considers an additive outlier model in which with probability $\epsilon$
innovations are standard Gaussian, but occasionally are generated by the convolution $\Phi * H$. 

\begin{figure}
  \begin{center}
          \resizebox{0.6 \textwidth}{!}{{\includegraphics{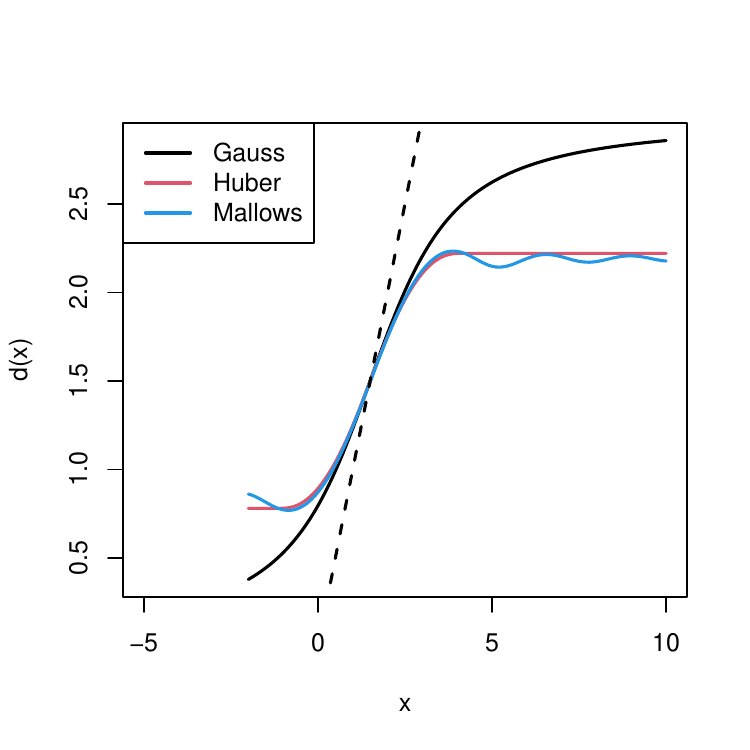}}}
  \end{center} 
  \caption{The Huber and Mallows decision rules are contrasted with the Gaussian rule
  in a setting with $G = U[0,3]$.  The heavier tail behavior of the Huber and Mallows
  base distribution, $\varphi$ results in more aggressive shrinkage with extreme 
  observations discounted as the consequence of noise rather than signal.  The Huber
  and Mallows rules both set $\epsilon = 0.1$ for this figure.}
  \label{fig.rules}
\end{figure}

In Figure \ref{fig.rules} we contrast the Huber and Mallows decision rules with the traditional
Gaussian rule.  The true mixing distribution, $G$, is chosen to be $U[0,3]$ so the Gaussian rule is
itself somewhat curved, not the linear rule we would expect were $G$ itself Gaussian.  In contrast
the Huber and Mallows rules impose a more aggressive form of shrinkage.  With Gaussian $\varphi$
extreme observations can be confidently attributed to signal, while the heavier tailed $\varphi$
of the Huber and Mallows rules tend to attribute such observations to noise.  As we have seen
previously, the Mallows rule oscillates around the Huber rule in the tails, but otherwise their
behavior is quite similar.

When the usual Gaussian $\varphi$ is replaced by either the Huber or Mallows alternative in the
nonparametric maximum likelihood estimation of $G$ solutions are accordingly more concentrated
with fewer extreme mass points.  This effect accentuates the more aggressive shrinkage effect
observed in Figure \ref{fig.rules}.

Both the Mallows and Huber least favorable contamination models offer principled alternatives to the
strictly Gaussian noise model.  They preserve the convexity of the underlying NPMLE 
problem and therefore can be easily implemented in software.  In the next section we
compare performance of several variants of these procedures for a few simulated compound decision
settings.

\subsection{Some Simulation Experience}

We consider the compound decision problem with observations generated from,
\[
Y_i = \theta_i + U_i, \quad i = 1, \dots, n,
\]
with $\theta_i$ and $U_i$ independent and each generated iidly from $G$ and $\Psi$ respectively.
There are two choices of $G$:  Either $G \sim U[0,3]$ or $G \sim 0.9 \delta_0 + 0.1 \delta_3$.
And three choices of $\Psi$:  $\Psi \sim 0.8 \Phi + 0.2 \Phi(\cdot / 3)$, $\Psi \sim \text{Laplace}$
and $\Psi = \Phi$, which we label Tukey, Laplace and Gauss respectively.

In Table \ref{tab.sim} we compare mean squared error performance of ten options with an infeasible oracle
procedure that ``knows'' both the $\Psi$ and $G$ distributions.  The experiment has 500 replications each
with sample size $n = 500$.  The competing feasible decision rules are:  GLmix,  the Gaussian NPMLE;
Laplace,  the Laplacian NPMLE; HLmix$(\epsilon)$, the Huber NPMLE; MLmix$(\epsilon)$,  the Mallows NPMLE, 
with $\epsilon \in \{ 0.20, 0.10, 0.05, 0.025 \}$.

\begin{table}[!tbp]
\caption{Mean Squared Error of Several Compound Decision Rules\label{tab.sim}} 
\begin{center}
\begin{tabular}{lrrrcrrr}
\hline\hline
\multicolumn{1}{l}{\bfseries }&\multicolumn{3}{c}{\bfseries $G \sim 0.9 \delta_0 + 0.1 \delta_3$}&\multicolumn{1}{c}{\bfseries }&\multicolumn{3}{c}{\bfseries $G  \sim U[0,3]$}\tabularnewline
\cline{2-4} \cline{6-8}
\multicolumn{1}{l}{}&\multicolumn{1}{c}{Tukey}&\multicolumn{1}{c}{Laplace}&\multicolumn{1}{c}{Gauss}&\multicolumn{1}{c}{}&\multicolumn{1}{c}{Tukey}&\multicolumn{1}{c}{Laplace}&\multicolumn{1}{c}{Gauss}\tabularnewline
\hline
Oracle&$0.496$&$0.333$&$0.338$&&$0.470$&$0.380$&$0.424$\tabularnewline
GLmix&$0.960$&$0.416$&$0.352$&&$1.004$&$0.529$&$0.440$\tabularnewline
LLmix&$0.688$&$0.398$&$0.457$&&$0.631$&$0.476$&$0.482$\tabularnewline
HLmix(0.2)&$0.696$&$0.584$&$0.658$&&$0.659$&$0.708$&$0.672$\tabularnewline
HLmix(0.1)&$0.693$&$0.400$&$0.429$&&$0.689$&$0.561$&$0.521$\tabularnewline
HLmix(0.05)&$0.779$&$0.364$&$0.371$&&$0.793$&$0.495$&$0.463$\tabularnewline
HLmix(0.025)&$0.857$&$0.370$&$0.354$&&$0.885$&$0.487$&$0.443$\tabularnewline
MLmix(0.2)&$0.674$&$0.567$&$0.636$&&$0.612$&$0.676$&$0.613$\tabularnewline
MLmix(0.1)&$0.664$&$0.390$&$0.421$&&$0.654$&$0.545$&$0.502$\tabularnewline
MLmix(0.05)&$0.751$&$0.357$&$0.367$&&$0.759$&$0.485$&$0.458$\tabularnewline
MLmix(0.025)&$0.837$&$0.363$&$0.352$&&$0.861$&$0.477$&$0.441$\tabularnewline
\hline
\end{tabular}\end{center}
\end{table}

It is evident from the table that the Gaussian NPMLE, GLmix, performs best when
the noise distribution is actually Gaussian; however, when $\Psi \neq \Phi$
it pays to consider one of the alternatives.  The Mallows NPMLE procedures seem
to perform slightly better than the corresponding Huber methods, while the Laplace
NPMLE, LLmix, which can be regarded as a ``median-type'' estimator performs surprisingly
well over all the experimental settings.  

\section{conclusion}
We have considered two distinct strategies for robustifying empirical Bayes decision rules for the
Gaussian sequence model.  In the first motivated by the seminal paper of Hodges and Lehmann we would
like protection against deviations from an initial Bayes prior.  In the second we seek 
protection against non-Gaussian behavior in the noise distribution. Both strategies rely on
the classical robustness proposals of \cite{huber64} and \cite{mallows78}.  Some combination of
the two strategies is obviously possible, but choice of tuning parameters remains a delicate issue.

\appendix
\section{Proof of Proposition 3.1} \label{app.A}
\begin{proof}
For all $F \in \mathcal{F}_\epsilon$, we can represent its density as $f(x) = (1-\epsilon) f_0(x) + \epsilon h(x)$.	
When $f_0$ is symmetric around 0, which is our case here, then the optimal $f^*$ will necessarily have $h$ symmetric and 
hence $f^*$ itself symmetric around 0. Let $u(x) = \sqrt{f(x)}$, then $I(F) = 4 \int (u(x)')^2 dx$ 
with the constraint $u^2(x) \geq (1-\epsilon) f_0(x)$ and $\int u^2(x)dx = 1$. The Lagrangean is then 
	\[
	L(u) = 4 \int (u'(x))^2 dx + \lambda \Big (\int u(x)^2 dx  -1\Big ) + 
	\int \eta(x) \{(1-\epsilon) f_0(x) - u(x)^2\}dx.
	\]
Denote $\psi(x) = \sqrt{(1-\epsilon)f_0(x)}$, then the KKT conditions for an optimal solution 
$u^*(x) = \sqrt{f^*(x)}$ are:
	\begin{align*}
		(u^*(x))'' &= \frac{\lambda - \eta(x)}{4} u^*(x)\\
		\eta(x) (u^*(x) - \psi(x)) & = 0\\
		u^*(x)& \geq \psi(x)\\
		\eta(x)& \geq 0\\
		\lambda & \geq 0.
	\end{align*}
Define the two sets: 
\begin{align*}
	\mathcal{C} &= \{x: u^*(x) = \psi(x)\}\\
	\mathcal{F} & = \{x: u^*(x) > \psi(x)\}
\end{align*} 
On set $\mathcal{F}$, $\eta(x) =0$, hence $(u^*(x))'' = \frac{\lambda}{4} u^*(x)$. 
This differential equation has solution on each connected component of $\mathcal{F}$ of the form, 
\[
u^*(x) = A \exp(kx/2) + B \exp(-kx/2),
\]
with $k = \sqrt{\lambda}$. For any closed subset of $\mathcal{F}$, say $|x| < c$ 
both exponential pieces are involved and $u^*(x) = A \cosh(kx/2)$. 
For any subset in the tail we need to kill one of the exponential terms to ensure integrability. 
Hence for the right tail component, with $x > b$, $u^*(x) = \sqrt{(1-\epsilon) f_0(b)}\exp(-k(x-b)/2)$ 
and for the left tail component with $x < -b$, $u^*(x) = \sqrt{(1-\epsilon) f_0(-b)} \exp(k(x+b)/2)$. 
The remaining task is to find $(b, c, k, A)$ to guarantee that $f^*$ is in the feasible set 
$\mathcal{F}_\epsilon$. Continuity of $u^*(x)$ at the cut off points $\pm c$, 
implies $u^*(c) = \psi(c)$ which leads to, 
\begin{equation}\label{id1}
A = \frac{\psi(c)}{\cosh(kc/2)}.
\end{equation} 
We also require $(u^*)'(c) = \psi'(c)$ and $(u^*)'(b) = \psi'(b)$, 
These restrictions imply, 
\begin{equation}\label{id2}
	k = -(\log f_0)'(b)
\end{equation}
\begin{equation} \label{id3}
k \cdot \tanh(kc/2)  = (\log f_0)'(c).
\end{equation}
Finally, $f^*$ has to integrate to 1, which implies,
\begin{equation} \label{id4}
\int_{-c}^c A^2 \cosh^2(kx/2) + 2\int_c^b (1-\epsilon)f_0(x) dx + \frac{2 (1-\epsilon) f_0(b)}{k} = 1.
\end{equation} 
Solving \eqref{id1} to \eqref{id4} yields the constants, $(b,c,k,A)$. 
\end{proof} 

\section{Proof of Theorem 3.1}\label{app.B}
In this appendix we demonstrate that the Bayes risk of the empirical Mallows decision rule, $\delta_{\hat G}^M$
converges that of the oracle Mallows rule, $\delta_G^M$, as $n \to \infty$ (Theorem \ref{thm: EBMallows}). For notation convenience, we use $I(f)$ to refere to the Fisher information of the distribution $F$ whose density is $f$. That is, $I(f) = \int \frac{(f')^2}{f}$. 
Let $f^M$ denote  the minimizer of the Mallows problem:
\[
\min I(f) \quad \text{s.t.} f \in M, f \in \{ \varphi * P, P\geq (1-\epsilon)G_0\}
\]
with $M := \{ \varphi * P: P \in \mathcal{P}(\RR)\}$ where $\mathcal{P}(\RR)$ is all probability measure supported on $\RR$. 
Let $\hat f^M_n$ be the minimizer of the empirical Bayes version of the Mallows problem 
\[
\min I(f) \quad \text{s.t. } f \in M, f \in \{ \varphi * P, P\geq (1-\epsilon)\hat G_n\}
\]
where $\hat G_n$ denotes the NPMLE of $G$. Denote $f_0 = \varphi * G_0$ and $\hat f_n = \varphi * \hat G_n$.
The Oracle robust decision rule is $\delta^M(x) = x + (\log f^M(x))'$ and its 
empirical Bayes counter-part is $\hat \delta^M(x) = x+ (\log \hat f_n^M(x))'$. 
Denote the respective score functions by $g^M = (\log f^M)'$ and $\hat g^M_n = (\log \hat f_n^M)'$. We also know $f^M = \varphi * P^M$ and $\hat f^M_n = \varphi * P_n^M$ such that $P_n^M \in \{P: P = (1-\epsilon)\hat G_n + \epsilon H\}$ and $P^M \in \{ P: P = (1-\epsilon) G_0 + \epsilon H\}$. 

The result of Theorem \ref{thm: EBMallows} requires the following assumption. 
\begin{assumption}  \label{ass}
As $n \to \infty$, $\hat f_n \to f_0$ strongly in $L^2$ and $I(f_0)<\infty$. 
\end{assumption}
Hellinger convergence is established in \cite{jz} 
under the assumptions they make on $G_0$. The strong convergence in $L^2$ is a consequence of Lemma \ref{lem:HellingerL2}. The assumption 
$I(f_0) < \infty$ satisfies automatically since $f_0 \in M$.  It provides a bound for $I(f^M)$ since 
$f_0$ is in the feasible set of Mallows problem, hence $I(f^M) \leq I(f_0)<\infty$. 

\begin{proof} [Theorem \ref{thm: EBMallows}]
	Suppose $X$ has density $f_0$, let the optimal Bayes rule under prior $G_0$ 
	be $\delta^B(x)$. Denote the loss in Bayes risk of any decision rule $\delta$ in 
	comparison to the optimal Bayes rule be, 
	\[
	s(\delta, G_0) = r(G_0, \delta) - r(G_0, \delta^B).
	\]
We will show	under Assumption \ref{ass}, as $n \to \infty$, 
	\[
	s(\hat \delta^M(X), G_0) - s(\delta^M(X), G_0) \to 0.
	\]
Consequently, 
	\[
	r(G_0, \hat \delta^M) - r(G_0,\delta^M) \to 0.
	\]

	Let $g^B(x)=\delta^B(x) - x = (\log f_0)'$. By Stein's lemma, we have 
	\begin{align*}
	s(\hat \delta_n^M(X), G_0) - s(\delta^M(X), G_0)  &=  
	\E_{f_0}[(\hat g_n^{M}(X) - g^B(X))^2]  - \E_{f_0}[(g^{M}(X) - g^B(X))^2] \\
	& = \E_{f_0}[(\hat g_n^M -g^M) (\hat g_n^M + g^M -2g^B)].
	\end{align*}
	By Cauchy-Schwartz, 
	\[
	|s(\hat \delta^M(X), G_0) - s(\delta^M(X), G_0)| \leq 
	\sqrt{\E_{f_0}[(\hat g_n^M - g^M)^2] } \sqrt{\E_{f_0} [(\hat g_n^M + g^M - 2g^B)^2]}.
	\]
	Theorem \ref{thm: scorebound} shows that the first term converges to 0 as $n \to \infty$. The second term 
	\[
	\sqrt{\E_{f_0} [(\hat g_n^M + g^M - 2g_0)^2]} \leq \|\hat g_n^M\|_{L^2(f_0)} + 
	\|g^M\|_{L^2(f_0)} + 2 \|g^B\|_{L^2(f_0)} < C_1
	\]
	with $C_1$ a bounded constant. The last displayed result holds due to 
	Lemma \ref{lem: universal} and the fact that $\|g^B\|_{L^2(f_0)} = I(f_0) <\infty$.
	\end{proof} 

\begin{thm}\label{thm: scorebound}
	Under Assumption \ref{ass}, we have as $n \to \infty$, 
	\[
	\mathbb{E}_{f_0}[(\hat g^M_n - g^M)^2] \to 0
	\]
\end{thm}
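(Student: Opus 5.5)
The approach is a stability argument for the convex Mallows problem: show that $\hat f_n^M \to f^M$ in a sense strong enough to control the scores, and then pass to $\E_{f_0}$ using uniform boundedness of the scores.

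\textbf{Step 1: a uniform score bound.} We first establish, as in Lemma~\ref{lem: universal}, that any Mallows least favorable density $\varphi * P$ with $P=(1-\epsilon)G+\epsilon H$ optimal has a score that is uniformly bounded, $\sup_x |(\log \varphi*P)'(x)| \le K_\epsilon$. This comes from the first-order (KKT) characterization of the optimum. The pointwise risk of the posterior mean rule is bounded by $1+t$, and $t$ depends only on $\epsilon$. Since $\E_\theta[(g(X))^2]\lesssim R(\delta,\theta)+1$, this should give a bound on $g$ independent of $G_0$ or $\hat G_n$. In particular $|\hat g_n^M - g^M|\le 2K_\epsilon$ everywhere. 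Hence it suffices to show $\hat g_n^M \to g^M$ in $f_0$-probability, or pointwise almost everywhere; dominated convergence then finishes the proof.

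\textbf{Step 2: convergence of the minimizers.} We use epi-convergence (variational analysis) of the problems $\min\{I(\varphi*P): P\ge (1-\epsilon)\hat G_n\}$ to the oracle problem. Strong $L^2$ convergence $\hat f_n\to f_0$ (Assumption~\ref{ass}) implies weak convergence $\hat G_n\to G_0$, by deconvolution and the injectivity of the Gaussian characteristic function.

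\emph{Liminf inequality.} If $P_n\ge(1-\epsilon)\hat G_n$ and $\varphi*P_n\to \varphi*P$, then $P\ge (1-\epsilon)G_0$. This holds by passing to the limit in $P_n-(1-\epsilon)\hat G_n\ge 0$ tested against nonnegative continuous functions. Lower semicontinuity of $I$ (it is convex and lsc, written as $\sup_\psi \{-2\int \psi' f - \int \psi^2 f\}$) gives $I(\varphi*P)\le\liminf I(\varphi*P_n)$.

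\emph{Recovery sequence.} Given the oracle optimum $P^M=(1-\epsilon)G_0+\epsilon H^M$, take $P_n=(1-\epsilon)\hat G_n+\epsilon H^M$. Then $\varphi*P_n\to f^M$, and $I(\varphi*P_n)\to I(f^M)$. The latter follows because $\varphi*\hat G_n$ has Fisher information controlled through the smooth Gaussian kernel, and $I$ is continuous along such Gaussian-smoothed weakly convergent mixtures bounded below, locally.

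Tightness of $\{P_n^M\}$ follows from the uniform score bound together with tightness of $\hat G_n$. So every subsequence has a weakly convergent sub-subsequence whose limit is a minimizer of the oracle problem. The oracle minimizer is unique by strict convexity of $f\mapsto I(f)$ on the convex feasible set. Hence $P_n^M\to P^M$ weakly, and $I(\hat f_n^M)\to I(f^M)$.

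\textbf{Step 3: from densities to scores.} Weak convergence $P_n^M\to P^M$ gives locally uniform convergence of $\varphi*P_n^M$ and its derivative to $f^M$ and $(f^M)'$, because the Gaussian kernel and its derivative are bounded and continuous. Since $f^M>0$ everywhere, the ratio $\hat g_n^M=(\hat f_n^M)'/\hat f_n^M\to g^M$ uniformly on compacts. Combining this with the uniform bound $2K_\epsilon$ and the tightness of $f_0$, split $\E_{f_0}[(\hat g_n^M-g^M)^2]$ into a compact part, which vanishes, and a tail part, which is at most $4K_\epsilon^2 P_{f_0}(|X|>T)$ and is small for large $T$.

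\textbf{Main obstacle.} The hardest step is Step 2, specifically the recovery sequence (continuity of $I$ along $\hat G_n\to G_0$) and ensuring that the limit of the $P_n^M$ loses no mass. If continuity of $I$ is delicate, I would instead use the convex-analytic fact that, by Brown's identity, $1-I(\varphi*P)$ equals the Bayes risk, which is an infimum of affine functionals of $P$. This gives upper semicontinuity of $-I$ directly, and combining it with lsc yields continuity along weakly convergent sequences with uniformly bounded scores.
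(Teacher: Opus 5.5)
Your outline has the same architecture as the paper's proof. You have the score bound of Lemma~\ref{lem: universal}, $\Gamma$-convergence with the same liminf argument and the same recovery sequence $(1-\epsilon)\hat G_n+\epsilon H^M$, uniqueness by strict convexity, and then pointwise score convergence plus dominated convergence. The difference is that you work with weak convergence of mixing measures rather than weak $L^2$ convergence of densities, which would let you bypass Theorem~\ref{thm:weakL2-to-hellinger}. However, both load-bearing steps of your Step~2 are unsupported.

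\textbf{Tightness.} The claim that tightness of $\{P_n^M\}$ follows from the uniform score bound and tightness of $\hat G_n$ is false. Here is a counterexample.
\begin{itemize}
\item Take $\hat G_n=\delta_0$, let $L(\theta)=\tfrac K2 e^{-K|\theta|}$, and set $P_n=(1-\epsilon)\delta_0+\tfrac{\epsilon}{2}L+\tfrac{\epsilon}{2}L(\cdot-n)$.
\item The score of $\varphi*L$ is a posterior average of $L'/L$, so it lies in $[-K,K]$.
\item The score of a mixture is a convex combination of the component scores. The weight on the $\varphi$ component is $O(e^{-x^2/2+K|x|})$, which neutralizes its score $-x$.
\item Hence $\sup_n\|(\log\varphi*P_n)'\|_\infty<\infty$, and therefore the pointwise risks are uniformly bounded too. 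Yet mass $\epsilon/2$ escapes to $+\infty$.
\end{itemize}
So no-mass-loss must come from \emph{optimality}, and you give no such argument. One possible route runs as follows.
\begin{itemize}
\item Suppose $P_n^M\to P$ vaguely with $P(\RR)=m<1$. Then $P\ge(1-\epsilon)G_0$.
\item By Fatou and the recovery sequence, $J(P):=\int((\varphi*P)')^2/(\varphi*P)\le\liminf I(\hat f^M_n)\le I(f^M)$.
\item The right derivative of $\beta\mapsto J(P+\beta\delta_\theta)$ at $0$ is $1-R(\delta_P,\theta)$, by the Stein computation behind Brown's identity. This is negative for some $\theta$, because $X$ is admissible and $\delta_P\neq X$.
\item Spreading the remaining mass thinly, e.g.\ $(1-m-\beta)U[-T,T]$ with $T$ large, costs arbitrarily little Fisher information, since $J$ is convex and $1$-homogeneous and hence subadditive.
\item This yields a feasible probability with Fisher information $<I(f^M)$, which is a contradiction.
\end{itemize}
The paper's Lemma~\ref{lemma: equicoercive} also passes over this point. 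The ball $\|f\|_2\le(2\pi)^{-1/4}$ does not keep weak $L^2$ limits inside $\mathcal{T}$ (e.g.\ $\varphi(\cdot-m)\weak 0$), because $\int f=1$ is not weakly closed in $L^2(\RR)$. So an argument of this kind is needed in either route.

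\textbf{Continuity of $I$ along the recovery sequence.} Your justification is only a gesture, and the fallback is logically wrong.
\begin{itemize}
\item Writing $1-I(\varphi*P)=\inf_\delta r(P,\delta)$ as an infimum of affine, weakly continuous functionals makes $1-I$ upper semicontinuous. That means $I$ is lower semicontinuous, which is the same inequality the variational formula already gives. Two lower-semicontinuity statements never give continuity.
\item You have not shown that the recovery sequence has uniformly bounded scores; its $\varphi*\hat G_n$ part has score $\approx -x$.
\item What is actually needed is the upper bound $\limsup I(u_n)\le I(f^M)$. The paper gets it from the Cauchy--Schwarz domination $((\varphi*G)')^2/(\varphi*G)\le D_G(x):=\int(\theta-x)^2\varphi(x-\theta)\,dG(\theta)$, where $\int D_G=1$ for every probability $G$, together with Pratt's lemma (Lemma~\ref{lem: HellingertoFI} and joint convexity of $(a,b)\mapsto a^2/b$). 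This in fact gives $I(\varphi*P_n)\to I(\varphi*P)$ whenever $P_n\Rightarrow P$.
\end{itemize}

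\textbf{Step~1 (minor).} Bounding $\E_\theta[g(X)^2]$ for every $\theta$ controls $\|g\|_\infty$ only through the one-sided Lipschitz property $g'\ge -1$, i.e.\ monotonicity of the posterior mean. That property is what the paper's median argument uses. Your remark that the risk bound can be made to depend only on $\epsilon$ is right: the saddle-point inequality gives $\sup_\theta R(\delta^M,\theta)\le 1/\epsilon$. This usefully justifies a hypothesis that Lemma~\ref{lem: universal} simply assumes.
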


\begin{proof}
Consider the space
\[
\mathcal{T}:=\Big\{f\in L^2(\RR): f\ge 0\ \text{a.e. and } \int_{\RR} f(x)\,dx=1\Big\}.
\]
and we equip $\mathcal{T}$ with the subspace topology induced by the weak convergence of $L^2(\RR)$:
a sequence $(f_n)\subset\mathcal{T}$ converges to $f\in\mathcal{T}$ if and only if $f_n\weak f$ in $L^2(\RR)$.\\

Fix $\epsilon \in (0,1)$ and define 
\[
K_n:=\{\, f\in \mathcal{T}:\ \exists\,P\in\mathcal{P}(\RR)\ \text{with } f=\varphi*P\ \text{and}\ P\ge (1-\epsilon)\hat G_n\,\},
\]
\[
K:=\{\, f\in \mathcal{T}:\ \exists\,P\in\mathcal{P}(\RR)\ \text{with } f=\varphi*P\ \text{and}\ P\ge (1-\epsilon)G_0\,\}.
\]
	Further define the functionals:
	\[
	F_n(f):=I(f)+\iota_M(f)+\iota_{K_n}(f),\qquad
	F(f):=I(f)+\iota_M(f)+\iota_{K}(f),
	\]
	where $\iota_A$ is the indicator function of the set $A$, taking value $+\infty$ when violated. 
	
	The remainder of the proof is structured as follows:
	\begin{enumerate}
		\item We first establish that $F_n$ $\Gamma$-converge to $F$  in $\mathcal{T}$.  
		    (Lemma \ref{lemma: gammaconvergence}) and $F_n$ is equi-coercive (Lemma \ref{lemma: equicoercive})
		\item Then by Lemma \ref{lemma: uniqueMallow}, the minimizer of $F$ is unique, 
		    we can invoke Corollary 7.24 in \cite{dal1993introduction} and obtain
		    $\hat f^M_n \weak f^M$ weakly in $L^2$ and $I(\hat f^M_n) \to I(f^M)$. 
		\item Using the fact that $\hat f^M \in M$ and $f^M \in M$, we further show $\hat P_n^M \Rightarrow P^M$ and $H(\hat f^M_n, f^M) \to 0$ as $n \to \infty$. (Theorem \ref{thm:weakL2-to-hellinger})
	\end{enumerate}
	Finally,
	Weak convergence $\hat P_n^M \Rightarrow P^M$ implies pointwise convergence for the density $\hat f^M_n$ and its derivative, hence 
	\[
	\hat g_n^M(x) \to g^M(x), f_0-a.e. x
	\]
	Using the uniform bound (proven in Lemma \ref{lem: universal}), we get 
	\[
	|\hat g_n^M(x) - g^M(x)|^2 \leq 4 C^2 <\infty 
	\]
	so by dominated convergence, 
	\[
	\int (\hat g_n^M - g^M)^2 f_0 \to 0
	\]
\end{proof}

\begin{lemma} \label{lemma: uniqueMallow}
	Let $K:=\{\, f\in \mathcal{T}:\ \exists\,P\in\mathcal{P}(\RR)\ \text{with } f=\varphi*P\ \text{and}\ P\ge (1-\epsilon)G_0\,\}$, then 
	$\min_{f\in M \cap K} I(f)$ has a unique minimizer. 
\end{lemma}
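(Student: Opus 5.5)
We prove existence and uniqueness separately: existence by the direct method of the calculus of variations, uniqueness from the strict convexity of Fisher information.

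\emph{Existence.} The feasible set $M\cap K$ is nonempty, since $f_0=\varphi*G_0$ belongs to it (take $P=G_0\ge(1-\epsilon)G_0$). Hence the infimum $m:=\inf_{M\cap K}I(f)$ satisfies $m\le I(f_0)<\infty$.

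Take a minimizing sequence $f_j=\varphi*P_j$ with $P_j=(1-\epsilon)G_0+\epsilon H_j$. The family $\{H_j\}$ need not be tight, because mass may escape to infinity. We therefore pass to a vague limit: along a subsequence, $H_j\to \epsilon' H$ vaguely, where $H$ is a probability measure and $\epsilon'\in[0,1]$. The limiting candidate is
\[
\tilde P=(1-\epsilon)G_0+\epsilon\epsilon' H ,
\]
which has total mass $1-\epsilon(1-\epsilon')$. If $\epsilon'<1$ we renormalize. We argue that the missing mass escaping to infinity can only increase Fisher information asymptotically. Two facts support this. First, Fisher information is superadditive across asymptotically separated components. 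Second, by the scaling $I(cf)=cI(f)$ for subprobability densities, the escaping piece has information at least $\epsilon(1-\epsilon')$ times the minimal information of a Gaussian convolution, which is positive.

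The cleaner route is to note that the escaped mass can be replaced by mass placed far away but at a fixed location. This replacement shows that the limiting value is attained by some feasible $P$. Indeed, $I$ is lower semicontinuous under vague convergence of $\varphi*P_j$: the densities converge locally uniformly together with their derivatives, since convolution with $\varphi$ is smoothing, and $(f')^2/f$ is jointly convex and lower semicontinuous, so Fatou applies.

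This mass-escape issue is the main obstacle. My first attempt would be to show that a minimizing sequence can be chosen tight. The idea is to truncate $H_j$ to $[-R,R]$ and reassign the tail mass to the points $\pm R$, checking that this does not raise $I$ much as $R\to\infty$. If that fails, I would rely on the equi-coercivity already claimed in Lemma \ref{lemma: equicoercive}.

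\emph{Uniqueness.} Write $I(f)=\int (f')^2/f$. The integrand $(a,b)\mapsto a^2/b$ is convex on $\RR\times(0,\infty)$. It is strictly convex except along rays, that is, except when $a_1/b_1=a_2/b_2$. The feasible set $M\cap K$ is convex, because it is the image under the linear map $P\mapsto\varphi*P$ of the convex set $\{P:P\ge(1-\epsilon)G_0\}$.

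Suppose $f_1\ne f_2$ are both minimizers, and set $\bar f=(f_1+f_2)/2$. Then $I(\bar f)\le m$, so $I(\bar f)=m$ and equality holds in the convexity inequality pointwise almost everywhere. Equality forces $f_1'/f_1=f_2'/f_2$ a.e., so $(\log f_1-\log f_2)'=0$. Since both functions are smooth and positive (convolutions of $\varphi$ with probability measures), $f_1=cf_2$. Both integrate to one, so $c=1$, contradicting $f_1\ne f_2$. Therefore the minimizer $f^M$ is unique.

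Uniqueness of $P^M$ then follows by deconvolution: the Fourier transform of $\varphi$ never vanishes, so $f^M$ determines $P^M$. This identification is used later in Theorem \ref{thm:weakL2-to-hellinger}.
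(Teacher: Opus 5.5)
Your uniqueness argument is correct and is the paper's own proof. The paper's proof consists only of noting that $M\cap K$ is convex and that $I$ is ``strictly convex.'' Your version is the precise form of that claim. Since $I(cf)=cI(f)$, $I$ is strictly convex only modulo scaling. Your equality-case analysis of $(a,b)\mapsto a^2/b$ is exactly what the paper's one line leaves implicit: it forces $f_1'/f_1=f_2'/f_2$, hence $f_1=cf_2$, hence $c=1$ by normalization.

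The paper does not prove existence in this lemma. Your existence half is therefore extra content, and it has a genuine gap at precisely the mass-escape step you flag. Write $\tilde f=\varphi*\tilde P$, which has mass $c=1-\epsilon(1-\epsilon')$.

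(i) The escaping piece need not carry information bounded away from zero. In fact $\inf_Q I(\varphi*Q)=0$, e.g.\ $I(\varphi*\NN(0,\sigma^2))=1/(1+\sigma^2)$, and escaping mass can spread out in just this way. So superadditivity adds nothing to the Fatou bound $I(\tilde f)\le m$.

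(ii) With only that bound, none of your completions of $\tilde P$ is shown to stay at level $m$. Renormalizing gives $I(\tilde f)/c$, which may exceed $m$. Putting the missing mass at one distant point, or at $\pm R$ after truncation, raises the information by roughly $(1-c)I(\varphi)=1-c$, not by a vanishing amount. Spreading it as $(1-c)\NN(0,\sigma^2)$ only yields the upper bound $I(\tilde f)+(1-c)/(1+\sigma^2)$. That forces $I(\tilde f)=m$ but never certifies a feasible value $\le m$.

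(iii) Falling back on Lemma \ref{lemma: equicoercive} does not help. Weak compactness of the $L^2$ ball does not keep limits inside $\mathcal{T}$: $\varphi(\cdot-n)\weak 0\notin\mathcal{T}$, so mass loss is not ruled out there either.

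The missing idea is that a defective limit can be \emph{strictly} improved. Let $g=(\log\tilde f)'$, and let $\tilde\delta(x)=x+g(x)$ be the posterior mean under the proper prior $\tilde P/c$. A first-variation computation, an integration by parts and Stein's identity give
\[
\frac{d}{dt}\,I\big(\tilde f+t\,\varphi(\cdot-y)\big)\Big|_{t=0^+}
=-\E_y\big[g(X)^2+2g'(X)\big]=1-R(\tilde\delta,y).
\]
By Brown's identity, $\tilde\delta$ has Bayes risk $1-I(\tilde f/c)<1$. Admissibility of $X$ then forces $R(\tilde\delta,y)>1$ for some $y$. Now add a small mass $t$ at that $y$, and the remaining $1-c-t$ as $(1-c-t)\NN(0,\sigma^2)$ with $\sigma$ large. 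This produces a feasible $P\ge(1-\epsilon)G_0$ with $I(\varphi*P)<I(\tilde f)\le m$, a contradiction. Hence $c=1$, the $H_j$ are tight along the subsequence, and your Fatou argument then gives a minimizer.
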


\begin{proof} 
	$M$ is convex and $K$ is convex, hence $C = M \cap K$ is convex. 
	Since $I(f)$ is strictly convex, the minimizer is unique. 
	\end{proof}

\begin{lemma} \label{lemma: equicoercive}
The sequence $(F_n)$ is equi-coercive.
\end{lemma}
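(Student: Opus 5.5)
Equi-coercivity in the sense of Dal Maso means that for every $t\in\RR$ there is a set $\mathcal{K}_t\subset\mathcal{T}$, compact in the weak $L^2$ topology and independent of $n$, with $\{F_n\le t\}\subset \mathcal{K}_t$ for all $n$. My plan is to exhibit a single weakly compact set containing every $f\in M$ with $I(f)\le t$. This works because $F_n(f)\le t$ forces both $f\in M$ and $I(f)\le t$, and these two conditions do not depend on $n$. The constraint $\iota_{K_n}$ only shrinks the sublevel set, so it can be dropped.

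\textbf{Uniform $L^2$ bound.} Every $f=\varphi*P\in M$ satisfies $0\le f\le \varphi(0)=(2\pi)^{-1/2}$. Since $\int f=1$, this gives $\|f\|_{L^2}^2\le \varphi(0)$. The set $\{f\in M: I(f)\le t\}$ therefore lies in a fixed closed ball of $L^2(\RR)$. Closed balls of $L^2$ are weakly compact by Banach--Alaoglu and reflexivity, and they are weakly sequentially compact as well. So every sequence with $F_n(f_n)\le t$ has a subsequence converging weakly in $L^2$ to some $f$.

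\textbf{Closure in $\mathcal{T}$.} The remaining issue is to show that such a limit $f$ still belongs to $\mathcal{T}$; this is the main obstacle, because mass can escape to infinity under weak $L^2$ convergence. Nonnegativity is preserved under weak limits: test against indicators of sets of finite measure. To rule out loss of mass, I would prove tightness of the mixing measures $P_n$ with $f_n=\varphi*P_n$.

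Here I would use the mixture structure $P_n\ge(1-\epsilon)\hat G_n$ on the sublevel sets of $F_n$. Assumption \ref{ass} gives Hellinger, hence weak, convergence of $\varphi*\hat G_n$, which makes $\{\hat G_n\}$ tight. A bounded Fisher information $I(f_n)\le t$ together with a fixed tight component of mass $1-\epsilon$ should prevent the $\epsilon$-part $H_n$ from drifting to infinity. The reason is that separating mass far from the bulk creates a valley in $f_n$, which costs Fisher information bounded below.

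I will try the simplest route first. Fisher information bounds the total variation of $\sqrt{f_n}$: by Cauchy--Schwarz, $\int|(\sqrt{f_n})'|\le \tfrac12\sqrt{I(f_n)}\,\cdot\,(\text{length})^{1/2}$ on intervals. Combined with $\int f_n=1$, this gives uniform tail control of $f_n$ relative to the bulk. If that argument proves delicate, I would fall back to working directly on the mixing measures. By Prokhorov in the vague topology, pass to $P_n\to P$ vaguely with $P(\RR)\le 1$; then $f_n\to\varphi*P$ pointwise, which identifies the weak $L^2$ limit as $f=\varphi*P$. Lower semicontinuity of $I$ along weak convergence would then show that a defective $P$ with $P(\RR)<1$ forces $I$ to blow up, and hence $P(\RR)=1$.

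With mass preserved, $\int f=1$, so $f\in\mathcal{T}$. The fixed set $\mathcal{K}_t=\{f\in\mathcal{T}\cap M:\ \|f\|_{L^2}^2\le\varphi(0),\ I(f)\le t\}$ is then weakly sequentially compact and contains every sublevel set $\{F_n\le t\}$, which establishes equi-coercivity.
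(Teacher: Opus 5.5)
Your uniform $L^2$ bound is the same as the paper's and is fine, and you correctly identify the delicate point: whether weak limits stay in $\mathcal{T}$. But the step you propose for that point is false, and in fact the lemma itself fails in the stated setting.

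The bound $I(f)\le t$ carries no information on $M$. The Cauchy--Schwarz estimate in the proof of Lemma~\ref{lem: HellingertoFI} gives $(f')^2/f\le\int(\theta-x)^2\varphi(x-\theta)\,dP(\theta)$ for $f=\varphi*P$. The right side integrates to $1$, so $I(\varphi*P)\le 1$ for every $P\in\mathcal{P}(\RR)$. Hence for $t\ge 1$ the sublevel set $\{F_n\le t\}$ is all of $M\cap K_n$, and the contamination part is unconstrained.

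Concretely, fix $n$ and let $f_m:=\varphi*\bigl((1-\epsilon)\hat G_n+\epsilon\,\delta_m\bigr)$. Then $f_m\in M\cap K_n$ and $F_n(f_m)=I(f_m)\le 1$ for every $m$. Since $\varphi(\cdot-m)\weak 0$ in $L^2$, we get $f_m\weak(1-\epsilon)\hat f_n$, which has total mass $1-\epsilon$ and so is not in $\mathcal{T}$. The weak topology is Hausdorff, so $(f_m)$ has no cluster point in $\mathcal{T}$. Therefore $\{F_n\le 1\}$ lies in no countably compact subset of $\mathcal{T}$: not even a single $F_n$ is coercive. The diagonal sequence $u_n:=\varphi*((1-\epsilon)\hat G_n+\epsilon\,\delta_n)$, with $F_n(u_n)\le 1$ and $u_n\weak(1-\epsilon)f_0$, also defeats the weaker sequential notion. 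Your opening reduction fails for the same reason: once $\iota_{K_n}$ is dropped, the translates $\varphi(\cdot-m)\in M$ have $I=1$ and converge weakly to $0$.

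Each of your proposed mechanisms breaks at exactly this point.
\begin{itemize}
\item The ``valley'' between the bulk and the escaping bump costs nothing, because it sits where $f_m$ is exponentially small. Indeed $I(f_m)\to(1-\epsilon)I(\hat f_n)+\epsilon$.
\item The estimate $\int_J|(\sqrt f)'|\le\frac12\sqrt{I(f)\,|J|}$ grows with $|J|$ and is translation invariant, so it gives no tail control.
\item In the vague-limit fallback, lower semicontinuity only gives $I(\varphi*P)\le\liminf I(f_m)\le 1$ for the defective limit. For example $I((1-\epsilon)\hat f_n)=(1-\epsilon)I(\hat f_n)<\infty$, so nothing blows up.
\end{itemize}

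The paper's own proof has the same hole in a less visible form. It asserts that $\mathcal{T}\cap B$ is compact in $\mathcal{T}$, but $\mathcal{T}\cap B$ is not weakly closed: $\varphi(\cdot-m)\in\mathcal{T}\cap B$ converges weakly to $0\notin\mathcal{T}$. No argument that uses only membership in the sublevel sets can close this gap.

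What step 2 of the proof of Theorem~\ref{thm: scorebound} actually needs is tightness of the minimizers $P_n^M$ themselves. That has to use optimality, for instance an energy comparison showing that sending contamination mass to infinity is strictly suboptimal, in the spirit of concentration--compactness. Alternatively, change the setting: confine $H$ to a fixed compact set, as the computational grid does, or work on the weakly closed set $\{f\ge 0,\ \int f\le 1\}$ and identify the relaxed $\Gamma$-limit there.
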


\begin{proof}
	Fix $a\in \RR$ and define the sublevel sets on the space $\mathcal{T}$
	\[
	E_{n,a}:=\{f\in \mathcal{T}:\ F_n(f)\le a\},
	\]
	where $\mathcal{T}$ is equipped with the topology induced by weak convergence in $L^2$. By Definition 7.6 in \cite{dal1993introduction}, it suffices to find, for each $a$, a set $K_a\subset \mathcal{T}$ that is compact
	in this topology and such that $E_{n,a}\subset K_a$ for all $n$.
	
	Let $f\in E_{n,a}$. Then $F_n(f)<\infty$, hence $f\in M$.
	Thus $f=\varphi*P$ for some $P\in\mathcal{P}(\mathbb{R})$, and therefore
	\[
	0\le f(x)=\int_{\mathbb{R}}\varphi(x-\theta)\,dP(\theta)\le \|\varphi\|_\infty=\frac{1}{\sqrt{2\pi}}
	\qquad\forall x\in\mathbb{R}.
	\]
	Since also $\|f\|_1=\int f=1$, we obtain
	\[
	\|f\|_2^2 \le \|f\|_\infty\|f\|_1 \le \frac{1}{\sqrt{2\pi}},
	\qquad\text{i.e.}\qquad
	\|f\|_2 \le (2\pi)^{-1/4}.
	\]
	Hence, for every $n$ and every $a$,
	\[
	E_{n,a}\subseteq B:=\{f\in L^2(\mathbb{R}):\ \|f\|_2\le (2\pi)^{-1/4}\}.
	\]
	
	Since $L^2(\mathbb{R})$ is reflexive, the closed ball $B$ is weakly compact.
	Therefore the set
	\[
	K_a:=\mathcal{T}\cap B
	\]
	is compact in $\mathcal{T}$ equipped with the subspace topology induced by weak convergence in $L^2$, and it satisfies $E_{n,a}\subseteq K_a$ for all $n$.
	This proves equi-coercivity.
\end{proof}

\begin{lemma}[$\Gamma$-convergence in weak $L^2$] \label{lemma: gammaconvergence}
Assume $\hat f_n\to f_0$ strongly in $L^2$.
Then $F_n$ $\Gamma$-converges to $F$ in weak topology $L^2$.
\end{lemma}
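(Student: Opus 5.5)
We verify the two defining inequalities of $\Gamma$-convergence in the weak $L^2$ topology on $\mathcal{T}$: the liminf inequality, and the existence of recovery sequences. Throughout we use the correspondence $f=\varphi*P$. On $M$ the mixing measure is unique, by injectivity of the Gaussian Fourier transform. For $f_n=\varphi*P_n\in M$, weak $L^2$ convergence $f_n\weak f$ with $\int f_n=1$ is what we have to work with. Since the limit must lie in $\mathcal{T}$, so that $\int f=1$, we would first show that $(P_n)$ is tight. A nontight subsequence would lose mass at infinity, and then $f$ would integrate to less than one. Hence $P_n\Rightarrow P$ along subsequences, with $f=\varphi*P$. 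This closedness of $M$, together with the identification of limits of mixing measures, is the main technical tool.

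\textbf{Liminf inequality.} Let $f_n\weak f$ in $\mathcal{T}$. If $\liminf F_n(f_n)=\infty$ there is nothing to show, so pass to a subsequence with $F_n(f_n)\le a<\infty$. Then $f_n=\varphi*P_n$ with $P_n\ge(1-\epsilon)\hat G_n$. By the tightness argument $P_n\Rightarrow P$ and $f=\varphi*P\in M$.

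Next we need $\hat G_n\Rightarrow G_0$. Strong $L^2$ convergence $\hat f_n\to f_0$ gives $\hat\varphi\cdot\widehat{\hat G_n}\to\hat\varphi\cdot\widehat{G_0}$ in $L^2$. Combined with tightness of $\hat G_n$, which comes from $\hat G_n\le P_n/(1-\epsilon)$, or alternatively from the Hellinger argument, this identifies every subsequential limit of $\hat G_n$ as $G_0$. The order constraint then passes to the limit: for every nonnegative bounded continuous $\phi$ we have $\int\phi\,dP_n\ge(1-\epsilon)\int\phi\,d\hat G_n$, and letting $n\to\infty$ gives $P\ge(1-\epsilon)G_0$. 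Hence $f\in K$.

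Finally we need lower semicontinuity of $I$. Write $I(f)=4\int((\sqrt f)')^2$, or use the dual representation $I(f)=\sup_{\psi\in C_c^1}\int(-2\psi' - \psi^2)f$. The latter is a supremum of weakly continuous linear functionals, since $C_c^1\subset L^2$, and is therefore weakly lower semicontinuous. This gives $I(f)\le\liminf I(f_n)$.

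\textbf{Recovery sequence.} Given $f=\varphi*P\in M\cap K$ with $I(f)<\infty$, write $P=(1-\epsilon)G_0+\epsilon H$ and set $f_n=\varphi*((1-\epsilon)\hat G_n+\epsilon H)=(1-\epsilon)\hat f_n+\epsilon\,\varphi*H$. Then $f_n\in M\cap K_n$ and $f_n\to f$ strongly in $L^2$, since $\|f_n-f\|_2=(1-\epsilon)\|\hat f_n-f_0\|_2\to0$. To get $\limsup I(f_n)\le I(f)$, use convexity of $I$:
\[
I(f_n)\le(1-\epsilon)I(\hat f_n)+\epsilon I(\varphi*H).
\]
This bound is not sharp enough by itself, so instead we prove continuity of $I$ along the sequence directly. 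Every element of $M$ satisfies $I(\varphi*Q)\le1$, because convolution with $\varphi$ cannot raise the Fisher information above that of $\varphi$. The scores $f_n'/f_n$ are posterior means $\E[\theta-x\mid x]$ under Gaussian likelihoods. Weak convergence of the mixing measures gives pointwise convergence of $f_n$ and $f_n'$, hence of $f_n'/f_n$. Uniform integrability of $(f_n')^2/f_n$ should then follow from the Gaussian tail bounds of the posterior second moments, $(f_n')^2/f_n\le\int(x-\theta)^2\varphi(x-\theta)\,dP_n(\theta)$, together with tightness of $P_n$ and Vitali's theorem. This yields $I(f_n)\to I(f)$.

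If $f\notin M\cap K$, then $F(f)=\infty$ and any sequence works as a recovery sequence.

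We expect the hardest step to be the tightness and identification arguments in the liminf part: ruling out escape of mass of $P_n$ and $\hat G_n$ using only weak $L^2$ convergence and the normalization of the limit in $\mathcal{T}$. The second delicate point is the uniform integrability used to obtain continuity of $I$ along the recovery sequence.
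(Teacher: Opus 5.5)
Your proof is correct and follows the paper's skeleton. The liminf step goes through tightness of $(f_n\,dx)$ from weak $L^2$ convergence to a density, tightness of the mixing measures and Prokhorov, passage of $P_n\ge(1-\epsilon)\hat G_n$ to the limit, identification $f=\varphi*P$, and weak lower semicontinuity of $I$ via the dual representation. Your sign convention there is equivalent to the paper's under $\psi\mapsto-\psi$. The limsup step uses the same recovery sequence $(1-\epsilon)\hat f_n+\varphi*(P-(1-\epsilon)G_0)$.

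The real difference is in how you prove $I(f_n)\to I(f)$.

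\textbf{The paper's route.} It does use the convexity bound you set aside, but pointwise rather than integrated. The function $\Psi(v)+(1-\epsilon)\Psi(\hat f_n)$, with $\Psi(f)=(f')^2/f$, serves as the majorant in Pratt's lemma. This requires first establishing $I(\hat f_n)\to I(f_0)$ in Lemma~\ref{lem: HellingertoFI}.

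\textbf{Your route.} You apply the Jensen bound $(f_n')^2/f_n\le D_n:=\int(x-\theta)^2\varphi(x-\theta)\,dP_n(\theta)$ directly to the recovery sequence. We have $\int D_n=1=\int D$, and $D_n\to D$ pointwise once $P_n\Rightarrow P$. So Pratt's lemma, or Vitali as you suggest (Scheff\'e gives $D_n\to D$ in $L^1$), finishes at once. Your ``should follow'' is therefore fully justified, and the intermediate lemma is not needed. Your Plancherel identification of the limit of $\hat G_n$ also works from the stated $L^2$ hypothesis alone. The paper's route goes through Lemma~\ref{lem: HellingertoFI}, which is stated under $L^1$ convergence.

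\textbf{One bookkeeping point.} In the recovery step you need $\hat G_n\Rightarrow G_0$ along the whole sequence. Tightness of $\hat G_n$ there must therefore come from tightness of $(\hat f_n\,dx)$, i.e.\ Lemma~\ref{lem:tightness} applied to $\hat f_n\to f_0$. The bound $\hat G_n\le P_n/(1-\epsilon)$ only gives tightness along the liminf subsequence.
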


\begin{proof}
We verify the $\Gamma$-liminf and $\Gamma$-limsup conditions with respect to weak $L^2$.

\medskip\noindent
\textbf{(i) $\Gamma$-liminf.}
Let $u_n\in\mathcal{T}$ and assume $u_n\weak u$ in $L^2(\RR)$ for some $u\in\mathcal{T}$.
We show $F(u)\le \liminf_{n\to\infty}F_n(u_n)$.

If $\liminf_{n \to \infty} F_n(u_n)=+\infty$ there is nothing to prove. So we assume that $\liminf_{n \to \infty} F_n(u_n)< +\infty$. Set $\ell:=\liminf_{ n \to \infty}F_n(u_n)$.
Choose indices $n_k\uparrow\infty$ such that $F_{n_k}(u_{n_k})\to \ell$ and $\sup_k F_{n_k}(u_{n_k})<\infty$.
Since $u_{n_k}\weak u$ and $u\in\mathcal{T}$, Lemma~\ref{lem:tightness} implies the measures $\mu_k(dx):=u_{n_k}(x)\,dx$ are tight.

For all large $k$, $F_{n_k}(u_{n_k})<\infty$ implies $u_{n_k}\in \mathcal{M}\cap K_{n_k}$.
Hence there exist $P_k\in\mathcal{P}(\RR)$ such that $u_{n_k}=\varphi*P_k$ and $P_k\ge (1-\epsilon)\hat G_{n_k}$.
By Lemma~\ref{lem:tightness-conv} and tightness of $\{\mu_k\}$, the family $\{P_k\}$ is tight. By Prokhorov's theorem, there exist $k_j\uparrow\infty$
and $P\in\mathcal{P}(\RR)$ such that $P_{k_j}\Rightarrow P$.
Along the same sub-subsequence, since $\hat G_n \Rightarrow G_0$, $\hat G_{n_{k_j}}\Rightarrow G_0$.

Let $Q_{k_j}:=P_{k_j}-(1-\epsilon)\hat G_{n_{k_j}}\ge 0$. For any $\phi\in C_b(\RR)$ with $\phi\ge 0$,
\[
0\le \int \phi\,dQ_{k_j}=\int \phi\,dP_{k_j}-(1-\epsilon)\int \phi\,d\hat G_{n_{k_j}}
\longrightarrow \int \phi\,dP-(1-\epsilon)\int \phi\,dG_0,
\]
so $P\ge (1-\epsilon)G_0$.

Next, for any $\psi\in C_c^\infty(\RR)\subset L^2(\RR)$,
\[
\int \psi\,u_{n_{k_j}} \to \int \psi\,u
\qquad\text{and}\qquad
\]
\[
\int \psi\,u_{n_{k_j}}
= \int (\psi*\varphi)(\theta)\,dP_{k_j}(\theta)\to \int (\psi*\varphi)(\theta)\,dP(\theta)=\int \psi\,(\varphi*P).
\]
Hence $\int \psi\,u=\int \psi\,(\varphi*P)$ for all $\psi\in C_c^\infty(\RR)$, so $u=\varphi*P$ a.e.
Therefore $u\in M$ and $u\in K$, hence $F(u)=I(u)$.

Finally, boundedness of $F_{n_{k_j}}(u_{n_{k_j}})$ implies $\sup_j I(u_{n_{k_j}})<\infty$.
By weakly lower semi-continuity of $I(\cdot)$ (Lemma \ref{lem: lscFisher})
\[
F(u) = I(u)\le \liminf_{j\to\infty} I(u_{n_{k_j}})\le \liminf_{j\to\infty}F_{n_{k_j}}(u_{n_{k_j}})=\ell,
\]
Hence concluding $F(u) \leq \ell = \liminf_{n\to \infty} F_n(u_n)$.

\medskip\noindent
\textbf{(ii) $\Gamma$-limsup (recovery).}
Fix a $u \in \mathcal{T}$ such that $u = \varphi * G$ and $G \geq (1-\epsilon)G_0$. If $F(u)=+\infty$, take $u_n\equiv u$ and we are done. So we assume $F(u) < \infty$, hence $u \in M \cap K$. Define the non-negative finite measure $G_v:= G - (1-\epsilon)G_0 \geq 0$, which has total mass $\epsilon$ and let $v : = \varphi * G_v$. Then we can rewrite $u = v + (1-\epsilon)f_0$ and define the recovery sequence $u_n = v + (1-\epsilon) \hat f_n$. Since $\hat f_n = \varphi * \hat G_n$, we can thus write 
\[
u_n = \varphi * \Big (G_v + (1-\epsilon) \hat G_n\Big)
\]
which implies $u_n \in M \cap K_n \subset \mathcal{T}$. Since $\hat f_n \to f_0$ strongly in $L^2$ (Lemma \ref{lem:HellingerL2}), hence $\hat f_n \weak f_0$ in $L^2$, which implies $u_n \weak u$ in $L^2$. 

It remains to show $\limsup_{n \to \infty} F_n(u_n) \leq F(u)$. Given our construction, $u \in M \cap K$ and $u_n \in M \cap K_n$ for every $n$, hence it suffices to show $\limsup_{n\to \infty} I(u_n) \leq I(u)$. We will in fact show $I(u_n) \to I(u)$, which is a stronger statement. To show that, since $\epsilon >0$, then $v(x) >0$ for all $x$. Defnote $\Psi(f) := (f')^2/f$.  Using inequality $\frac{(a+b)^2}{x+y} \leq \frac{a^2}{x} + \frac{b^2}{y}$ for $x, y >0$ with $a = v'$, $b = (1-\epsilon) \hat f_n$, $x = v$, $y = (1-\epsilon)\hat f_n$, we obtain 
\[
\Psi(u_n) \leq \Psi(v) + (1-\epsilon)\Psi(\hat f_n)
\]
We have $\hat G_n \Rightarrow G_0$, which implies $\hat f_n(x) \to f_0(x)$ pointwise and $\hat f_n'(x) \to f_0'(x)$ pointwise, hence $\Psi(u_n(x)) \to \Psi(u(x))$ pointwise. Moreover apply Lemma \ref{lem: HellingertoFI}, we have 
\[
\int \Big( \Psi(v) + (1-\epsilon)\Psi(\hat f_n) \Big) = I(v) + (1-\epsilon) I(\hat f_n) \to I(v) + (1-\epsilon)I(f_0)
\]
We also have $I(v) <\infty$ since $v \in M$. Apply Pratt's Lemma (\ref{lem:pratt}) we obtain $\int \Psi(u_n) \to \int \Psi(u)$, e.g. $I(u_n) \to I(u)$. 
\end{proof}



\begin{thm}\label{thm:weakL2-to-hellinger}
	Since $\hat f^M_n \in M$ and $f^M \in M$, let $\hat f^M_n: =\varphi*P_n^M\in M$ and 
	$f^M:=\varphi*P^M\in M$ be densities.  Assume $\hat f^M_n \weak f^M$ in $L^2$, 
	then $P_n^M \Rightarrow P^M$, $\hat f^M_n(x)\to f^M(x)$ for every $x$, and
	\[
	\|\hat f^M_n-f^M\|_1\to 0
	\qquad\text{and hence}\qquad
	H(\hat f^M_n,f^M)\to 0.
	\]
\end{thm}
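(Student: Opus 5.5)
The plan has four steps: show the mixing measures $P_n^M$ are tight, identify every weak limit point as $P^M$, upgrade this to pointwise convergence of the densities, and then apply Scheffé's lemma.

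\emph{Tightness.} Since $\hat f^M_n \weak f^M$ in $L^2$ and all these functions lie in $\mathcal{T}$, Lemma~\ref{lem:tightness} gives tightness of the measures $\hat f^M_n(x)\,dx$. Lemma~\ref{lem:tightness-conv} then transfers this to tightness of $\{P_n^M\}$, because a convolution $\varphi*P$ can only be tight if $P$ is.

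\emph{Identification of the limit.} By Prokhorov, every subsequence has a further subsequence with $P_{n_j}^M\Rightarrow Q$ for some probability measure $Q$. Fix $\psi\in C_c^\infty(\RR)$. Fubini gives $\int\psi\,\hat f^M_{n_j}=\int(\psi*\varphi)\,dP^M_{n_j}$, and $\psi*\varphi$ (with $\varphi$ symmetric) is bounded and continuous, so this tends to $\int(\psi*\varphi)\,dQ=\int\psi\,(\varphi*Q)$. Weak $L^2$ convergence gives the same integrals a second limit, $\int\psi\,(\varphi*P^M)$. Hence $\varphi*Q=\varphi*P^M$ almost everywhere, and by continuity everywhere. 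Taking Fourier transforms, $\hat\varphi(t)\hat Q(t)=\hat\varphi(t)\hat P^M(t)$ with $\hat\varphi(t)=e^{-t^2/2}\neq 0$, so $\hat Q=\hat P^M$ and $Q=P^M$. The Gaussian deconvolution is identifiable, which is the only substantive ingredient here. Since every subsequence has a sub-subsequence converging to the same limit, the whole sequence satisfies $P_n^M\Rightarrow P^M$.

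\emph{Pointwise and $L^1$ convergence.} For fixed $x$, the map $\theta\mapsto\varphi(x-\theta)$ is bounded and continuous. Weak convergence therefore gives $\hat f^M_n(x)=\int\varphi(x-\theta)\,dP_n^M(\theta)\to f^M(x)$ for every $x$. The same argument with $\varphi'$ gives pointwise convergence of the derivatives, which is what the proof of Theorem~\ref{thm: scorebound} needs. All $\hat f^M_n$ and $f^M$ are probability densities converging pointwise, so Scheffé's lemma yields $\|\hat f^M_n-f^M\|_1\to0$. Finally $H^2(f,g)=\int(\sqrt f-\sqrt g)^2\le\int|f-g|$, because $|\sqrt f-\sqrt g|\le\sqrt f+\sqrt g$. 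This gives $H(\hat f^M_n,f^M)\to0$.

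\emph{Main obstacle.} The step I expect to need the most care is tightness, since weak $L^2$ convergence alone does not prevent mass escaping to infinity. That is exactly why the limit $f^M$ must be assumed to lie in $\mathcal{T}$ and integrate to one, and why I lean on Lemma~\ref{lem:tightness}. If that lemma were unavailable, I would argue directly. Let $Q$ be a vague limit of $P_n^M$, a sub-probability measure. The identification step, run with $C_c$ test functions, gives $\varphi*Q=f^M$ almost everywhere. Integrating over $\RR$ gives $Q(\RR)=\int f^M=1$, so no mass is lost and vague convergence upgrades to weak convergence.
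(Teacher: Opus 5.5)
Your proof is correct, but it reaches $P_n^M\Rightarrow P^M$ by a different route from the paper's.

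\textbf{The paper's route.} The paper never passes to subsequences. It tests the weak $L^2$ convergence against $\kappa_x=\varphi(x-\cdot)\in L^2$. This gives pointwise convergence of the doubly smoothed densities $h_n=\hat f^M_n*\varphi=\psi*P_n^M$, with $\psi$ the $\NN(0,2)$ density. Scheff\'e upgrades this to $L^1$ convergence. The paper then reads off $\phi_{P_n^M}(t)=e^{t^2}\phi_{h_n}(t)\to\phi_{P^M}(t)$ and concludes by L\'evy's continuity theorem. The remaining steps (pointwise convergence of $\hat f^M_n$, Scheff\'e, the Hellinger bound) are the same as yours.

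\textbf{Your route.} You argue by compactness plus uniqueness of the limit:
\begin{itemize}
\item tightness of $\{P_n^M\}$ and Prokhorov;
\item identification of every limit point $Q$ via $C_c^\infty$ test functions and Gaussian deconvolution;
\item the subsequence principle.
\end{itemize}
This is essentially the argument the paper itself uses in the $\Gamma$-liminf step.

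\textbf{What each buys.} The paper's route gives convergence of the whole sequence in one stroke and needs no control of the mixing measures. Its tightness and $\int h=1$ steps are in fact redundant, since $h=f^M*\varphi$ is automatically a probability density. Yours makes explicit that identifiability of Gaussian deconvolution is the only substantive input. Your fallback argument shows tightness is dispensable as well, since $Q(\RR)=\int\varphi*Q=\int f^M=1$ rules out escape of mass.

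\textbf{One citation slip.} Lemma~\ref{lem:tightness-conv} gives the forward implication: tight $f_n$ implies tight $f_n*\varphi$. You need the converse: tight $\varphi*P_n^M$ implies tight $P_n^M$. The converse holds by the same one-line argument. With $X_n\sim P_n^M$ and independent $Z\sim\NN(0,1)$,
\[
\mathbb{P}(|X_n|>2R)\le\mathbb{P}(|X_n+Z|>R)+\mathbb{P}(|Z|>R),
\]
so nothing breaks.
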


\begin{proof}
	Let $\varphi$ be density of $N(0,1)$ and  set $\kappa_x(y):=\varphi(x-y)$. 
	Since $\kappa_x\in L^2(\RR)$ and $\hat f^M_n \weak f^M$ in $L^2$,
	\[
	h_n(x):=(\hat f^M_n*\varphi)(x)=\int \hat f^M_n(x)\,\varphi(x-y)\,dy
	=\langle \hat f^M_n,\kappa_x\rangle
	\longrightarrow \langle f^M,\kappa_x\rangle
	=(f^M*\varphi)(x):=h(x).
	\]
	Thus $h_n(x)\to h(x)$ for every $x$.  Moreover, since $\hat f^M_n=\varphi*P_n^M$,
	\[
	h_n=\hat f^M_n*\varphi=(\varphi*P_n^M)*\varphi=(\varphi*\varphi)*P_n^M=\psi*P_n^M.
	\]
	with $\psi$ being the density of $N(0,2)$. 
	
	\medskip
	Apply Lemma \ref{lem:tightness}, we have $(\hat f^M_n \,dx)$ is tight. 
	Since $h_n := \hat f_n^M * \varphi$, then $(h_n(x) \,dx)$ is tight by Lemma \ref{lem:tightness-conv}.
	
	\medskip
	Each $h_n$ is a density, so $\int h_n=1$.  Also $0\le h_n(x)\le \|\psi\|_\infty$ for all $n,x$.
	By dominated convergence on $[-R,R]$,
	\[
	\int_{-R}^R h_n(x)\,dx \to \int_{-R}^R h(x)\,dx.
	\]
	By tightness, choose $R$ with $\sup_n \int_{|x|>R} h_n \le \eta$. Then
	\[
	1=\int h_n \le \int_{-R}^R h_n + \eta.
	\]
	Letting $n\to\infty$ yields $1\le \int_{-R}^R h + \eta \le \int h + \eta$.
	Since also $\int h \le \liminf_n \int h_n =1$ by Fatou, we conclude $\int h=1$.
	Therefore $h_n\to h$ pointwise a.e. with $\int h_n=\int h=1$, and Scheff\'e's lemma implies
	\[
	\|h_n-h\|_1\to 0.
	\]
	
	\medskip
	For each $t\in\RR$, let $\phi$ be the characteristic function, 
	\[
	|\phi_{h_n}(t)-\phi_h(t)|
	=\left|\int e^{ity}(h_n-h)(y)\,dy\right|
	\le \|h_n-h\|_1 \to 0,
	\]
	so $\phi_{h_n}(t)\to \phi_h(t)$ for every $t$.
	But $h_n=\psi*P_n^M$, hence $\phi_{h_n}(t)=\phi_\psi(t)\phi_{P_n^M}(t)$.
	Since $\psi=N(0,2)$, $\phi_\psi(t)=e^{-t^2}\neq 0$, so
	\[
	\phi_{P_n^M}(t)=e^{t^2}\phi_{h_n}(t)\to e^{t^2}\phi_h(t)=:\phi_{P^M}(t).
	\]
	By L\'evy's continuity theorem, $P_n^M \Rightarrow P^M$.
	
	\medskip
	For each fixed $x$, the map $\theta\mapsto \varphi(x-\theta)$ is bounded 
	and continuous, so $P_n^M\Rightarrow P^M$ implies pointwise convergence: 
	\[
	\hat f^M_n(x)=\int \varphi(x-\theta)\,dP^M_n(\theta) \to \int \varphi(x-\theta)\,dP^M(\theta)=f^M(x).
	\]
	Since $\hat f^M_n,f^M$ are densities, apply Scheff\'e Lemma again gives $\|\hat f^M_n-f^M\|_1\to 0$, and then
	\[
	H^2(\hat f^M_n,f^M)=\int(\sqrt{\hat f^M_n}-\sqrt {f^M})^2\le \int|\hat f^M_n-f^M|=\|\hat f^M_n-f^M\|_1\to 0.
	\]
\end{proof}

\begin{lemma}[Tightness from weak $L^2$ convergence to a density]\label{lem:tightness}
	Let $(f_n)_{n\ge 1}$ be probability densities on $\RR$ such that:
	\begin{enumerate}
		\item $f_n \weak f$ in $L^2(\RR)$ for some $f\in L^2(\RR)$;
		\item $f$ is itself a probability density (i.e.\ $f\ge 0$ and $\int_\RR f=1$);
	\end{enumerate}
	Then the family of probability measures $(f_n(x)\,dx)$ is tight.
	More precisely: for every $\eta>0$ there exists $R<\infty$ such that
	\[
	\sup_{n\ge 1}\int_{|x|>R} f_n(x)\,dx \le \eta.
	\]
\end{lemma}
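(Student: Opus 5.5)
The plan is to test the weak $L^2$ convergence against indicator functions of bounded intervals. Each such indicator lies in $L^2(\RR)$, so this turns weak convergence into convergence of the mass on compacts. Because the limit $f$ is itself a probability density, no mass can then escape to infinity for large $n$. The finitely many remaining indices are handled one at a time.

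First, fix $\eta>0$. Since $f\ge 0$ and $\int_\RR f=1$, monotone convergence gives an $R_0<\infty$ with
\[
\int_{-R_0}^{R_0} f(x)\,dx \ge 1-\eta/2 .
\]
Second, the function $\mathbf{1}_{[-R_0,R_0]}$ belongs to $L^2(\RR)$, so the hypothesis $f_n\weak f$ in $L^2$ yields
\[
\int_{-R_0}^{R_0} f_n(x)\,dx=\langle f_n,\mathbf{1}_{[-R_0,R_0]}\rangle\to\langle f,\mathbf{1}_{[-R_0,R_0]}\rangle=\int_{-R_0}^{R_0} f(x)\,dx .
\]
Hence there is an $N$ such that $\int_{-R_0}^{R_0} f_n\ge 1-\eta$ for all $n\ge N$. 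Since each $f_n$ is a probability density, this is the same as $\int_{|x|>R_0} f_n\le\eta$ for all $n\ge N$.

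Third, each of the finitely many densities $f_1,\dots,f_{N-1}$ is integrable. So for each $n<N$ there is an $R_n$ with $\int_{|x|>R_n} f_n\le\eta$. Taking $R=\max\{R_0,R_1,\dots,R_{N-1}\}$ gives $\sup_{n\ge1}\int_{|x|>R} f_n\le\eta$, which is the claimed tightness.

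There is no genuine obstacle in this argument. The one point that needs care is the role of the hypothesis that $f$ has total mass one. Without it the conclusion fails: for example, $f_n=\tfrac12 f+\tfrac12 L_n^{-1}\mathbf{1}_{[n,n+L_n]}$ with $L_n\to\infty$ converges weakly in $L^2$ to $\tfrac12 f$ while losing half of its mass. The argument therefore has to use the exact equality $\int f=1$, and it does so in the first step.
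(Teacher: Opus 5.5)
Your proof is correct and follows the same route as the paper's: choose $R_0$ so that $f$ puts nearly all its mass on $[-R_0,R_0]$, test the weak $L^2$ convergence against a compactly supported $L^2$ function, and treat the finitely many initial indices separately. The only difference is that you test against $\mathbf{1}_{[-R_0,R_0]}$ directly, whereas the paper uses a smooth cutoff supported in $[-R_0-1,R_0+1]$; since the indicator already lies in $L^2(\RR)$, your version is a valid and slightly cleaner simplification.
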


\begin{proof}
	Fix $\eta>0$. Since $f$ is a probability density, choose $R_0$ such that
	\[
	\int_{|x|>R_0} f(x)\,dx \le \eta/4.
	\]
	Let $\chi_{R_0}$ be a smooth cutoff function with
	\[
	0\le \chi_{R_0}\le 1,\qquad
	\chi_{R_0}\equiv 1 \text{ on }[-R_0,R_0],\qquad
	\operatorname{supp}(\chi_{R_0})\subset [-R_0-1,R_0+1].
	\]
	Then $\chi_{R_0}\in L^2(\RR)$, so weak $L^2$ convergence implies,
	\[
	\int_\RR \chi_{R_0}(x) f_n(x)\,dx \longrightarrow \int_\RR \chi_{R_0}(x) f(x)\,dx.
	\]
	Moreover,
	\[
	\int_\RR \chi_{R_0} f
	\ge \int_{|x|\le R_0} f(x)\,dx
	= 1-\int_{|x|>R_0} f(x)\,dx
	\ge 1-\eta/4.
	\]
	Hence there exists $N$ such that for all $n\ge N$,
	\[
	\int_\RR \chi_{R_0}(x) f_n(x)\,dx \ge 1-\eta/2.
	\]
	Since $\chi_{R_0}\le \mathbf 1_{\{|x|\le R_0+1\}}$, we have,
	\[
	\int_{|x|\le R_0+1} f_n(x)\,dx \ge \int \chi_{R_0} f_n \ge 1-\eta/2
	\]
	which implies for $n \geq N$, 
	\[
	\int_{|x|>R_0+1} f_n(x)\,dx \le \eta/2
	\]
	
	For the finitely many indices $n=1,\dots,N-1$, choose $R_1$ large enough that
	\[
	\max_{1\le n\le N-1}\int_{|x|>R_1} f_n(x)\,dx \le \eta/2.
	\]
	Let $R:=\max\{R_0+1,R_1\}$. Then for $n\ge N$,
	\[
	\int_{|x|>R} f_n \le \int_{|x|>R_0+1} f_n \le \eta/2,
	\]
	and for $n\le N-1$,
	\[
	\int_{|x|>R} f_n \le \int_{|x|>R_1} f_n \le \eta/2.
	\]
	Thus $\sup_n \int_{|x|>R} f_n \le \eta$, proving tightness.
\end{proof}

\begin{lemma}\label{lem:tightness-conv} 
	Let $f_n$ be probability densities and define $h_n:=f_n*\varphi$, where $\varphi$ is the $N(0,1)$ density.
	If $(f_n(x)\,dx)$ is tight, then $(h_n(x)\,dx)$ is tight.
\end{lemma}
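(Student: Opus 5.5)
To prove Lemma \ref{lem:tightness-conv} I will represent $h_n$ as the law of a sum of independent variables and control the tail of $h_n$ by the tails of $f_n$ and of $\varphi$ separately. If $Y_n$ has density $f_n$ and $Z\sim \NN(0,1)$ is independent of $Y_n$, then $Y_n+Z$ has density $h_n=f_n*\varphi$. The elementary inclusion
\[
\{|Y_n+Z|>R\}\subseteq \{|Y_n|>R/2\}\cup\{|Z|>R/2\}
\]
then gives, for every $n$ and every $R>0$,
\[
\int_{|x|>R} h_n(x)\,dx \le \int_{|y|>R/2} f_n(y)\,dy + 2\Phi(-R/2).
\]

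Fix $\eta>0$. Tightness of $(f_n(x)\,dx)$ supplies $R_0$ with $\sup_n\int_{|y|>R_0} f_n\le \eta/2$. Next I choose $R_1$ with $2\Phi(-R_1/2)\le\eta/2$. Setting $R:=2\max\{R_0,R_1\}$, both terms on the right are at most $\eta/2$ uniformly in $n$. This gives $\sup_n\int_{|x|>R}h_n\le\eta$, which is exactly the tightness formulation used in Lemma \ref{lem:tightness}.

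If a purely analytic argument is preferred, Fubini gives the same bound:
\[
\int_{|x|>R} h_n(x)\,dx=\iint \mathbf 1\{|y+z|>R\}f_n(y)\varphi(z)\,dy\,dz .
\]
I then bound the indicator by $\mathbf 1\{|y|>R/2\}+\mathbf 1\{|z|>R/2\}$ and integrate each term, using $\int f_n=\int\varphi=1$. No step is a genuine obstacle. The only care needed is to make the bound uniform in $n$, and this comes directly from tightness of the $f_n$, since the Gaussian tail term does not depend on $n$.
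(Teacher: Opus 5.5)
Your proof is correct and follows essentially the same route as the paper: write $h_n$ as the law of $Y_n+Z$ with $Z\sim N(0,1)$ independent, bound $\{|Y_n+Z|>R\}$ by $\{|Y_n|>R/2\}\cup\{|Z|>R/2\}$, and choose $R$ so that each piece is at most $\eta/2$ uniformly in $n$. Your Fubini variant is the same union bound written analytically.
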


\begin{proof}
	Let $X_n\sim f_n$ and let $Z\sim N(0,1)$ be independent. Then $X_n+Z$ has density $h_n$.
	For any $R>0$,
	\[
	\mathbb P(|X_n+Z|>R)\le \mathbb P(|X_n|>R/2)+\mathbb P(|Z|>R/2).
	\]
	Given $\eta>0$, choose $R$ so that $\sup_n \mathbb P(|X_n|>R/2)\le \eta/2$ 
	(tightness of $f_n$) and $\mathbb P(|Z|>R/2)\le \eta/2$.
	Then $\sup_n \mathbb P(|X_n+Z|>R)\le \eta$, proving tightness of $(h_n)$.
\end{proof}

\begin{lemma}\label{lem:HellingerL2}
For $\hat f_n \in M$ and $f_0 \in M$, assume $H(\hat f_n, f_0) \to 0$ as $n \to \infty$, then, 
\[
\hat f_n \to f_0\ \text{strongly in } L^1 \quad \text{ and }\quad  \hat f_n \to f\ \text{strongly in } L^2
\]
\end{lemma}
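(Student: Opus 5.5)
The plan has two parts: convert Hellinger convergence into $L^1$ convergence with a standard inequality, then upgrade $L^1$ to $L^2$ using the uniform sup-norm bound that every element of $M$ satisfies. (The target in the second display should read $f_0$; we prove $\hat f_n \to f_0$ in $L^2$.)

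\emph{Step 1 ($L^1$).} Factor $|\hat f_n - f_0| = |\sqrt{\hat f_n}-\sqrt{f_0}|\,(\sqrt{\hat f_n}+\sqrt{f_0})$ and apply Cauchy--Schwarz:
\[
\|\hat f_n - f_0\|_1 \le \Big(\int (\sqrt{\hat f_n}-\sqrt{f_0})^2\Big)^{1/2}\Big(\int (\sqrt{\hat f_n}+\sqrt{f_0})^2\Big)^{1/2}.
\]
The first factor is $H(\hat f_n,f_0)$, up to the paper's normalization of $H$. The second factor is at most $\big(2\int \hat f_n + 2\int f_0\big)^{1/2} = 2$. Hence $\|\hat f_n - f_0\|_1 \le 2H(\hat f_n,f_0)\to 0$.

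\emph{Step 2 ($L^2$).} Both densities lie in $M$, so each is a Gaussian location mixture $\varphi * P$. As in the proof of Lemma \ref{lemma: equicoercive}, this gives $0\le \hat f_n, f_0 \le \|\varphi\|_\infty = (2\pi)^{-1/2}$. It follows that $\|\hat f_n - f_0\|_\infty \le (2\pi)^{-1/2}$. Interpolating between $L^1$ and $L^\infty$,
\[
\|\hat f_n - f_0\|_2^2 \le \|\hat f_n - f_0\|_\infty\,\|\hat f_n - f_0\|_1 \le (2\pi)^{-1/2}\cdot 2H(\hat f_n,f_0)\to 0,
\]
which is strong $L^2$ convergence.

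\emph{Main obstacle.} There is no serious obstacle. The only point needing care is Step 2: the uniform sup bound is what makes the interpolation work, and that bound depends on the densities belonging to $M$, which is exactly the hypothesis. Without the mixture structure, $L^1$ convergence alone would not yield $L^2$ convergence.
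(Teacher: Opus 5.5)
Your proposal is correct and follows the paper's proof essentially line for line: Cauchy--Schwarz on the factorization $|\hat f_n - f_0| = |\sqrt{\hat f_n}-\sqrt{f_0}|\,(\sqrt{\hat f_n}+\sqrt{f_0})$ gives $\|\hat f_n - f_0\|_1 \le 2H(\hat f_n,f_0)$. The sup-norm bound for Gaussian mixtures then combines with $\|\hat f_n - f_0\|_2^2 \le \|\hat f_n - f_0\|_\infty \|\hat f_n - f_0\|_1$ to give the $L^2$ convergence. Your bound of $(2\pi)^{-1/2}$ on the difference uses nonnegativity and is slightly sharper than the paper's $2/\sqrt{2\pi}$, and you are right to read the limit $f$ in the statement as $f_0$.
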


\begin{proof}
	By Cauchy-Schwartz, 
	\[
	\int |\hat f_n - f_0| \leq \Big( \int (\sqrt{\hat f_n} - \sqrt{f_0})^2\Big)^{1/2} 
	\Big (\int (\sqrt{\hat f_n} + \sqrt{f_0})^2  \Big)^{1/2} \leq 2 H(\hat f_n, f_0) \to 0
	\]
	So we get the first result. 
	Then note that,
	\[
	\int (\hat f_n- f_0)^2  = \int |\hat f_n - f_0| \cdot |\hat f_n - f_0| \leq 
	\|\hat f_n - f_0\|_\infty \int |\hat f_n - f_0| 
	\]
	Since $\hat f_n \in M, \|\hat f_n\|_\infty \leq \frac{1}{\sqrt{2\pi}}$ and the same for $f_0$,
		\[
		\|\hat f_n - f_0\|_\infty \leq \frac{2}{\sqrt{2\pi}}
		\]
		and combined with the established strong convergence in $L^1$, we get the strong convergence in $L^2$. 
\end{proof}

\begin{lemma}\label{lem: lscFisher}
Let $u_n \weak u$ in $L^2$, then 
\[
I(u) \leq \underset{n \to \infty}{\liminf}\ I(u_n).
\]
\end{lemma}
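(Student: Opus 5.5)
Since $I(f)=\int (f')^2/f$, the natural route is a variational (duality) representation of $I$ as a supremum of functionals that are continuous under weak $L^2$ convergence. A supremum of continuous maps is lower semicontinuous, so the lemma would follow at once. The classical formula (Huber, 1981) is
\[
I(f)=\sup_{\psi\in C_c^\infty(\RR)} \Big\{ -2\int \psi' f\,dx-\int \psi^2 f\,dx\Big\}.
\]
For each fixed $\psi\in C_c^\infty$, both $\psi'$ and $\psi^2$ lie in $L^2(\RR)$. Hence $f\mapsto -2\int\psi' f-\int\psi^2 f$ is a continuous linear functional on $L^2$ and is continuous along weakly convergent sequences: $\int \psi' u_n\to\int\psi' u$ and $\int\psi^2u_n\to\int\psi^2 u$. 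For every $\psi$ this gives
\[
-2\int\psi' u-\int\psi^2 u=\lim_n\Big(-2\int\psi'u_n-\int\psi^2u_n\Big)\le \liminf_n I(u_n).
\]
Taking the supremum over $\psi$ then yields $I(u)\le\liminf_n I(u_n)$.

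The work lies in justifying the representation for nonnegative $f\in L^2\cap L^1$, with $I(f)=+\infty$ when $f$ is not absolutely continuous with $\int (f')^2/f<\infty$.

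The inequality ``$\ge$'' is the easy direction when $I(f)<\infty$. Integrating by parts gives $-2\int\psi'f=2\int\psi f'$. The pointwise bound $2\psi f'-\psi^2 f\le (f')^2/f$ on $\{f>0\}$ comes from completing the square, $(f'/\sqrt f-\psi\sqrt f)^2\ge0$. On $\{f=0\}$ we have $f'=0$ a.e., so the bound holds there too.

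For ``$\le$'' I would take $\psi$ approximating the score $f'/f$, truncated and mollified with a cutoff, and pass to the limit by monotone or dominated convergence. In the case $I(f)=\infty$, I would argue by contradiction. If the supremum $S$ is finite, then the linear functional $\psi\mapsto\int\psi' f$ is bounded by a multiple of $\|\psi\|_{L^2(f)}$: replacing $\psi$ by $t\psi$ and optimizing over $t$ gives $(\int\psi'f)^2\le S\int\psi^2 f$. By Riesz representation in $L^2(f)$ there is then a $g$ with $-\int\psi' f=\int\psi g f$, so $f$ has distributional derivative $gf$ and $\int (f')^2/f=\int g^2 f\le S<\infty$. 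This contradicts $I(f)=\infty$.

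I expect this converse step to be the main obstacle: the Riesz argument, the identification of the weak derivative, and the handling of the zero set of $f$. Our application only needs it in a restricted setting, since all relevant $u_n$ lie in $M$ and are smooth and positive. Even so, the limit $u$ must be handled in general unless one first shows $u\in M$, as is done in the $\Gamma$-liminf argument. In that case $u=\varphi*P$ is smooth and strictly positive, and the truncation argument for ``$\le$'' with $\psi_R=\chi_R\,\max(-R,\min(R,u'/u))$ is elementary via monotone convergence.
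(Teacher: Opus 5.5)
Your proposal is correct and is the paper's argument. The paper likewise writes $I(f)=\sup_{\phi\in C_c^1}\int f(2\phi'-\phi^2)$ (your $\psi=-\phi$), gets ``$\ge$'' by expanding $\int f(f'/f+\phi)^2\ge 0$, gets ``$\le$'' from the cut-off score $\phi_R=-\eta_R f'/f$ with monotone convergence, and concludes because a supremum of weakly continuous linear functionals is weakly l.s.c. Your Riesz-representation treatment of non-smooth $f$ and of $I(f)=\infty$ goes beyond what the paper proves: its claim $\phi_R\in C_c^1$ tacitly requires $f$ smooth and strictly positive, which is exactly the case $u\in M$ that you identify as sufficient for the $\Gamma$-liminf application.
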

\begin{proof}
	We need to show that Fisher information $I(\cdot)$ is weakly lower semi-continuous. 
	Let $\phi \in C_c^1(\RR)$ be any continuously differentiable test function with compact support. 
	Since $\int f (\frac{f'}{f} + \phi)^2 dx \geq 0$, expanding the quadratic, we have,
	\[
	\int \frac{(f')^2}{f} + 2 \int f'\phi  + \int f \phi^2 \geq 0.
	\]
	Note that the first term is $I(f)$, hence $I(f) \geq - 2 \int f' \phi - \int f \phi^2$. 
	Integrating by parts ($\int f'\phi = - \int f \phi'$), we have 
	\[
	I(f) \geq 2 \int f \phi' - \int f \phi^2 = \int f(2\phi' - \phi^2),
	\]
	hence 
	\[
	I(f) \geq  \underset{ \phi \in C_c^1(\RR)}{\sup} \int f(2\phi' - \phi^2). 
	\]
	On the other hand, pick $\eta_R \in C_c^{\infty}(\RR)$ with $0 \leq \eta_R \leq 1$ and $\eta_R$ takes value 1 on $[-R,R]$, and $\eta_R \uparrow 1$ pointwise as $R \to \infty$. Define 
	\[
	\phi_R: = - \frac{f'}{f} \eta_R \in C_c^1(\RR)
	\]
	Then using integration by parts and because $\phi_R$ is compactly supported, 
	\[
	\int f(2\phi_R' - \phi_R^2)  = -2 \int f' \phi_R - \int f \phi_R^2  = 2\int \frac{(f')^2}{f} \eta_R - \int \frac{(f')^2}{f} \eta_R^2 = \int \frac{(f')^2}{f} (2\eta_R - \eta_R^2). 
	\]
	Since $2\eta_R - \eta_R^2 \uparrow 1$ pointwise as $R \to \infty$, by monotone convergence theorem, 
	\[
	\lim_{R \to \infty}	\int f(2\phi_R' - \phi_R^2)  = \int \frac{(f')^2}{f}  = I(f)
	\]
	hence $\sup_{\phi \in C_c^1(\RR)}\int f(2 \phi' -\phi^2) \geq \sup_R \int f(2\phi_R' - \phi_R^2) \geq \lim_{R\to \infty} \int f(2\phi_R' - \phi_R^2)= I(f)$. Put together, we have 
	\[
	I(f) = \sup_{\phi \in C_c^1(\RR)}	\int f(2\phi' - \phi^2) 
	\]
	For any fixed $\phi$, the map $	f \to \int fg $ is weakly continuous in $L^2$ if $g \in L^2$. We have $g = 2\phi' - \phi$ is bounded and have compact support, hence 
	\[
	f \to \int f(2\phi' - \phi^2)
	\]
	is weakly continuous in $L^2$. 
	Then supremum of weakly continuous functionals is weakly lower semicontinuous, 
	therefore if $u_n \weak u$ in $L^2$, 
	\[
	I(u) \leq \underset{n \to \infty}{\liminf}\ I(u_n)
	\]
\end{proof}

\begin{lemma} \label{lem: HellingertoFI}
	Suppose $\hat f_n \to f_0$ strongly in $L^1$, then 
	\[
	I(\hat f_n ) \to I(f_0).
	\]
	\end{lemma}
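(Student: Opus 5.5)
The plan is to prove $I(\hat f_n)\to I(f_0)$ with Pratt's lemma (Lemma~\ref{lem:pratt}), the same device used in the $\Gamma$-limsup step of Lemma~\ref{lemma: gammaconvergence}. This needs three ingredients: pointwise convergence of the integrands $\Psi(\hat f_n)=(\hat f_n')^2/\hat f_n$, a pointwise dominating sequence, and convergence of the integrals of the dominating sequence. Throughout I use, as in the surrounding argument, that $\hat f_n=\varphi*\hat G_n$ and $f_0=\varphi*G_0$ both lie in $M$.

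\textbf{Step 1: pointwise convergence.} First I will upgrade the $L^1$ convergence to weak convergence $\hat G_n\Rightarrow G_0$. This repeats the characteristic-function argument of Theorem~\ref{thm:weakL2-to-hellinger}:
\[
|\phi_{\hat f_n}(t)-\phi_{f_0}(t)|\le\|\hat f_n-f_0\|_1\to 0,
\]
and $\phi_{\hat f_n}(t)=e^{-t^2/2}\phi_{\hat G_n}(t)$ with $e^{-t^2/2}\neq 0$, so L\'evy's continuity theorem applies. For fixed $x$, the maps $\theta\mapsto\varphi(x-\theta)$ and $\theta\mapsto (x-\theta)\varphi(x-\theta)$ are bounded and continuous. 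Hence $\hat f_n(x)\to f_0(x)$ and $\hat f_n'(x)\to f_0'(x)$ for every $x$. Since $f_0(x)>0$ everywhere, $\Psi(\hat f_n)(x)\to\Psi(f_0)(x)$ pointwise.

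\textbf{Step 2: domination.} This is the key structural point, and it uses only the Gaussian mixture form. Write $\hat f_n'(x)=\int (\theta-x)\varphi(x-\theta)\,d\hat G_n(\theta)$. Cauchy--Schwarz with respect to the measure $\varphi(x-\theta)\,d\hat G_n(\theta)$ gives
\[
\frac{(\hat f_n'(x))^2}{\hat f_n(x)}\le q_n(x):=\int (x-\theta)^2\varphi(x-\theta)\,d\hat G_n(\theta).
\]
The analogous bound holds for $f_0$ with $q(x):=\int (x-\theta)^2\varphi(x-\theta)\,dG_0(\theta)$.

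\textbf{Step 3: convergence of the dominating integrals.} By Fubini, $\int q_n\,dx=\int\!\int u^2\varphi(u)\,du\,d\hat G_n(\theta)=1$ for every $n$, and likewise $\int q=1$. Moreover $\theta\mapsto (x-\theta)^2\varphi(x-\theta)$ is bounded and continuous, so $q_n(x)\to q(x)$ pointwise by Step~1. Thus $0\le\Psi(\hat f_n)\le q_n$, $q_n\to q$ pointwise, $\int q_n\to\int q<\infty$, and $\Psi(\hat f_n)\to\Psi(f_0)$ pointwise. Pratt's lemma then yields $I(\hat f_n)=\int\Psi(\hat f_n)\to\int\Psi(f_0)=I(f_0)$.

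As a by-product, Step~2 shows $I(f)\le 1$ for all $f\in M$, which is consistent with Brown's identity. The only step I expect to need care is Step~1: recovering pointwise convergence of the derivatives $\hat f_n'$ from mere $L^1$ convergence. Deconvolution through weak convergence of the mixing measures handles this cleanly. Alternatively one can use that convolution with $\varphi$ smooths, so $\hat f_n'=\varphi'*\hat G_n$, where $\varphi'$ is bounded and continuous.
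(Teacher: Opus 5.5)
Your proposal is correct and follows the paper's proof essentially step for step: you use the factorized characteristic functions and L\'evy's continuity theorem to get $\hat G_n \Rightarrow G_0$ and hence pointwise convergence of $\hat f_n$ and $\hat f_n'$; you dominate $\Psi(\hat f_n)$ by $\int (x-\theta)^2\varphi(x-\theta)\,d\hat G_n(\theta)$ via Cauchy--Schwarz, note that these bounds all integrate to $1$ by Fubini, and conclude with Pratt's lemma. Your side remark that the same bound gives $I(f)\le 1$ for every $f\in M$ is also correct.
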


\begin{proof}
Let $\psi$ be the characteristic function. We have the bound
	\[
	|\psi_{\hat f_n}(t)-\psi_{f_0}(t)|
	\le \int_{\RR}|\hat f_n(x)-f_0(x)|\,dx
	\to 0,
	\qquad \forall t\in\RR,
	\]
	hence $\psi_{\hat f_n}(t)\to \psi_{f_0}(t)$ pointwise on $\RR$.
	
	Since $\hat f_n=\varphi*\hat G_n$ and $f_0=\varphi*G_0$, their characteristic functions factorize as
	\[
	\psi_{\hat f_n}(t)=\psi_{\hat G_n}(t)\,e^{-t^2/2},
	\qquad
	\psi_{f_0}(t)=\psi_{G_0}(t)\,e^{-t^2/2}.
	\]
	Because $e^{-t^2/2}>0$ for all $t$, we obtain
	\[
	\psi_{\hat G_n}(t)=e^{t^2/2}\psi_{\hat f_n}(t)\to e^{t^2/2}\psi_{f_0}(t)=\psi_{G_0}(t)\qquad\forall t\in\RR.
	\]
	By L\'evy's continuity theorem, $\hat G_n\Rightarrow G_0$.
	
	\medskip

	For each fixed $x\in\RR$, the maps $\theta\mapsto \varphi(x-\theta)$ and $\theta\mapsto \varphi'(x-\theta)$ are bounded and continuous.
	Thus $\hat G_n\Rightarrow G_0$ (guaranteed since $\hat G_n$ is the NPMLE of $G_0$) implies
	\[
	\hat f_n(x)=\int \varphi(x-\theta)\,d\hat G_n(\theta)\to \int \varphi(x-\theta)\,dG_0(\theta)=f_0(x),
	\]
	\[
	\hat f_n'(x)=\int \varphi'(x-\theta)\,d\hat G_n(\theta)\to \int \varphi'(x-\theta)\,dG_0(\theta)=f_0'(x).
	\]
	Since Gaussian mixtures are strictly positive everywhere, $f_0(x)>0$ and $\hat f_n(x)>0$ for all $x$, so the integrand
	\[
	\Psi(f)(x):=\frac{(f'(x))^2}{f(x)}
	\]
	satisfies $\Psi(\hat f_n)(x)\to \Psi(f_0)(x)$ pointwise on $\RR$.
	
	\medskip

	Using $\varphi'(t)=-(t)\varphi(t)$, we can write
	\[
	\hat f_n'(x)=\int \varphi'(x-\theta)\,d\hat G_n(\theta)=\int (\theta-x)\varphi(x-\theta)\,d\hat G_n(\theta).
	\]
	Applying Cauchy--Schwarz with respect to the probability measure $\hat G_n$ gives, for each $x$,
	\begin{align*}
		\Psi(\hat f_n)(x)
		&=\frac{\left(\int (\theta-x)\varphi(x-\theta)\,d\hat G_n(\theta)\right)^2}{\int \varphi(x-\theta)\,d\hat G_n(\theta)}\\
		&=\frac{\left(\int (\theta-x)\sqrt{\varphi(x-\theta)}\;\sqrt{\varphi(x-\theta)}\,d\hat G_n(\theta)\right)^2}{\int \varphi(x-\theta)\,d\hat G_n(\theta)}\\
		&\le \frac{\left(\int (\theta-x)^2\varphi(x-\theta)\,d\hat G_n(\theta)\right)\left(\int \varphi(x-\theta)\,d\hat G_n(\theta)\right)}{\int \varphi(x-\theta)\,d\hat G_n(\theta)}\\
		&= \int (\theta-x)^2\varphi(x-\theta)\,d\hat G_n(\theta)
		=: D_n(x).
	\end{align*}
	Thus $0\le \Psi(\hat f_n)\le D_n$ pointwise.
	
	For each fixed $x$, the function $\theta\mapsto (\theta-x)^2\varphi(x-\theta)$ is bounded and continuous
	(boundedness holds since $\sup_{y\in\RR}y^2\varphi(y)<\infty$).
	Hence $\hat G_n\Rightarrow G_0$ implies $D_n(x)\to D_0(x)$ pointwise, where
	\[
	D_0(x):=\int (\theta-x)^2\varphi(x-\theta)\,dG_0(\theta).
	\]
	
	\medskip
	By Fubini's theorem and the change of variables $y=x-\theta$,
	\begin{align*}
		\int_{\RR} D_n(x)\,dx
		&=\int_{\RR}\int_{\RR} (x-\theta)^2\varphi(x-\theta)\,dx\,d\hat G_n(\theta)\\
		&=\int_{\RR}\left(\int_{\RR} y^2\varphi(y)\,dy\right)\,d\hat G_n(\theta)
		=\int_{\RR} 1\,d\hat G_n(\theta)=1.
	\end{align*}
	Similarly, $\int_{\RR}D_0(x)\,dx=1$.
	
	Taking stock, we have shown:
	\[
	0\le \Psi(\hat f_n)\le D_n,\qquad 
	\Psi(\hat f_n)\to \Psi(f_0)\ \text{pointwise}, 
	\]and 
	\[
	D_n\to D_0\ \text{pointwise},\qquad 
	\int D_n\to \int D_0=1.
	\]
	By Pratt's lemma (Lemma~\ref{lem:pratt}),
	\[
	\int_{\RR}\Psi(\hat f_n)(x)\,dx \to \int_{\RR}\Psi(f_0)(x)\,dx,
	\]
	i.e.\ $I(\hat f_n)\to I(f_0)$.
\end{proof}

\begin{lemma} \label{lem: universal} 
Suppose $\hat f_n^M = \varphi * P^M_n \in M$ and $f^M = \varphi * P^M \in M$.	
Let $\hat g^M_n = (\log \hat f_n^M)'$ and $g^M = (\log f^M)'$ and 
define $\delta^M(x) = x + g^M(x)$ and $\hat \delta^M(x) = x+\hat g^M_n(x)$. 
Let $R(\theta, \delta) = \E_{Y \sim N(\theta,1)}[(\delta(Y)-\theta)^2]$. 
Assume $\sup_n \sup_{\theta \in \RR} R(\theta, \hat \delta^M) <\infty$ 
and $\sup_{\theta \in \RR} R(\theta, \delta^M) <\infty$, then there exists a 
universal constant $C$ such that 
	\[
	\sup_n \| \hat g_n^M\|_{\infty} \leq C, \quad \|g^M\|_\infty \leq C
	\]
	\end{lemma}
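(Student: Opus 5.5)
The plan is to exploit the monotonicity of Bayes rules in the Gaussian location model and turn a large value of the score into a large pointwise risk at one well-chosen $\theta$.

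\emph{Monotonicity.} Since $\hat f_n^M=\varphi*P_n^M$ and $f^M=\varphi*P^M$ are Gaussian mixtures, Tweedie's formula gives
\[
\hat\delta^M(x)=x+\hat g^M_n(x)=\E_{P_n^M}[\theta\mid X=x].
\]
Differentiating under the integral, which is legitimate because $\varphi$ and its derivatives are bounded, gives the second-order Tweedie identity
\[
\frac{d}{dx}\hat\delta^M(x)=\mathrm{Var}_{P_n^M}(\theta\mid X=x)\ge 0 .
\]
So $\hat\delta^M$ is nondecreasing for every $n$, and the same argument applies to $\delta^M$. This uses only that $P$ is a probability measure and that $f=\varphi*P>0$ everywhere.

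\emph{Risk lower bound.} Let
\[
B:=\max\Bigl\{\sup_n\sup_\theta R(\theta,\hat\delta^M),\ \sup_\theta R(\theta,\delta^M)\Bigr\}<\infty,
\]
which is finite by hypothesis; in the Hodges--Lehmann setting it is just the risk bound $1+t$. Fix any $x_0\in\RR$ and suppose $\hat g^M_n(x_0)=a>0$. Evaluate the risk at $\theta=x_0$. On the event $\{Y\ge x_0\}$, which has probability $1/2$ under $Y\sim\NN(x_0,1)$, monotonicity gives
\[
\hat\delta^M(Y)-x_0\ \ge\ \hat\delta^M(x_0)-x_0\ =\ a .
\]
Hence
\[
B\ \ge\ R(x_0,\hat\delta^M)\ \ge\ \E_{x_0}\bigl[(\hat\delta^M(Y)-x_0)^2\,1\{Y\ge x_0\}\bigr]\ \ge\ a^2/2 .
\]
If instead $a<0$, the same argument on $\{Y\le x_0\}$ gives $\hat\delta^M(Y)-x_0\le a$ there, so again $a^2/2\le B$.

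\emph{Conclusion.} Combining the two cases, $|\hat g^M_n(x_0)|\le\sqrt{2B}$ for every $x_0$ and every $n$. The identical argument applied to $\delta^M$ gives $|g^M(x_0)|\le\sqrt{2B}$. We may therefore take $C=\sqrt{2B}$, which depends only on the risk bound and not on $n$ or $x$.

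\emph{Main obstacle.} The only delicate point is justifying the monotonicity identity for arbitrary mixing measures. This requires differentiating $\int\theta\,\varphi(x-\theta)\,dP(\theta)$ in $x$, which calls for dominated convergence using $|\theta|\,|x-\theta|\,\varphi(x-\theta)\le c(1+|x|)^2$ locally uniformly in $x$. If that step is awkward, an alternative is to avoid derivatives altogether: use the monotone likelihood ratio property of the Gaussian family to show directly that the posterior distribution of $\theta$ is stochastically increasing in $x$, which already implies that the posterior mean is nondecreasing.
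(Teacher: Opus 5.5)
Your proposal is correct and follows the paper's proof. Both use monotonicity of the Tweedie posterior-mean rule together with the median bound $R(\theta,\delta)\ge(\delta(\theta)-\theta)^2/2$ at $\theta=x$, which gives $\|g\|_\infty\le\sqrt{2B}$ with $B$ the common risk bound (the paper's $1+\bar t$). Your explicit split into the cases $g(x_0)>0$ and $g(x_0)<0$ is slightly more careful than the paper's one-sided chain. That chain uses $(\delta(X)-\theta)^2\ge(\delta(\theta)-\theta)^2$ on $\{\delta(X)\ge\delta(\theta)\}$, which holds only when $\delta(\theta)\ge\theta$.
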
 
	
	\begin{proof}
	Note that,
	\[
	R(\theta, \delta) = \E_{X \sim N(\theta,1)}[(\delta(X)-\theta)^2].
	\]
	If $\delta$ is nondecreasing, then $\delta(\theta)$ is a median of $\delta(X)$ 
	when $X \sim N(\theta,1)$. This implies that, 
	\[
	R(\theta, \delta) \geq \frac{(\delta(\theta)-\theta)^2}{2}.
	\]
	And, 
	\[\E[(\delta(X)-\theta)^2] \geq \E[(\delta(X) - \theta)^2 1\{ \delta(X) \geq 
	\delta(\theta)\}] \geq (\delta(\theta)-\theta)^2 \mathbb{P}(\delta(X) \geq 
	\delta(\theta)) = \frac{1}{2} (\delta(\theta)-\theta)^2.
	\]
	Since $g(x) = \delta(x) - x$ for all $x$, let $x = \theta$, we have $g(\theta) = \delta(\theta) - \theta$, hence 
	\[
	R(\theta,\delta) \geq (g(\theta))^2/2
	\]
	Note $\sup_\theta R(\theta, \delta) \leq 1+\bar t$, then 
	\[
	\|g\|_\infty \leq \sqrt{2(1+\bar t)}
	\]
	Lastly, $\delta^M$ and $\hat \delta^M$ are monotone function since by Tweedie formula, 
	they are the posterior means under the priors $P$ and $P_n$ respectively.  
	Applying the inequality to both $\delta^M$ and $\hat \delta^M$, we obtain the conclusion. 
	\end{proof}

\begin{lemma}[Pratt's lemma, Theorem 5.5 in \cite{gut2013probability}]\label{lem:pratt}
	Let $X_n,\,Y_n,\,X,\,Y$ be nonnegative measurable functions on a measure space $(\Omega,\mathcal{A},\mu)$ such that
	$0\le X_n\le Y_n$ for all $n$,
	$X_n\to X$ a.e.,
	$Y_n\to Y$ a.e.,
	and $\int Y_n\,d\mu \to \int Y\,d\mu<\infty$.
	Then $\int X_n\,d\mu \to \int X\,d\mu$.
\end{lemma}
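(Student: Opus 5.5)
The plan is to apply Fatou's lemma twice, once to $X_n$ and once to the nonnegative differences $Y_n - X_n$, and then use the convergence $\int Y_n\,d\mu \to \int Y\,d\mu < \infty$ to cancel the $Y$-terms. This is the standard proof of the generalized dominated convergence theorem.

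First I record two facts. Since $0 \le X_n \le Y_n$ for every $n$ and both sequences converge a.e., passing to the limit gives $0 \le X \le Y$ a.e. Hence $\int X\,d\mu \le \int Y\,d\mu < \infty$, so all limiting quantities are finite. Also, because $\int Y_n\,d\mu \to \int Y\,d\mu < \infty$, we have $\int Y_n\,d\mu < \infty$ for all large $n$. Consequently $\int X_n\,d\mu < \infty$ for such $n$, and the subtraction below is legitimate (discard finitely many indices if needed).

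\emph{Lower bound.} Fatou's lemma applied to $X_n \ge 0$ with $X_n \to X$ a.e. gives
\[
\int X\,d\mu \le \liminf_{n\to\infty} \int X_n\,d\mu .
\]

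\emph{Upper bound.} The functions $Y_n - X_n$ are nonnegative and converge a.e. to $Y - X$, so Fatou's lemma gives
\[
\int (Y - X)\,d\mu \le \liminf_{n\to\infty} \int (Y_n - X_n)\,d\mu
= \lim_{n\to\infty} \int Y_n\,d\mu - \limsup_{n\to\infty} \int X_n\,d\mu
= \int Y\,d\mu - \limsup_{n\to\infty} \int X_n\,d\mu .
\]
Since $\int Y\,d\mu$ is finite, it can be cancelled from both sides, which yields
\[
\limsup_{n\to\infty} \int X_n\,d\mu \le \int X\,d\mu .
\]
Combining the two bounds gives $\int X_n\,d\mu \to \int X\,d\mu$.

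The only delicate point is this finiteness bookkeeping, which justifies splitting the integral of $Y_n - X_n$ and cancelling $\int Y\,d\mu$. It is handled by the remark above that $\int Y\,d\mu < \infty$ and that $\int Y_n\,d\mu$ is eventually finite. No other obstacle is expected.
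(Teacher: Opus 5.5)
Your double application of Fatou's lemma (to $X_n$ and to $Y_n - X_n$), together with the finiteness bookkeeping you describe, is correct and is the standard proof of Pratt's lemma; the paper gives no argument of its own and simply cites Gut's Theorem 5.5. Your argument also never uses finiteness of $\mu$, which matters here because the paper applies the lemma with Lebesgue measure on $\RR$.
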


\section{Worst Pointwise Risk of the Huber Rule}\label{app.C}
In this appendix, we discuss the worst pointwise risk of the Huber rule for each of the three DGPs 
we considered in Section \ref{sec: simu}. We then provide additional simulations on the worst pointwise 
risk for the empirical Mallows and Huber rule and show that they track the Oracle worst risk closely for 
$n$ moderately large. 

For any given $\epsilon$, we denote $f^H_\epsilon$ as the least favorable Huber density solving 
\[
\underset{F \in \mathcal{F}_\epsilon}{\min} I(F) 
\]
with $\mathcal{F}_\epsilon = \{F= (1-\epsilon) \Phi * G_0 + \epsilon W\}$ and denote its score function as $g_\epsilon^H(x) := (\log f^H_\epsilon(x))'$. The next three Propositions show that for (1) $G_0 = N(0,A)$, (2) $G_0 = U[-B,B]$ 
and (3) $G_0 = \frac{1}{2} \delta_a + \frac{1}{2} \delta_{-a}$, we can evaluate the worst pointwise risk of the Huber decision rule $\delta_\epsilon^H(x) = x + g_\epsilon^H(x)$ by finding $\max_x | g_\epsilon^H(x)|$. 

\begin{proposition}
	If $G_0 = N(0,A)$, then $g^H_\epsilon(x) = k_\epsilon 1\{x \leq -(A+1)k_\epsilon\} - k_\epsilon 1\{x \geq (A+1)k_\epsilon\} + 1\{|x|<(A+1)k_\epsilon\}(-\frac{1}{1+A}x)$ and $\sup_\theta R(\delta_\epsilon^H, \theta) \leq 1+k_\epsilon^2$. Furthermore, $k_\epsilon = \max_x|g_\epsilon^H(x)|$.
\end{proposition}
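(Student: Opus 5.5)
The plan is to first identify the least favorable Huber density explicitly, and then bound the pointwise risk of the resulting rule through Stein's lemma, as in the Casella--Strawderman example above. When $G_0 = \NN(0,A)$, the uncontaminated marginal is $f_0 = \varphi * G_0$, the $\NN(0,A+1)$ density. This density is log-concave and symmetric, with $-(\log f_0)'(x) = x/(A+1)$. Huber's classical characterization of the least favorable distribution in $\mathcal{F}_\epsilon = \{(1-\epsilon)F_0 + \epsilon W\}$ for log-concave $f_0$ applies directly. The same conclusion can be reached by specializing the KKT argument in the proof of Proposition \ref{prop:Huber}, where now no central ``$\cosh$'' region is needed because $f_0$ is unimodal. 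Either route gives the following. There is a threshold $x_0$ such that $f^H_\epsilon = (1-\epsilon) f_0$ on $[-x_0,x_0]$, with exponential tails $(1-\epsilon) f_0(x_0) e^{-k_\epsilon(|x|-x_0)}$ beyond it. The tail rate is fixed by $C^1$-matching, $k_\epsilon = -(\log f_0)'(x_0) = x_0/(A+1)$, so $x_0 = (A+1)k_\epsilon$. Here $k_\epsilon$ is determined by the normalization $\int f^H_\epsilon = 1$. Differentiating $\log f^H_\epsilon$ piecewise gives exactly the stated clipped-linear form of $g^H_\epsilon$.

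Next, I will read off $\max_x |g^H_\epsilon(x)|$. On $|x| < (A+1)k_\epsilon$ we have $|g^H_\epsilon(x)| = |x|/(A+1) < k_\epsilon$, and outside this interval $|g^H_\epsilon| = k_\epsilon$. Hence $\max_x |g^H_\epsilon(x)| = k_\epsilon$, and the maximum is attained.

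For the risk bound, note that $g^H_\epsilon$ is continuous and piecewise linear, hence Lipschitz and absolutely continuous with bounded a.e. derivative. Stein's lemma therefore applies to $\delta^H_\epsilon(X) = X + g^H_\epsilon(X)$ with $X \sim \NN(\theta,1)$:
\[
R(\delta^H_\epsilon,\theta) = 1 + \E_\theta\big[2 (g^H_\epsilon)'(X) + (g^H_\epsilon(X))^2\big].
\]
Almost everywhere, $(g^H_\epsilon)' = -1/(1+A) \le 0$ on the central interval and $0$ in the tails. Combined with $(g^H_\epsilon)^2 \le k_\epsilon^2$, this gives $R(\delta^H_\epsilon,\theta) \le 1 + k_\epsilon^2$ for every $\theta$.

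The bound is also sharp, which explains the equality $\sup_\theta R = 1 + k_\epsilon^2$ asserted in Section \ref{sec: simu}. As $\theta \to \infty$, $P_\theta(X > (A+1)k_\epsilon) \to 1$, so the derivative term vanishes and $(g^H_\epsilon)^2 \to k_\epsilon^2$ in expectation by bounded convergence.

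The only step needing care is the first: justifying that the Huber solution has no modification in the center. That is where log-concavity of $f_0$ is used. I would cite Huber (1964), or verify the KKT conditions directly, rather than re-derive it. The rest is routine.
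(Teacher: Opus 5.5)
Your proposal is correct and follows essentially the paper's route. The paper takes the clipped-linear score from the Efron--Morris limited translation rule, while you derive it from Huber's log-concave characterization; both give the normalization $(1-\epsilon)[2\Phi(D)-1+2\varphi(D)/D]=1$ with $D=k_\epsilon\sqrt{A+1}$. The paper then applies Stein's lemma exactly as you do, using $(g^H_\epsilon)'\le 0$ and $(g^H_\epsilon)^2\le k_\epsilon^2$.

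Your extra $\theta\to\infty$ argument, showing that $1+k_\epsilon^2$ is approached, goes beyond the paper's proof. It justifies the equality $\sup_\theta R(\delta^H_\epsilon,\theta)=1+k_\epsilon^2$ that the paper asserts in its simulation section.
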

\begin{proof}
	The form of the Huber rule $\delta_\epsilon^H (x) = x + g_\epsilon^H(x)$ is the same as the 
	limited translation rule in \cite{efronmorris}, which provides the form of $g_\epsilon^H(x)$, 
	where $k_\epsilon^2 = D^2/(A+1)$ and $k_\epsilon$ depends on $\epsilon$ through the (Huber)
	identity $(1-\epsilon)[2\Phi(D) - 1] + 2(1-\epsilon) \phi(D)/D = 1$. 
	
	By Stein's Lemma, we have 
	\[
	R(\delta_\epsilon^H, \theta) = 1+ \mathbb{E}_{\theta}[(g_\epsilon^H(X))^2 + 2(g_\epsilon^H(X))'].
	\]
	So it suffices to show $(g_\epsilon^H(x))^2 + 2(g_\epsilon^H(x))' \leq k_\epsilon^2$ for all $x$. 
	Specifically we show that 
	\begin{align*}
	R(\delta_\epsilon^H, \theta) &= 1 + \mathbb{E}_\theta \left  [ 1\{|X|<(A+1)k_\epsilon\}\Big  (\frac{X^2}{(1+A)^2} - 2\frac{1}{A+1}\Big ) + 1\{|X|>(A+1)k_\epsilon\} k_\epsilon^2\right ] \\
	& \leq 1+k_\epsilon^2
	\end{align*}
	where the last inequality is because $1\{|X|<(A+1)k_\epsilon\}\Big  (\frac{X^2}{(1+A)^2} - 2\frac{1}{A+1}\Big )  \leq k_\epsilon^2 - \frac{2}{1+A} \leq k_\epsilon^2$. 
	Furthermore given the form of $g_\epsilon^H(x)$, we have $\max_x |g_\epsilon^H(x)| = k_\epsilon$.  
\end{proof}

\begin{proposition}
	If $G_0 = Unif[-B,B]$ for some $B > 0$, $\sup_\theta R(\delta_\epsilon^H, \theta) \leq 1+k_\epsilon^2$ where $k_\epsilon =\max_x|g_\epsilon^H(x)|$ and $k_\epsilon$ depends on $\epsilon$ through the  identity $1 = (1-\epsilon) \Big[ \int_{-b}^b  f_{G_0}(x) dx + \frac{f_{G_0}(b) + f_{G_0}(-b)}{k_\epsilon}\Big]$ with $k_\epsilon = - (\log f_{G_0})' (b)$.
\end{proposition}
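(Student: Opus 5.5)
The plan follows the Gaussian case. By Stein's lemma,
\[
R(\delta_\epsilon^H,\theta) = 1 + \E_\theta\big[(g_\epsilon^H(X))^2 + 2 (g_\epsilon^H(X))'\big],
\]
so it suffices to show the pointwise bound $(g_\epsilon^H(x))^2 + 2(g_\epsilon^H(x))' \le k_\epsilon^2$ for every $x$ at which $g_\epsilon^H$ is differentiable. The score $g_\epsilon^H$ is continuous and piecewise smooth, so Stein's lemma applies.

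\textbf{Step 1: form of the least favorable density.} First identify the Huber least favorable density for $f_0 = f_{G_0}$ with $G_0 = U[-B,B]$. Here
\[
f_0(x) = \frac{1}{2B}\big[\Phi(x+B) - \Phi(x-B)\big],
\]
which is symmetric and log-concave, since it is the convolution of two log-concave densities. For log-concave $f_0$ the classical Huber argument gives
\[
f^H_\epsilon(x) = \begin{cases} (1-\epsilon) f_0(x) & |x| \le b,\\ (1-\epsilon) f_0(b)\, e^{-k_\epsilon(|x|-b)} & |x| > b, \end{cases}
\qquad k_\epsilon = -(\log f_0)'(b).
\]
The constraint that $f^H_\epsilon$ integrates to one is exactly the stated identity. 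To justify this form, I would either invoke Huber's (1964) characterization for log-concave $f_0$, or repeat the KKT argument of Appendix A. In the KKT route, log-concavity rules out a central free region because no $\cosh$ piece is needed: $-(\log f_0)'$ is monotone, so the contact set is the single interval $[-b,b]$.

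\textbf{Step 2: pointwise bound in the two regions.}
\begin{itemize}
\item On $|x| > b$: $g = \mp k_\epsilon$ and $g' = 0$, so the expression equals $k_\epsilon^2$.
\item On $|x| < b$: $g = (\log f_0)'$ and $g' = (\log f_0)'' \le 0$ by log-concavity. Moreover $|(\log f_0)'(x)|$ is nondecreasing in $|x|$, because $(\log f_0)'$ is nonincreasing and odd. Hence $|g(x)| \le |(\log f_0)'(b)| = k_\epsilon$, and the expression is at most $k_\epsilon^2$.
\end{itemize}
The same monotonicity shows $\max_x |g_\epsilon^H(x)| = k_\epsilon$, attained on $|x| \ge b$.

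\textbf{Main obstacle.} The delicate point is verifying log-concavity of $f_0$ and the resulting monotonicity of $(\log f_0)'$. Log-concavity follows from Pr\'ekopa's theorem, since $f_0 = \varphi * \frac{1}{2B}1_{[-B,B]}$ is a convolution of log-concave functions. This is also what confirms the Step 1 structure: a single contact interval with exponential tails, and no $\cosh$ middle piece.
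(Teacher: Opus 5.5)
Your proposal is correct and follows essentially the same route as the paper. Both arguments rest on the same ingredients: log-concavity of $f_{G_0}$ as a convolution of log-concave densities; the Huber score equal to $(\log f_{G_0})'$ on $[-b,b]$ and $\mp k_\epsilon$ outside; Stein's lemma; and the pointwise bound $(g_\epsilon^H)^2 + 2(g_\epsilon^H)' \le k_\epsilon^2$, which uses $(\log f_{G_0})'' \le 0$ and $|(\log f_{G_0})'| \le k_\epsilon$ on $[-b,b]$. You also supply steps the paper only asserts: the justification of the least favorable form via Huber's log-concave characterization, and the explicit use of oddness of $(\log f_{G_0})'$.
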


\begin{proof} 
	Since $G_0$ is log-concave and the normal density is log-concave, we know $f_{G_0}$ is log-concave. 
	Hence $g_0(x) :=(\log f_{G_0}(x))'$ is nonincreasing and $g_0'(x)\leq 0$. Then the Huber score 
	function takes the form 
	\[
	g_\epsilon^H(x) = 1\{|x|\leq b\} g_0(x) + 1\{x > b\} (-k_\epsilon) + 1\{x < -b\} k_\epsilon
	\]
	with $-g_0(b) = k_\epsilon$. With this form, we conclude by Stein's Lemma that 
	\begin{align*}
		R(\delta_\epsilon^H, \theta) &= 1+\mathbb{E}_{\theta}[1\{ |X| \leq b\} g_0^2(X) + 1\{|X|\geq b\} k_\epsilon^2] + 2\mathbb{E}_{\theta}[1\{ |X| \leq b\} g_0'(X)] \\
		& \leq 1+\mathbb{E}_{\theta}[1\{ |X| \leq b\} g_0^2(X) + 1\{|X|\geq b\} k_\epsilon^2]\\
		& \leq 1+ k_\epsilon^2
	\end{align*} 
	where the first inequality follows from $g_0'(x)\leq 0$ for all $x$ and the second inequality from 
	the fact that $g_0$ is monotone non-increasing and hence $|g_0(x)| \leq k$ for all $|x| \leq b$. 
	As a consequence, 
	\[
	\sup_\theta R(\delta_\epsilon^H, \theta) \leq 1+k_\epsilon^2
	\]
	and lastly given the form of $g_\epsilon^H(x)$ we conclude $k_\epsilon = \max_x |g_\epsilon^H(x)|$. 
\end{proof}

\begin{proposition}
	If $G_0 = \frac{1}{2}\delta_a + \frac{1}{2} \delta_{-a}$, $\sup_\theta R(\delta_\epsilon^H, \theta) \leq 1+k_\epsilon^2$ where $k_\epsilon =\sup_x|g_\epsilon^H(x)|$. 
\end{proposition}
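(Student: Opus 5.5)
The plan follows the template of the two preceding propositions: write the risk through Stein's lemma and reduce the claim to a pointwise inequality on the score. For $\delta_\epsilon^H(x)=x+g_\epsilon^H(x)$, Stein's lemma gives
\[
R(\delta_\epsilon^H,\theta)=1+\E_\theta\big[(g_\epsilon^H(X))^2+2(g_\epsilon^H(X))'\big].
\]
It therefore suffices to show that $(g_\epsilon^H(x))^2+2(g_\epsilon^H(x))'\le k_\epsilon^2$ for all $x$, with $k_\epsilon=\sup_x|g_\epsilon^H(x)|$. Here $(g_\epsilon^H)'$ is understood as the a.e.\ derivative. This is legitimate because $g_\epsilon^H$ is continuous and piecewise smooth: the KKT argument of Proposition~\ref{prop:Huber} matches $u^*$ and $(u^*)'$ at the junction points, so the score has no jumps and Stein's lemma applies to this weakly differentiable function. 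We also have $(g^2)\le k_\epsilon^2$ trivially, so what remains is to control the sign of $(g_\epsilon^H)'$, or more precisely the combination $g^2+2g'$, on each piece.

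The explicit form of $g_\epsilon^H$ is read off from Proposition~\ref{prop:Huber}, with $\pm 2$ replaced by $\pm a$. There are two regimes.
\begin{itemize}
\item When $a$ is small enough that $f_0$ is log-concave (for $a\le 1$, $f_0$ is unimodal and log-concave), the central piece is absent. In this case the argument of the uniform case applies verbatim: $g_0'\le 0$ and $|g_0|\le k_\epsilon$ on $[-b,b]$.
\item In the bimodal case the score has four kinds of pieces, taken in order below.
\end{itemize}

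\textbf{Tails $|x|>b$.} Here $g=\mp k_\epsilon$, so $g^2+2g'=k_\epsilon^2$.

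\textbf{Central piece $|x|\le c$.} Here $g=k\tanh(kx/2)$, and
\[
g^2+2g'=k^2\tanh^2(kx/2)+k^2\sech^2(kx/2)=k^2
\]
exactly, reflecting the fact that $u''=(k^2/4)u$. The constant $k$ on this piece is the same $k=\sqrt{\lambda}$ as in the tails, so this gives equality with $k_\epsilon^2$.

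\textbf{Intermediate pieces $c\le |x|\le b$.} Here $g=g_0=(\log f_0)'$, and
\[
g_0^2+2g_0'=2f_0''/f_0-g_0^2=4\,\frac{(\sqrt{f_0})''}{\sqrt{f_0}}.
\]
We need this to be at most $k^2$ on the constrained set. This is exactly the complementary KKT condition: $\eta(x)\ge 0$ on the contact set, with $u^*=\psi$ there, gives $4\psi''/\psi=\lambda-\eta\le\lambda=k^2$.

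The key step is therefore to make the identity $g^2+2g'=4u''/u$ explicit for $u=\sqrt{f^H_\epsilon}$. With it, the KKT system of Proposition~\ref{prop:Huber} yields $4u''/u=\lambda-\eta\le\lambda=k^2$ everywhere, which is a uniform argument covering all pieces at once. Taking expectations then gives $\sup_\theta R(\delta_\epsilon^H,\theta)\le 1+k^2$.

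It remains to show $k=k_\epsilon=\sup_x|g_\epsilon^H(x)|$.
\begin{itemize}
\item On the tails $|g|=k$.
\item On $[-c,c]$, $|k\tanh(kx/2)|<k$.
\item On $[c,b]$, $|g_0|\le k$. This follows by integrating: $g'\le (k^2-g^2)/2$ prevents $|g|$ from crossing $k$ from below before reaching $b$. More concretely, if $g^2>k^2$ at some point, the Riccati inequality $g'\le (k^2-g^2)/2<0$ forces $g$ to decrease. Starting from $g(c)=k\tanh(kc/2)<k$, $g$ can never exceed $k$ going rightward, and since $g(b)=-k$ it cannot fall below $-k$ before $b$ without then being unable to come back up.
\end{itemize}

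The main obstacle is the intermediate region: justifying the KKT inequality $\eta\ge 0$ rigorously, i.e.\ that $(\sqrt{f_0})''/\sqrt{f_0}\le k^2/4$ on $[c,b]$. This is not automatic for the bimodal $f_0$. If the Lagrangian route is deemed too heuristic, I would instead verify it directly for $f_0=\tfrac12\varphi(x-a)+\tfrac12\varphi(x+a)$. Here $g_0=-x+a\tanh(ax)$ and $g_0'=-1+a^2\sech^2(ax)$, so
\[
g_0^2+2g_0'=(x-a\tanh(ax))^2-2+2a^2\sech^2(ax).
\]
I would show that this is at most its value at $x=b$, namely $k^2+2g_0'(b)\le k^2$ when $g_0'(b)\le0$, checking monotonicity on $[c,b]$ by differentiating the expression in $x$.
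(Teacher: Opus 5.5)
The gap is in the intermediate region $c\le|x|\le b$, which is the only nontrivial part of the argument. Your Stein reduction, the tail and central pieces, and the identity $g^2+2g'=4u''/u$ are all fine. The problem is that citing $\eta\ge 0$ from the KKT system of Proposition~\ref{prop:Huber} is circular. That proof fixes $(b,c,k,A)$ using only equality conditions: continuity and smooth fit at $c$ and $b$, plus normalization. On the contact set the multiplier is then forced to equal $\eta=k^2-4\psi''/\psi=k^2-H$, where $H:=g_0^2+2g_0'$, and its sign is never checked. So $\eta\ge 0$ there is literally the inequality $H\le k^2$ you are trying to prove.

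Your idea can be made rigorous another way. Since $f^H_\epsilon$ is \emph{defined} as the minimizer, Huber's first-order condition gives $\int (g^2+2g')(f_1-f^H_\epsilon)\,dx\le 0$ for every $f_1\in\mathcal F_\epsilon$ with $I(f_1)<\infty$. Concentrate the contamination of $f_1$ near a point $x_0\in(c,b)$, and use that $g^2+2g'=k^2$ wherever the optimal contamination lives; this yields $H(x_0)\le k^2$. However, this presupposes that the Proposition~\ref{prop:Huber} candidate really is the minimizer, which is exactly what a direct check certifies.

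Your fallback does not work as stated, for two reasons.
\begin{itemize}
\item $H$ is not monotone on $[c,b]$. For $a=2$, near $x=a$ one has $g_0\approx 0$ and $g_0'\approx -1$, so $H\approx -2$, below both endpoint values. With $k^2\approx 0.69$ (the $\epsilon=0.2$ case) one gets roughly $H(c)\approx -1.7$, $H(a)\approx -1.99$ and $H(b)\approx -1.3$.
\item The maximum on $[c,b]$ need not be at $b$. As $\epsilon$ decreases to the level at which the central piece first appears, $c\downarrow 0$ and $k^2\uparrow 2(a^2-1)$. Hence $H(c)\to H(0)=2a^2-2\approx k^2$, whereas $H(b)=k^2+2g_0'(b)\approx k^2-2$.
\end{itemize}

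What is missing is a separate bound at the left endpoint, which is how the paper argues. Set $h(x)=k\tanh(kx/2)$ and $F=g_0-h$. Since $h^2+2h'=k^2$, one has $H-k^2=F^2+2hF+2F'$. Because $F(c)=0$ and $F>0$ on $(0,c)$, this gives $H(c)-k^2=2F'(c)\le 0$. At the other end, $H(b)-k^2=2g_0'(b)<0$. Finally, $H$ first decreases and then increases on $(0,\infty)$, so its maximum over $[c,b]$ is attained at an endpoint and is therefore at most $k^2$.

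Your Riccati argument for $\sup_x|g^H_\epsilon(x)|=k$ rests on the same inequality $H\le k^2$. For the stated bound, though, you only need $k\le\sup_x|g^H_\epsilon(x)|$, and that is immediate from the tails.
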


\begin{proof}
	We prove for the case where $a$ is large enough so that the density $f_{G_0}$ is multimodal and $b$ and $c$ below can be found. We showed in Proposition \ref{prop:Huber} for $a = 2$ (and in fact for any $a$ large enough) that the form of the score function is
	\[
	g_\epsilon^H(x) = \begin{cases} k_\epsilon & x \leq -b\\
		g_0(x) & x\in [-b,-c] \cup [c,b]\\
		k_\epsilon \cdot \tanh(k_\epsilon x/2) & |x| \leq c\\
		-k_\epsilon & x \geq b
	\end{cases} 
	\]
	where $g_0(x) = (\log f_{G_0}(x))'$ and $g_0(c) = k_\epsilon \cdot \tanh(k_\epsilon c/2)$ and $g_0(b) = -k$. For $a = 1$ and $k = 1$, we can visualize the score function (in comparison to $g_0$) in the following figure. 
	\begin{figure}[H]
		\includegraphics[scale = 0.6]{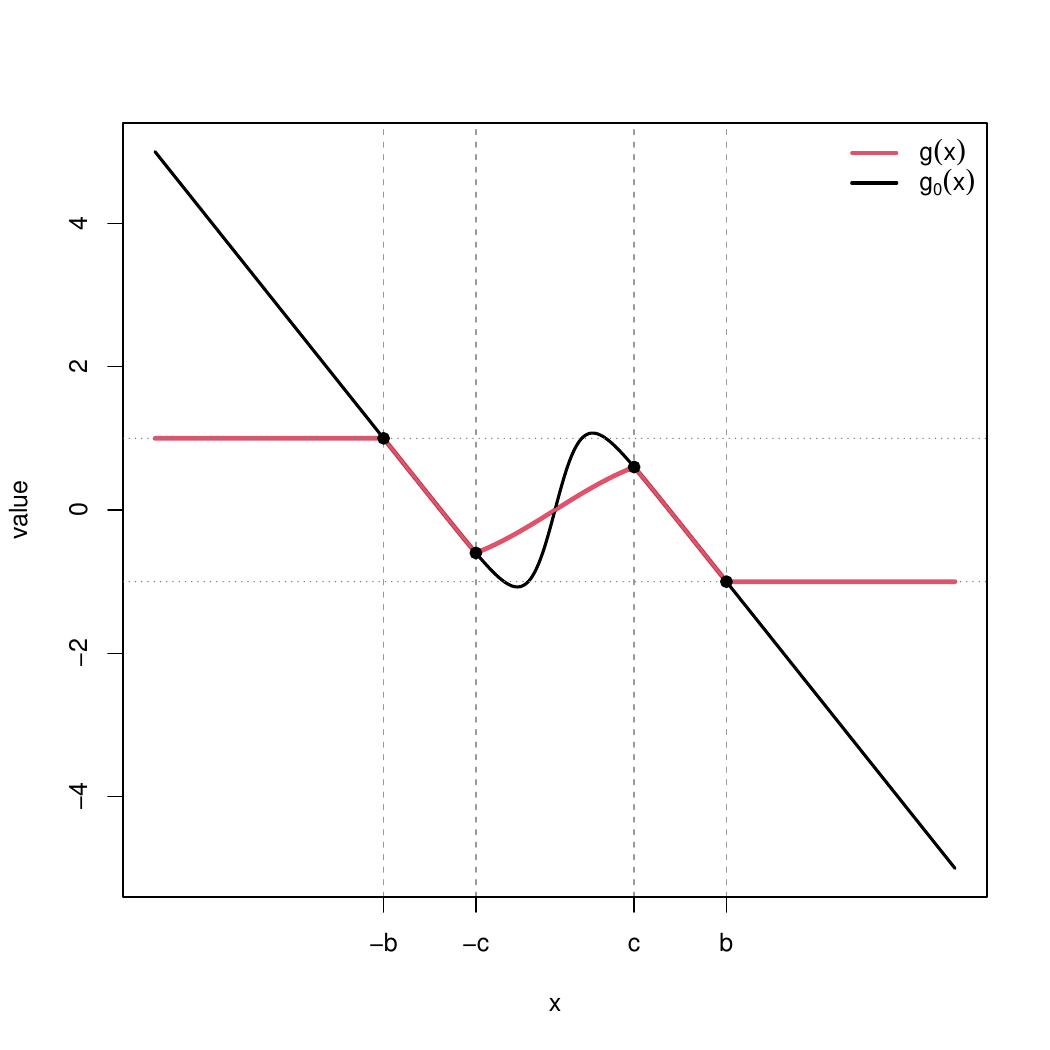}
		\caption{Huber Restricted score function for the Two Point Initial Prior: 
		$G_0 = \frac{1}{2}\delta_a + \frac{1}{2} \delta_{-a}$.  The initial score function,
		$g_0 (x)$, is unbounded while the piecewise linear Huber score is bounded.}
	\end{figure}
	It suffices to show that $(g_\epsilon^H(x))^2+ 2(g_\epsilon^H(x))' \leq k_\epsilon^2$ for all $x$, 
	then by Stein's Lemma we have $\sup_\theta R(\delta_\epsilon^H, \theta) \leq 1+k_\epsilon^2$. 
	It is easy to see that for $|x|\geq b$, $(g_\epsilon^H(x))^2+ 2(g_\epsilon^H(x))'= k_\epsilon^2$ 
	and for $|x|\leq c$, $(g_\epsilon^H(x))^2+ 2(g_\epsilon^H(x))' = k_\epsilon^2 \tanh^2(k_\epsilon x/2) + 
	k_\epsilon^2 \sech^2(k_\epsilon x/2) = k_\epsilon^2$. We can now show that on the region $x \in [c,b]$, 
	$(g_\epsilon^H(x))^2+ 2(g_\epsilon^H(x))' \leq k_\epsilon^2$ and symmetrically the same holds for 
	$x \in [-b,-c]$. Given $G_0 = \frac{1}{2} \delta_a + \frac{1}{2} \delta_{-a}$, we have 
	\[
	g_0(x) = (\log f_{G_0})' = -x + a\cdot \tanh(ax)
	\]
	Let $h(x) = k_\epsilon \cdot \tanh(k_\epsilon x/2)$ and $F(x) = g_0(x) - h(x)$. 
	We know given $g_0(c) = k_\epsilon \cdot \tanh(k_\epsilon c/2)$, $F(c) = 0$. 
	And $c$ is the first positive crossing, hence $F(x) > 0$ for $0 < x < c$. 
	We have already showed that $h^2(x) + 2h'(x) = k_\epsilon^2$, hence letting $H(x) = g_0^2(x) + 2g_0'(x)$,
	\[
	H(x) - k_\epsilon^2= H(x) - (h^2(x) + 2h'(x)) = F^2(x) + 2h(x) F(x) + 2 F'(x)
	\]
	At the two end points, when $x = c$, $H(c) = 2F'(c) \leq 0$ and when $x  =b$, since 
	$g_0(b) = -k_\epsilon$, $H(b) - k_\epsilon^2 = 2g_0'(b) < 0$, hence $H(b) \leq k_\epsilon^2$. 
	In the interior of $[c,b]$, we note $H(x)$ has exactly one critical point and that point is a 
	minimum on the region $x \in [0, +\infty)$, and $H'(x) <0$ for small $x >0$ and $H'(x) > 0$ for 
	large $x>0$, hence on $[0,\infty)$, $H$ is a function that first decreases and then increases. 
	Since $[c,b] \subset [0, +\infty)$, the maximum of $H(x)$ on this region has to be achieved 
	at the end point, hence $H(x) \leq \max \{H(b), H(c)\} \leq k_\epsilon^2 $ for all $x \in [c,b]$.  
	Lastly, given the explicit form of $g_\epsilon^H$, we conclude $k_\epsilon = \max_x |g_\epsilon^H(x)|$. 
\end{proof}

We now present some simulation results for the worst  pointwise risk of the Mallows and the Huber rules 
along with their empirical Bayes counterpart where we replace true prior $G_0$ by the NPMLE $\hat G$. 
For all simulations, we estimate the worst pointwise risk for Mallows rule by sampling a large sample 
$x_1, \dots, x_B$ from $N(\theta_h, 1)$ where $\theta_h$ is the mass point with the largest weight based 
on the solution for $H^*$, the worst risk is then approximated as 
$\frac{1}{B}\sum_{i = 1}^B (\delta^M(x_i) - \theta_h)^2$. For the Huber rule, the worst case risk is 
estimated as $\max_x |g^H(x)|$ where $g^H$ is the score function of Huber's least favorable density. 
The results are consistent with the fact that as $\epsilon$ increases, the worst pointwise risk 
decreases for all rules. As $n$ increases, the effect of replacing $G_0$ by the NPMLE $\hat G$ 
on the worst case pointwise risk bound is attenuated. 

\begin{table}[!tbp]
\begin{center}
\begin{tabular}{lrrrr}
\hline\hline
\multicolumn{1}{l}{}&\multicolumn{1}{c}{Huber-G0}&\multicolumn{1}{c}{Mallows-G0}&\multicolumn{1}{c}{Huber-Ghat}&\multicolumn{1}{c}{Mallows-Ghat}\tabularnewline
\hline
{\bfseries n =  100}&&&&\tabularnewline
~~$\epsilon = $ 0.05&$2.02$&$1.95$&$2.15$&$2.06$\tabularnewline
~~$\epsilon = $ 0.1&$1.68$&$1.64$&$1.73$&$1.69$\tabularnewline
~~$\epsilon = $ 0.2&$1.38$&$1.36$&$1.41$&$1.39$\tabularnewline
~~$\epsilon = $ 0.4&$1.16$&$1.15$&$1.17$&$1.16$\tabularnewline
\hline
{\bfseries n =  500}&&&&\tabularnewline
~~$\epsilon = $ 0.05&$2.02$&$1.95$&$2.14$&$2.06$\tabularnewline
~~$\epsilon = $ 0.1&$1.68$&$1.64$&$1.72$&$1.68$\tabularnewline
~~$\epsilon = $ 0.2&$1.38$&$1.36$&$1.40$&$1.39$\tabularnewline
~~$\epsilon = $ 0.4&$1.16$&$1.15$&$1.16$&$1.15$\tabularnewline
\hline
{\bfseries n =  1000}&&&&\tabularnewline
~~$\epsilon = $ 0.05&$2.02$&$1.95$&$2.06$&$1.99$\tabularnewline
~~$\epsilon = $ 0.1&$1.68$&$1.64$&$1.69$&$1.66$\tabularnewline
~~$\epsilon = $ 0.2&$1.38$&$1.36$&$1.39$&$1.37$\tabularnewline
~~$\epsilon = $ 0.4&$1.16$&$1.15$&$1.16$&$1.15$\tabularnewline
\hline
\end{tabular}
\caption{Worst case pointwise risk for various $\epsilon$, $G_0 = \NN (0,1)$ and various sample sizes\label{tab.sim1a_appendix}}\end{center}
\end{table}

\begin{table}[!tbp]
\begin{center}
\begin{tabular}{lrrrr}
\hline\hline
\multicolumn{1}{l}{}&\multicolumn{1}{c}{Huber-G0}&\multicolumn{1}{c}{Mallows-G0}&\multicolumn{1}{c}{Huber-Ghat}&\multicolumn{1}{c}{Mallows-Ghat}\tabularnewline
\hline
{\bfseries n =  100}&&&&\tabularnewline
~~$\epsilon = $ 0.05&$2.10$&$2.02$&$2.13$&$2.04$\tabularnewline
~~$\epsilon = $ 0.1&$1.66$&$1.64$&$1.71$&$1.67$\tabularnewline
~~$\epsilon = $ 0.2&$1.35$&$1.34$&$1.37$&$1.36$\tabularnewline
~~$\epsilon = $ 0.4&$1.13$&$1.12$&$1.15$&$1.13$\tabularnewline
\hline
{\bfseries n =  500}&&&&\tabularnewline
~~$\epsilon = $ 0.05&$2.10$&$2.02$&$2.19$&$2.11$\tabularnewline
~~$\epsilon = $ 0.1&$1.66$&$1.64$&$1.72$&$1.68$\tabularnewline
~~$\epsilon = $ 0.2&$1.35$&$1.34$&$1.37$&$1.36$\tabularnewline
~~$\epsilon = $ 0.4&$1.13$&$1.12$&$1.14$&$1.13$\tabularnewline
\hline
{\bfseries n =  1000}&&&&\tabularnewline
~~$\epsilon = $ 0.05&$2.10$&$2.02$&$2.12$&$2.05$\tabularnewline
~~$\epsilon = $ 0.1&$1.66$&$1.64$&$1.68$&$1.65$\tabularnewline
~~$\epsilon = $ 0.2&$1.35$&$1.34$&$1.36$&$1.35$\tabularnewline
~~$\epsilon = $ 0.4&$1.13$&$1.12$&$1.14$&$1.12$\tabularnewline
\hline
\end{tabular}
\caption{Worst case pointwise risk for various $\epsilon$, $G_0 = U[-2,2]$ and various sample sizes\label{tab.sim2a_appendix}}\end{center}
\end{table}

\begin{table}[!tbp]
\begin{center}
\begin{tabular}{lrrrr}
\hline\hline
\multicolumn{1}{l}{}&\multicolumn{1}{c}{Huber-G0}&\multicolumn{1}{c}{Mallows-G0}&\multicolumn{1}{c}{Huber-Ghat}&\multicolumn{1}{c}{Mallows-Ghat}\tabularnewline
\hline
{\bfseries n =  100}&&&&\tabularnewline
~~$\epsilon = $ 0.05&$3.02$&$2.41$&$2.74$&$2.25$\tabularnewline
~~$\epsilon = $ 0.1&$2.32$&$2.09$&$2.15$&$1.95$\tabularnewline
~~$\epsilon = $ 0.2&$1.69$&$1.67$&$1.61$&$1.59$\tabularnewline
~~$\epsilon = $ 0.4&$1.22$&$1.25$&$1.20$&$1.22$\tabularnewline
\hline
{\bfseries n =  500}&&&&\tabularnewline
~~$\epsilon = $ 0.05&$3.02$&$2.41$&$2.87$&$2.32$\tabularnewline
~~$\epsilon = $ 0.1&$2.32$&$2.09$&$2.23$&$2.02$\tabularnewline
~~$\epsilon = $ 0.2&$1.69$&$1.67$&$1.65$&$1.63$\tabularnewline
~~$\epsilon = $ 0.4&$1.22$&$1.25$&$1.21$&$1.23$\tabularnewline
\hline
{\bfseries n =  1000}&&&&\tabularnewline
~~$\epsilon = $ 0.05&$3.02$&$2.41$&$2.91$&$2.33$\tabularnewline
~~$\epsilon = $ 0.1&$2.32$&$2.09$&$2.25$&$2.04$\tabularnewline
~~$\epsilon = $ 0.2&$1.69$&$1.67$&$1.66$&$1.64$\tabularnewline
~~$\epsilon = $ 0.4&$1.22$&$1.25$&$1.21$&$1.23$\tabularnewline
\hline
\end{tabular}
\caption{Worst case pointwise risk for various $\epsilon$, 
	 $G_0 = 0.5 \delta_{-2} + 0.5 \delta_2$ and various sample sizes\label{tab.sim3a_appendix}}\end{center}
\end{table}

\bibliographystyle{agsm}
\bibliography{REB}
\end{document}